\documentclass[journal,draftcls,onecolumn,11pt]{IEEEtran}
\usepackage{subfigure,graphicx,psfig,amsmath,amssymb,eufrak,mathrsfs,epsf,epsfig, bbm, color, lipsum, framed, xcolor}
\usepackage{cite}
\usepackage[active]{srcltx}
\usepackage{verbatim}
\usepackage{isomath}
%\usepackage[framemethod=tikz]{mdframed}
%\usepackage{unicode-math}
% *** GRAPHICS RELATED PACKAGES ***
\DeclareMathAlphabet{\mathbfit}{OML}{cmm}{b}{it}
% *** GRAPHICS RELATED PACKAGES ***

\ifCLASSINFOpdf
  % \usepackage[pdftex]{graphicx}
  % declare the path(s) where your graphic files are
  % \graphicspath{{../pdf/}{../jpeg/}}
  % and their extensions so you won't have to specify these with
  % every instance of \includegraphics
  % \DeclareGraphicsExtensions{.pdf,.jpeg,.png}
\else
  % or other class option (dvipsone, dvipdf, if not using dvips). graphicx
  % will default to the driver specified in the system graphics.cfg if no
  % driver is specified.
  % \usepackage[dvips]{graphicx}
  % declare the path(s) where your graphic files are
  % \graphicspath{{../eps/}}
  % and their extensions so you won't have to specify these with
  % every instance of \includegraphics
  % \DeclareGraphicsExtensions{.eps}
\fi
\colorlet{shadecolor}{blue!10}
\setlength{\FrameSep}{0pt}
\makeatletter
\def\blfootnote{\xdef\@thefnmark{}\@footnotetext}
\makeatother

\begin{document}
\newtheorem{ach}{Achievability}
\newtheorem{con}{Converse}
\newtheorem{definition}{Definition}%[section]
\newtheorem{thm}{Theorem}%[section]
\newtheorem{lemma}{Lemma}%[section]
\newtheorem{example}{Example}
\newtheorem{cor}{Corollary}%[section]
\newtheorem{prop}{Proposition}%[section]
\newtheorem{conjecture}{Conjecture}%[section]
\newtheorem{remark}{Remark}%[section]
\newtheorem{compare}{Comparison}
\newtheorem{claim}{Claim}
\newtheorem{baseline}{Baseline}
% can use \linebreak \\ within to get better formatting as desired
\title{Two-way Function Computation}
\author{\IEEEauthorblockN{Seiyun Shin},~\IEEEmembership{Student Member,~IEEE} and \IEEEauthorblockN{Changho Suh},~\IEEEmembership{Member,~IEEE}
\thanks{This paper was presented in part at the Proceedings of Allerton Conference on Communication, Control, and Computing in $2017$~\cite{Shin17} and $2014$~\cite{Shin14}.}
\thanks{S. Shin and C. Suh are with the School of Electrical Engineering at Korea Advanced Institute of Science and Technology, South Korea (Email: \{seiyun.shin, chsuh\}@kaist.ac.kr).}%\let\thefootnote\relax\footnote{
}
\maketitle
\begin{abstract}
    We explore the role of interaction for the problem of reliable computation over two-way multicast networks. Specifically we consider a four-node network in which two nodes wish to compute a modulo-sum of two independent Bernoulli sources generated from the other two, and a similar task is done in the other direction. The main contribution of this work lies in the characterization of the computation capacity region for a deterministic model of the network via a novel transmission scheme. One consequence of this result is that, not only we can get an interaction gain over the one-way non-feedback computation capacities, but also we can sometimes get all the way to \emph{perfect-feedback computation capacities simultaneously in both directions.} This result draws a parallel with the recent result developed in the context of two-way interference channels.
\end{abstract}
\begin{IEEEkeywords}
Computation capacity, interaction, network decomposition, perfect-feedback, two-way function multicast channel
\end{IEEEkeywords}
\IEEEpeerreviewmaketitle

\section{Introduction} \label{sec:intro}
	The inherent two-way nature of communication links provides an opportunity to enable \emph{interaction} among nodes. It allows the nodes to efficiently exchange their messages by adapting their transmitted signals to the past received signals that can be fed back through backward communication links. This problem was first studied by Shannon in~\cite{Shannon61}. However, we are still lacking in our understanding of how to treat two-way information exchanges, and the underlying difficulty has impeded progress on this field over the past few decades.

    Since interaction is enabled through the use of \emph{feedback}, feedback is a more basic research topic that needs to be understood beforehand. The history of feedback traces back to Shannon who showed that feedback has no bearing on capacity for memoryless point-to-point channels~\cite{Shannon56}. Subsequent work demonstrated that feedback provides a gain for point-to-point channels with memory~\cite{Cover89, Kim06} as well as for many multi-user channels~\cite{Gaarder75, Ozarow84, Ozarow845}. For many scenarios, however, capacity improvements due to feedback are rather modest.

    On the contrary, one notable result in~\cite{Suh11} has changed the traditional viewpoint on the role of feedback. It is shown in~\cite{Suh11} that feedback offers more significant capacity gains for the Gaussian interference channel. Subsequent works~\cite{Suh12.1, Suh16, Suh17} show more promise on the use of feedback. In particular, \cite{Suh17} demonstrates a very interesting result: Not only feedback can yield a net increase in capacity, but also we can sometimes get \emph{perfect-feedback capacities} simultaneously in both directions.

    We seek to examine the role of feedback for more general scenarios in which nodes now intend to compute \emph{functions} of the raw messages rather than the messages themselves. These general settings include many realistic scenarios such as sensor networks~\cite{Giridhar05} and cloud computing scenarios~\cite{Dimakis10, Dimakis11}. For an idealistic scenario where feedback links are perfect with infinite capacities and are given for free, Suh-Gastpar~\cite{Suh13} have shown that feedback provides a significant gain also for computation. However, the result in~\cite{Suh13} assumes a dedicated infinite-capacity feedback link as in~\cite{Suh11}. As an effort to explore a net gain that reflects feedback cost, \cite{Shin14} investigated a two-way setting of the function multicast channel considered in~\cite{Suh13} where two nodes wish to compute a linear function (modulo-sum) of the two Bernoulli sources generated from the other two nodes. The two-way setting includes a backward computation demand as well, thus well capturing feedback cost. A scheme is proposed to demonstrate that a net interaction gain can occur also in the computation setting. However, the maximal interaction gain is not fully characterized due to a gap between the lower and upper bounds. In particular, whether or not one can get all the way to perfect-feedback computation capacities in both directions (as in the two-way interference channel~\cite{Suh17}) has been unanswered.

    In this work, we characterize the computation capacity region of the two-way function multicast channel via a new capacity-achieving scheme. In particular, we consider a deterministic model~\cite{Avestimehr11} which well captures key properties of the wireless Gaussian channel. As a result, we answer the above question positively. Specifically, we demonstrate that for some channel regimes (to be detailed later; see Corollary $1$), the new scheme simultaneously achieves the perfect-feedback computation capacities in both directions. As in the two-way interference channel~\cite{Suh17}, this occurs even when feedback offers gains in both directions and thus feedback w.r.t. one direction must compete with the traffic in the other direction.

    Our achievability builds upon the scheme in~\cite{Suh17} where feedback allows the exploitation of effectively future information as side information via retrospective decoding (to be detailed later; see Remark $2$). A key distinction relative to~\cite{Suh17} is that in our computation setting, the retrospective decoding occurs in a \emph{nested manner} for some channel regimes; this will be detailed when describing our achievability. We also employ network decomposition in~\cite{Suh12} for ease of achievability proof.

\section{Model} \label{sec:model}
\begin{figure*}
    \centering
    \includegraphics[scale=0.47]{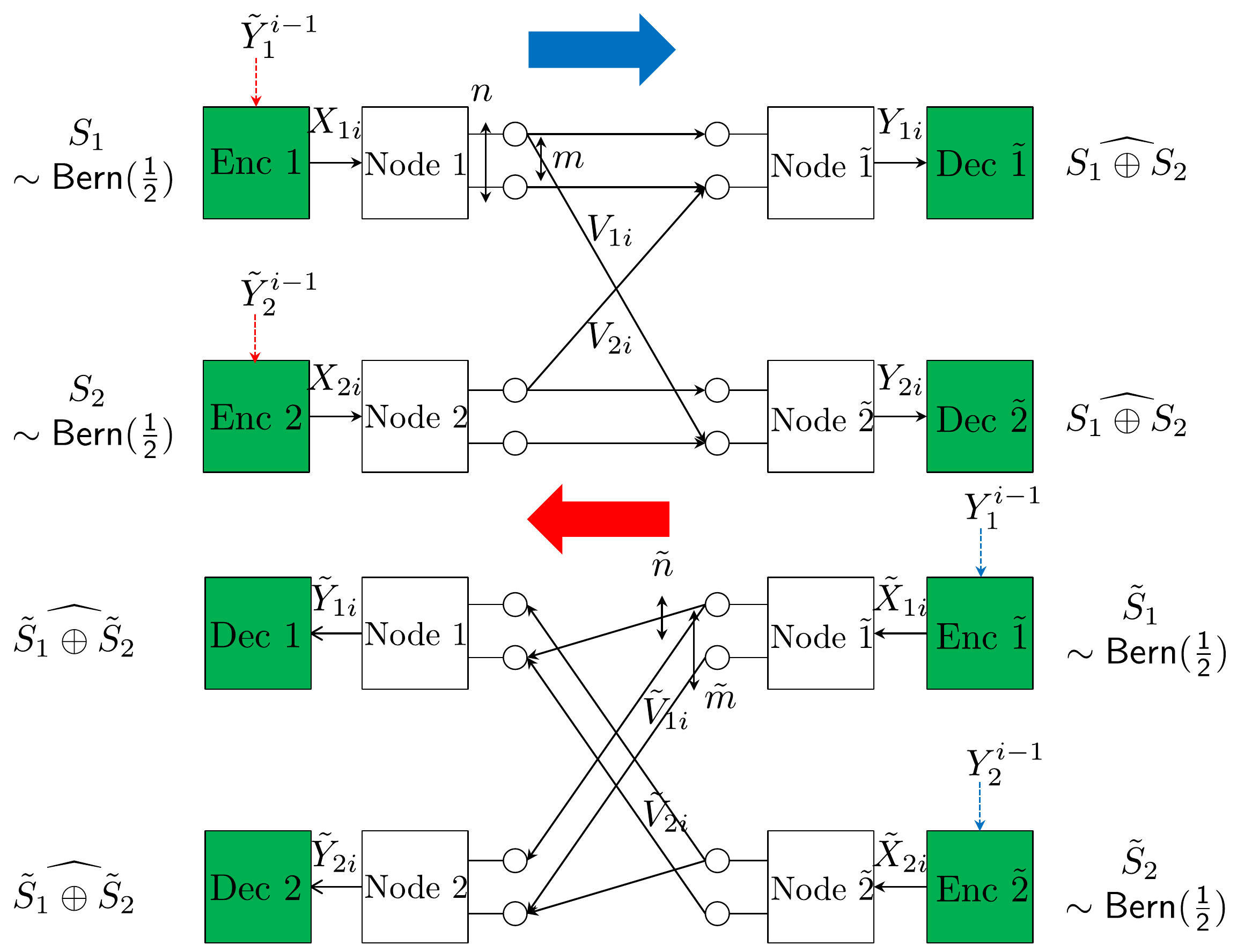}
    \caption{Four-node ADT deterministic network.}
    \end{figure*}
    Consider a four-node Avestimehr-Diggavi-Tse (ADT) deterministic network as illustrated in Fig. $1.$ This network is a full-duplex bidirectional system in which all nodes are able to transmit and receive signals simultaneously. Our model consists of forward and backward channels which are assumed to be orthogonal. For simplicity, we focus on a setting in which both forward and backward channels are symmetric but not necessarily the same. In the forward channel, $n$ and $m$ indicate the number of signal bit levels (or resource levels) for direct and cross links respectively. The corresponding values for the backward channel are denoted by $(\tilde{n}, \tilde{m})$.

    With $N$ uses of the network, node $k\ (k=1, 2)$ wishes to transmit its own message $S_k^K$, while node $\tilde{k}\ (\tilde{k}=\tilde{1}, \tilde{2})$ wishes to transmit its own message $\tilde{S}_k^{\tilde{K}}.$ We assume that $(S_1^K, S_2^K, \tilde{S}_1^{\tilde{K}}, \tilde{S}_2^{\tilde{K}})$ are independent and identically distributed according to $\sf Bern(\frac{1}{2})$. Here we use shorthand notation to indicate the sequence up to $K$ (or $\tilde{K}$), e.g., $S_1^K := (S_{11}, \dots, S_{1K})$. Let $X_k \in \mathbb{F}_2^{q}$ be an encoded signal of node $k$ where $q=\max(m,n)$ and $V_k \in \mathbb{F}_2^{m}$ be part of $X_k$ visible to node $\tilde{j}\ (\neq \tilde{k})$. Similarly let $\tilde{X}_k \in \mathbb{F}_2^{\tilde{q}}$ be an encoded signal of node $\tilde{k}$ where $\tilde{q}=\max(\tilde{m}, \tilde{n})$ and $\tilde{V}_k$ be part of $\tilde{X}_k$ visible to node $j\ (\neq k).$ The signals received at node $k$ and $\tilde{k}$ are then given by
    \begin{align}
    Y_1=& \mathbf{G}^{q-n}X_1\oplus \mathbf{G}^{q-m}X_2,\ Y_2= \mathbf{G}^{q-m}X_1\oplus \mathbf{G}^{q-n}X_2, \\
    \tilde{Y_1}=& \tilde{\mathbf{G}}^{\tilde{q}-\tilde{n}}\tilde{X_1}\oplus\tilde{\mathbf{G}}^{\tilde{q}-\tilde{m}}\tilde{X_2},\ \tilde{Y_2}= \tilde{\mathbf{G}}^{\tilde{q}-\tilde{m}}\tilde{X_1}\oplus\tilde{\mathbf{G}}^{\tilde{q}-\tilde{n}}\tilde{X_2},
    \end{align}
    where $\mathbf{G}$ and $\tilde{\mathbf{G}}$ are shift matrices and operations are performed in $\mathbb{F}_2$: $[\mathbf{G}]_{ij}=\mathbf{1}\left \{i=j+1\right \}\ (1 \leq i, j \leq q),$ $[\tilde{\mathbf{G}}]_{ij}=\mathbf{1}\left \{i=j+1\right \}\ (1 \leq i, j \leq \tilde{q}).$

    The encoded signal $X_{ki}$ of node $k$ at time $i$ is a function of its own message and past received signals: $X_{ki}=f_{ki}(S_1^K,\tilde{Y}_k^{i-1})$. We define $\tilde{Y}_k^{i-1} :=
    \{\tilde{Y}_{kt}\}_{t=1}^{i-1}$ where $\tilde{Y}_{kt}$ denotes node $k$'s received signal at time $t$. Similarly the encoded signal $\tilde{X}_{ki}$ of node $\tilde{k}$ at time $i$ is a function of its own message and past received sequences: $\tilde{X}_{ki}=\tilde{f}_{ki}(\tilde{S}_k^{\tilde{K}}, Y_k^{i-1}).$

    From the received signal $Y_k^N$, node $\tilde{k}$ wishes to compute modulo-$2$ sums of $S_1^K$ and $S_2^K$ (i.e., $\{S_{1i}\oplus S_{2i}\}_{i=1}^{K}$). Similarly node $k$ wishes to compute $\{\tilde{S}_{1j}\oplus \tilde{S}_{2j}\}_{j=1}^{\tilde{K}}$ from its received signals $\tilde{Y}_k^N.$ We say that a computation rate pair $(R, \tilde{R})$ is achievable if there exists a family of codebooks and encoder/decoder functions such that the decoding error probabilities go to zero as the code length $N$ tends to infinity. Here $R:=\frac{K}{N}$ and $\tilde{R}:=\frac{\tilde{K}}{N}.$ The capacity region $\mathcal{C}$ is the closure of the set of achievable rate pairs.

\section{Main Results} \label{sec:main}
    \begin{thm}[Two-way Computation Capacity]
    The computation capacity region $\mathcal{C}$ is the set of $(R, \tilde{R})$ such that
    \begin{align}
    &R \leq C_{\sf pf}, \\
    &\tilde{R} \leq \tilde{C}_{\sf pf},\\
    &R + \tilde{R} \leq m+\tilde{m}, \\
    &R + \tilde{R} \leq n+\tilde{n},
    \end{align}
    where $C_{\sf pf}$ and $\tilde{C}_{\sf pf}$ indicate the perfect-feedback computation capacities in the forward and backward channels respectively (see $(9)$ and $(10)$ in Baseline $2$ for detailed formulae).
    \end{thm}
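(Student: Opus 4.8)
The plan is to prove the claimed region by establishing a matching converse (every achievable $(R,\tilde R)$ obeys the four inequalities) together with achievability (every $(R,\tilde R)$ in the resulting polygon is achievable). For the converse bounds $R\le C_{\sf pf}$ and $\tilde R\le\tilde C_{\sf pf}$, I would argue that genuine perfect feedback dominates the interaction available through the two-way channel. In a two-way code, forward transmitter $k$ forms $X_{ki}$ from $(S_k^K,\tilde Y_k^{i-1})$ only, and $\tilde Y_k^{i-1}$ is a deterministic function of the (independent) backward messages and the past forward outputs $Y_1^{i-2},Y_2^{i-2}$. Hence on the perfect-feedback forward channel of Baseline~2 --- backward demand deleted, forward transmitters handed $(Y_1^{i-1},Y_2^{i-1})$ --- the same code runs if the two forward transmitters share a dither consisting of a fake backward message pair, locally simulate the backward encoders and the backward deterministic channel, and regenerate $\tilde Y_k^{i-1}$. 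This leaves the law of $(X_1^N,X_2^N,Y_1^N,Y_2^N)$, and hence the forward error probability, unchanged; the dither may be fixed without loss, so Fano's inequality gives $R\le C_{\sf pf}$ with $C_{\sf pf}$ as in formula~$(9)$, and $\tilde R\le\tilde C_{\sf pf}$ is symmetric.

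For $R+\tilde R\le m+\tilde m$ I would use a genie-aided cut-type argument with decoders at nodes $\tilde1$ and $1$: handing $S_1^K$ to node $\tilde1$ and $\tilde S_1^{\tilde K}$ to node $1$ turns the two modulo-sum demands into the decoding of $S_2^K$ from $Y_1^N$ and of $\tilde S_2^{\tilde K}$ from $\tilde Y_1^N$. Then $N(R+\tilde R)=H(S_2^K,\tilde S_2^{\tilde K}\mid S_1^K,\tilde S_1^{\tilde K})\le I(S_2^K,\tilde S_2^{\tilde K};Y_1^N,\tilde Y_1^N\mid S_1^K,\tilde S_1^{\tilde K})+2N\epsilon_N$, and expanding the mutual information by the chain rule over the pairs $(Y_{1i},\tilde Y_{1i})$ while conditioning on $(S_1^K,\tilde S_1^{\tilde K},Y_1^{i-1},\tilde Y_1^{i-1})$ makes $X_{1i}$ and $\tilde X_{1i}$ deterministic, so the per-use terms contribute at most $m$ and $\tilde m$; summing over $i$ closes the bound. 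The bound $R+\tilde R\le n+\tilde n$ follows from the same template with decoders at nodes $\tilde1$ and $2$ and the complementary genie ($S_2^K\to\tilde1$, $\tilde S_1^{\tilde K}\to2$): the conditioning, now involving $\tilde Y_2^{i-1}$ in place of $\tilde Y_1^{i-1}$, makes $X_{2i}$ and $\tilde X_{1i}$ deterministic and isolates the direct-link images $\mathbf{G}^{q-n}X_{1i}$ and $\tilde{\mathbf{G}}^{\tilde q-\tilde n}\tilde X_{2i}$, of per-use entropy at most $n$ and $\tilde n$. Choosing the right receiver pair and conditioning set so that the cross/interaction terms collapse is the one delicate point of the converse.

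For achievability, since the region is a polygon closed under decreasing either rate, it suffices to achieve its corner points and then time-share. The extreme points $(C_{\sf pf},0)$ and $(0,\tilde C_{\sf pf})$ already require a real argument: with only a forward demand the backward channel is the sole feedback resource, so one must show that its $(\tilde n,\tilde m)$ levels, spent entirely on feedback, suffice to attain the genuine perfect-feedback value $C_{\sf pf}$. I would obtain this through the retrospective-decoding scheme of~\cite{Suh17}: nodes first transmit fresh source symbols, the backward channel conveys the overheard linear mixtures, later blocks retransmit and refine them, and decoding proceeds \emph{backward} in the block index using effectively-future received signals as side information (cf.\ Remark~$2$). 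For the remaining \emph{interaction} corners --- those with both coordinates strictly positive, lying on a facet $R+\tilde R=m+\tilde m$ or $R+\tilde R=n+\tilde n$ --- I would build a joint scheme in which each direction's transmitted symbols simultaneously play the role of the other direction's feedback; to keep the construction uniform across the several regimes (the orderings of $n,m,\tilde n,\tilde m$), I would first apply the network decomposition of~\cite{Suh12}, handle each elementary sub-network type, and reassemble.

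The main obstacle, and the genuinely new feature relative to~\cite{Suh17}, is that here the two retrospective decodings --- of the forward modulo-sum and of the backward modulo-sum --- must be carried out in a \emph{nested} fashion for certain regimes: resolving a given block of the forward sum first requires a block of the backward sum, which itself consumed forward symbols as side information, and so on down the block index. The parts I expect to be hardest are (i) arranging the encoder alignment so that the two sources stay \emph{phase-locked} and every receiver can always form $S_{1i}\oplus S_{2i}$ (resp.\ $\tilde S_{1j}\oplus\tilde S_{2j}$) rather than the individual symbols; (ii) organizing the induction so that the nested recursion is well-founded, with a clean base case at the boundary blocks where fresh symbols and padding terminate it; and (iii) the bit-counting that verifies the achieved per-channel-use rates land exactly on the claimed corners with error probability vanishing as $N\to\infty$.
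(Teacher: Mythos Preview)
Your proposal is essentially correct and follows the same architecture as the paper: cut-set plus perfect-feedback bounds for the converse, and retrospective decoding combined with network decomposition for achievability, with nested retrospective decoding appearing in the regimes you identify.

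The one genuine methodological difference is in the proof of $R\le C_{\sf pf}$. You reduce to Baseline~2 by a simulation argument (fix a dither for $(\tilde S_1,\tilde S_2)$, regenerate $\tilde Y_k^{i-1}$ from $(Y_1^{i-2},Y_2^{i-2})$ using perfect feedback). The paper instead proves the bound directly for the two-way model via a ``three-copy'' manipulation: it upper-bounds $3R$ by combining three Fano terms, then uses a Markov chain $S_1^K-(Y_1^N,Y_2^N,\tilde S_1^{\tilde K},\tilde S_2^{\tilde K})-S_2^K$ (their Lemma~2, which in turn rests on the invertibility $(Y_1,Y_2)\mapsto(X_1,X_2)$ for $\alpha\neq1$) to collapse the residual terms. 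Your route is cleaner and model-agnostic once the perfect-feedback capacity is known, but it hinges on Baseline~2's feedback model giving each transmitter access to \emph{both} $Y_1^{i-1}$ and $Y_2^{i-1}$; you should state and justify that assumption. The paper's route is self-contained but requires the structural Claim~1, so it does not extend verbatim to channels where $(Y_1,Y_2)$ fails to determine $(X_1,X_2)$.

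One imprecision to fix: $(C_{\sf pf},0)$ and $(0,\tilde C_{\sf pf})$ are not always corner points of the region---when, say, $C_{\sf pf}>m+\tilde m$ (which happens for $\alpha<2/3$ with a weak backward channel) the axis corner is $(m+\tilde m,0)$ instead. The paper handles this by a case split on which facets are active (their (R1)--(R4) analysis and the sub-cases (I)--(IV) within). Your scheme sketch still covers these easier corners, but your enumeration of extreme points should be made parameter-dependent rather than asserted uniformly.
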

    \begin{proof}
    See Sections~\ref{sec:ach} and~\ref{sec:conv} for the achievability and converse proofs respectively.
    \end{proof}

    For comparison to our result, we state two baselines: $(1)$ The capacity region for the non-interactive scenario in which there is no interaction among the signals arriving from different nodes; and $(2)$ the capacity for the perfect-feedback scenario in which feedback is given for free to aid computations in both directions.
    \begin{baseline}[Non-interaction Computation Capacity~\cite{Suh12}]
    Let $\alpha := \frac{m}{n}$ and $\tilde{\alpha} := \frac{\tilde{m}}{\tilde{n}}$. The computation capacity region $\mathcal{C}_{\sf no}$ for the non-interactive scenario is the set of $(R, \tilde{R})$ such that $R \leq C_{\sf no}$ and $\tilde{R} \leq \tilde{C}_{\sf no}$ where
    \begin{align}
    C_{\sf no}=
    \left\{
        \begin{array}{ll}
            \min \left \{m, \frac{2}{3}n\right \}, ~ & \text{$\alpha < 1$,} \\
            \min \left \{n, \frac{2}{3}m\right \}, ~ & \text{$\alpha > 1$,} \\
            n, ~ & \text{$\alpha = 1$,}
        \end{array}
    \right.\\
        \tilde{C}_{\sf no}=
    \left\{
        \begin{array}{ll}
            \min \left \{\tilde{m}, \frac{2}{3}\tilde{n}\right \}, ~ & \text{$\tilde{\alpha} < 1$,} \\
            \min \left \{\tilde{n}, \frac{2}{3}\tilde{m}\right \}, ~ & \text{$\tilde{\alpha} > 1$,} \\
            \tilde{n}, ~ & \text{$\tilde{\alpha} = 1$.}
        \end{array}
    \right.
    \end{align}
    Here $C_{\sf no}$ and $\tilde{C}_{\sf no}$ denote the non-feedback computation capacities of forward and backward channels respectively.
    \end{baseline}
    \begin{baseline}[Perfect-feedback Computation Capacity~\cite{Suh13}]
    The computation capacity region $\mathcal{C}_{\sf pf}$ for the perfect-feedback scenario is the set of $(R, \tilde{R})$ such that $R \leq C_{\sf pf}$ and $\tilde{R} \leq \tilde{C}_{\sf pf}$ where
    \begin{align}
        C_{\sf pf}=
    \left\{
        \begin{array}{ll}
            \frac{2}{3}n, ~ & \text{$\alpha < 1$,} \\
            \frac{2}{3}m, ~ & \text{$\alpha > 1$,} \\
            n, ~ & \text{$\alpha = 1$,}
        \end{array}
    \right. \\
        \tilde{C}_{\sf pf} =
    \left\{
        \begin{array}{ll}
            \frac{2}{3}\tilde{n}, ~ & \text{$\tilde{\alpha} < 1$,} \\
            \frac{2}{3}\tilde{m}, ~ & \text{$\tilde{\alpha} >1$,} \\
            \tilde{n}, ~ & \text{$\tilde{\alpha} = 1$.}
        \end{array}
    \right.
    \end{align}
    \end{baseline}

    With Theorem $1$ and Baseline $1,$ one can readily see that feedback offers a gain (in terms of capacity region) as long as $(\alpha \notin [ \frac{2}{3}, \frac{3}{2} ], \tilde{\alpha} \notin [ \frac{2}{3}, \frac{3}{2} ]).$ A careful inspection reveals that there are channel regimes in which one can enhance $C_{\sf no}\ ($or $\tilde{C}_{\sf no})$ without sacrificing the other counterpart. This implies a net interaction gain.
    \begin{definition}[Interaction Gain]
    We say that an interaction gain occurs if one can achieve $(R,\tilde{R})=(C_{\sf no}+\delta, \tilde{C}_{\sf no}+\tilde{\delta})$ for some $\delta \geq 0$ and $\tilde{\delta} \geq 0$ such that $\max(\delta, \tilde{\delta}) >0.$
    \end{definition}
    Our earlier work in~\cite{Shin14} has demonstrated that an interaction gain occurs in the light blue regime in Fig. $2.$
    %A tedious yet straightforward calculation with this definition leads us to identify channel regimes which exhibit an interaction gain. See the regimes marked in light blue in Fig. $2.$

    We also find the regimes in which feedback does increase capacity but interaction cannot provide such increase, meaning that whenever $\delta >0,$ $\tilde{\delta}$ mush be $-\delta$ and vice versa. The regimes are $(\alpha < \frac{2}{3}, \tilde{\alpha} < \frac{2}{3})$ and $(\alpha >\frac{3}{2}, \tilde{\alpha}>\frac{3}{2}).$ One can readily check that this follows from the cut-set bounds $(5)$ and $(6).$
    \begin{figure*}
    \centering
    \includegraphics[scale=0.72]{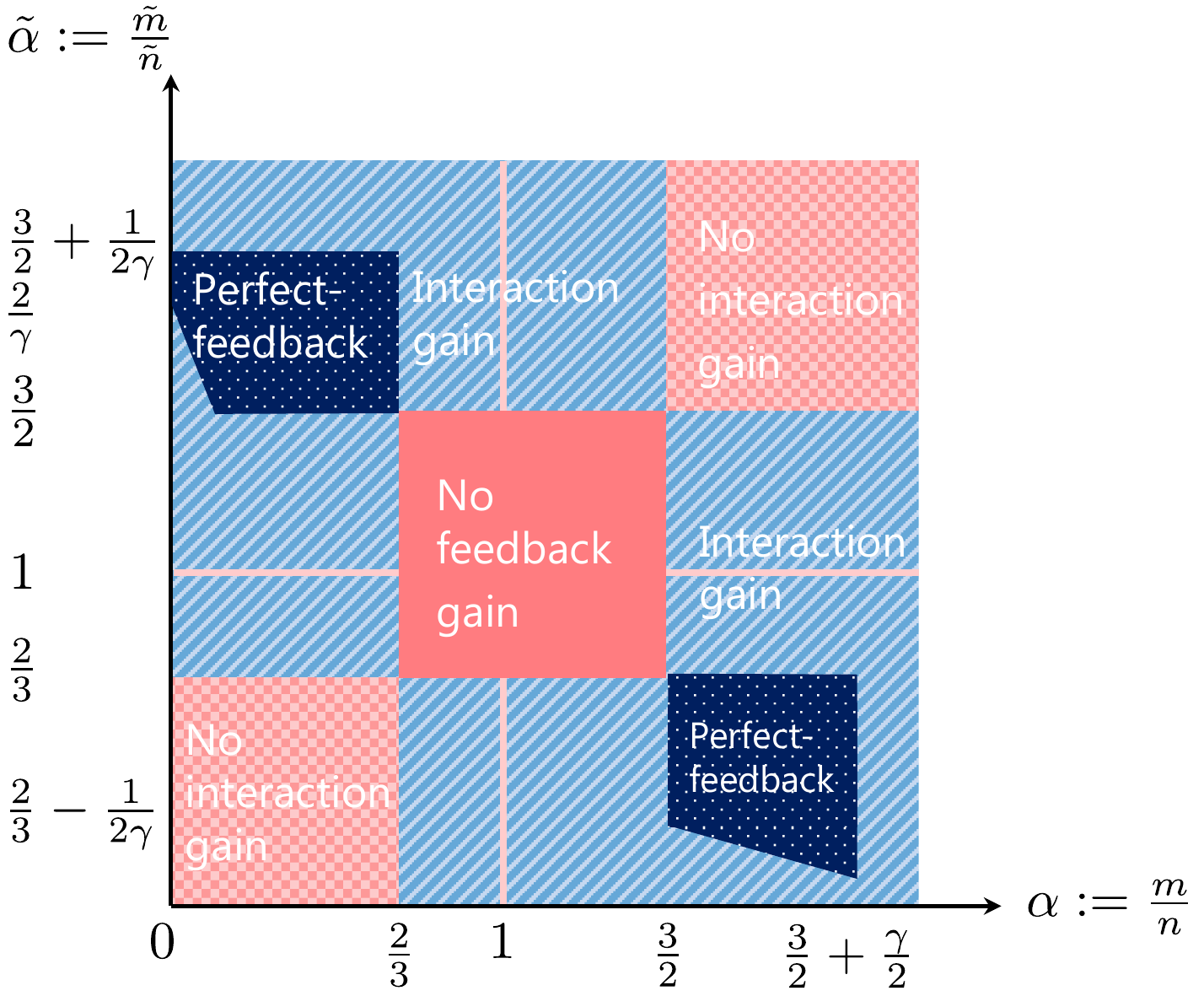}
    \caption{Gain-vs-nogain picture: The plot is over two key parameters: $\alpha$ and $\tilde{\alpha},$ where $\alpha$ is the ratio of the interference-to-noise ratio (in dB) to the signal-to-noise ratio (in dB) of the forward channel and $\tilde{\alpha}$ is the corresponding quantity of the backward channel. The parameter $\gamma$ is the ratio of the backward signal-to-noise ratio (in dB) to the forward signal-to-noise ratio (in dB) and is fixed to be a value greater than or equal to $1$ in the plot. Dark pink/blank region: feedback does not increase capacity in either direction and thus interaction is not useful. Light pink/check: feedback does increase capacity but interaction cannot provide such increase. Light blue/slash: feedback can be offered through interaction and there is a net interaction gain. Dark blue/dots: interaction is so efficient that one can achieve perfect-feedback capacities simultaneously in both directions.}
    \end{figure*}

    \textbf{Achieving perfect-feedback capacities:}
    It is noteworthy to mention that there exist channel regimes in which both $\delta$ and $\tilde{\delta}$ can be strictly positive. This implies that for these regimes, not only feedback does not sacrifice one transmission for the other, but it can actually improve both simultaneously.
    More interestingly, as in the two-way interference channel~\cite{Suh17}, the gains $\delta$ and $\tilde{\delta}$ can reach up to the maximal feedback gains, reflected in $C_{\sf pf}-C_{\sf no}$ and $\tilde{C}_{\sf pf}-\tilde{C}_{\sf no}$ respectively. The dark blue/dots regimes in Fig. $2$ indicate such channel regimes when $\gamma\ (:=\frac{\tilde{n}}{n}) \geq 1.$ Note that such regimes depend on $\gamma.$ The amount of feedback that one can send is limited according to available resources, which is affected by the channel asymmetry parameter $\gamma.$

    The following corollary identifies channel regimes in which achieving perfect-feedback capacities in both directions is possible.
    \begin{cor}
    Consider a case in which feedback helps for both forward and backward channels: $C_{\sf pf} > C_{\sf no}$ and $\tilde{C}_{\sf pf}>\tilde{C}_{\sf no}.$ Under such a case, the channel regimes in which $\mathcal{C}=\mathcal{C}_{\sf pf}$ are as follows:
    \begin{align*}
            &\textbf{(I)}\ \alpha < \frac{2}{3},\ \tilde{\alpha} > \frac{3}{2},\ \bigg\{
            \begin{aligned}
            &C_{\sf pf}-C_{\sf no} \leq \tilde{m}-\tilde{C}_{\sf pf}, \\
            &\tilde{C}_{\sf pf}-\tilde{C}_{\sf no} \leq n-C_{\sf pf} \\
            \end{aligned}\bigg\}, \\
            &\textbf{(II)}\ \alpha > \frac{3}{2},\ \tilde{\alpha} < \frac{2}{3},\ \bigg\{
            \begin{aligned}
            &C_{\sf pf}-C_{\sf no} \leq \tilde{n}-\tilde{C}_{\sf pf}, \\
            &\tilde{C}_{\sf pf}-\tilde{C}_{\sf no} \leq m-C_{\sf pf}
            \end{aligned}\bigg\}.
    \end{align*}
    \end{cor}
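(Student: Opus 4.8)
The plan is to obtain Corollary~1 as a purely algebraic consequence of Theorem~1 together with Baselines~1 and~2; no new coding argument is required. The key observation is that the region $\mathcal{C}$ of Theorem~1 is exactly the rectangle $\mathcal{C}_{\sf pf}=\{(R,\tilde R):R\le C_{\sf pf},\ \tilde R\le\tilde C_{\sf pf}\}$ of Baseline~2 intersected with the two half-planes $(5)$ and $(6)$, namely $R+\tilde R\le m+\tilde m$ and $R+\tilde R\le n+\tilde n$. Hence $\mathcal{C}\subseteq\mathcal{C}_{\sf pf}$ always, and $\mathcal{C}=\mathcal{C}_{\sf pf}$ holds if and only if $\mathcal{C}_{\sf pf}$ itself already lies inside both half-planes. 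Since $R+\tilde R$ is linear, it attains its maximum over the rectangle $\mathcal{C}_{\sf pf}$ at the corner $(R,\tilde R)=(C_{\sf pf},\tilde C_{\sf pf})$, so this containment is equivalent to the single pair of scalar inequalities $C_{\sf pf}+\tilde C_{\sf pf}\le m+\tilde m$ and $C_{\sf pf}+\tilde C_{\sf pf}\le n+\tilde n$. Everything that follows is just the evaluation of these two inequalities in the relevant channel regimes.

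First I would use the hypothesis $C_{\sf pf}>C_{\sf no}$ and $\tilde C_{\sf pf}>\tilde C_{\sf no}$ to localize the regime. Comparing the piecewise formulae of Baseline~1 with those of Baseline~2 shows that $C_{\sf pf}>C_{\sf no}$ can hold only when $\alpha<\tfrac23$ or $\alpha>\tfrac32$ (for $\alpha\in[\tfrac23,\tfrac32]$ one has $C_{\sf no}=C_{\sf pf}$), and likewise $\tilde C_{\sf pf}>\tilde C_{\sf no}$ forces $\tilde\alpha<\tfrac23$ or $\tilde\alpha>\tfrac32$. Of the four resulting sign patterns, two are ruled out at once: if $\alpha<\tfrac23$ and $\tilde\alpha<\tfrac23$ then $C_{\sf pf}+\tilde C_{\sf pf}=\tfrac23n+\tfrac23\tilde n>m+\tilde m$ (using $m<\tfrac23n$ and $\tilde m<\tfrac23\tilde n$), so $(5)$ fails at the corner and $\mathcal{C}\subsetneq\mathcal{C}_{\sf pf}$; symmetrically, if $\alpha>\tfrac32$ and $\tilde\alpha>\tfrac32$ then $(6)$ fails. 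Only the ``crossed'' patterns \textbf{(I)} $\alpha<\tfrac23,\ \tilde\alpha>\tfrac32$ and \textbf{(II)} $\alpha>\tfrac32,\ \tilde\alpha<\tfrac23$ can give $\mathcal{C}=\mathcal{C}_{\sf pf}$.

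It then remains to spell out the two corner inequalities in these regimes. In regime \textbf{(I)}, Baselines~1 and~2 give $C_{\sf pf}=\tfrac23n$, $C_{\sf no}=\min\{m,\tfrac23n\}=m$, $\tilde C_{\sf pf}=\tfrac23\tilde m$, and $\tilde C_{\sf no}=\min\{\tilde n,\tfrac23\tilde m\}=\tilde n$. Substituting into $C_{\sf pf}+\tilde C_{\sf pf}\le m+\tilde m$ and rearranging gives $\tfrac23n-m\le\tilde m-\tfrac23\tilde m$, i.e.\ $C_{\sf pf}-C_{\sf no}\le\tilde m-\tilde C_{\sf pf}$; substituting into $C_{\sf pf}+\tilde C_{\sf pf}\le n+\tilde n$ gives $\tfrac23\tilde m-\tilde n\le n-\tfrac23n$, i.e.\ $\tilde C_{\sf pf}-\tilde C_{\sf no}\le n-C_{\sf pf}$. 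These are exactly the two conditions displayed under \textbf{(I)}. Regime \textbf{(II)} follows from the mirror substitution $C_{\sf pf}=\tfrac23m$, $C_{\sf no}=n$, $\tilde C_{\sf pf}=\tfrac23\tilde n$, $\tilde C_{\sf no}=\tilde m$, which turns the same two corner inequalities into the pair displayed under \textbf{(II)}.

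I do not expect a genuine obstacle, since the statement is a corollary whose content is bookkeeping on top of Theorem~1. The only two points needing care are (a) the reduction of ``rectangle $\subseteq$ two half-planes'' to a single corner-point check, which uses only that a linear functional on a box is maximized at a vertex, and (b) tracking which branch of the piecewise definitions of $C_{\sf no},\tilde C_{\sf no},C_{\sf pf},\tilde C_{\sf pf}$ is active together with the strict inequalities ($m<\tfrac23n$, $\tilde n<\tfrac23\tilde m$, and their analogues) that come with the defining conditions $\alpha<\tfrac23$ and $\tilde\alpha>\tfrac32$. Once these are in place the algebraic rearrangements are one-liners.
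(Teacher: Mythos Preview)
Your proposal is correct and follows essentially the same approach as the paper: both reduce the question to whether the sum-rate constraints $(5)$ and $(6)$ are inactive at the corner $(C_{\sf pf},\tilde C_{\sf pf})$, then substitute the Baseline~1 and~2 formulae in the relevant regimes to obtain the displayed conditions. Your version is in fact slightly more thorough than the paper's, which only verifies sufficiency in regime \textbf{(I)} and appeals to symmetry; you additionally argue necessity by explicitly ruling out the two ``same-sign'' regimes $(\alpha<\tfrac23,\tilde\alpha<\tfrac23)$ and $(\alpha>\tfrac32,\tilde\alpha>\tfrac32)$ where one of the sum-rate bounds must cut the rectangle.
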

    \begin{proof}
    See Appendix~\ref{app:cor1}.
    \end{proof}
    \begin{remark}[Why the Perfect-feedback Regimes?]
    The rationale behind achieving perfect-feedback capacities in both directions bears a resemblance to the one found in the two-way interference channel~\cite{Suh17}: Interaction enables full-utilization of available resources, whereas the dearth of interaction limits that of those. Below we elaborate on this for the considered regime in Corollary $1:$ $(\alpha \leq \frac{2}{3}, \tilde{\alpha} \geq \frac{3}{2}).$

    We first note that the total number of available resources for the forward and backward channels depend on $n$ and $\tilde{m}$ in this regime.
    In the non-interaction case, observe from Baseline $1$ that some resources are under-utilized; specifically one can interpret $n-C_{\sf no}$ and $\tilde{m}-\tilde{C}_{\sf no}$ as the remaining resource levels that can potentially be utilized to aid function computations.
    It turns out feedback can maximize resource utilization by filling up such resource holes under-utilized in the non-interactive case.
    Note that $C_{\sf pf}-C_{\sf no}$ represents the amount of feedback that needs to be sent for achieving $C_{\sf pf}.$ Hence, the condition $C_{\sf pf}-C_{\sf no} \leq \tilde{m}-\tilde{C}_{\sf pf}$ (similarly $\tilde{C}_{\sf pf}-\tilde{C}_{\sf no} \leq n-C_{\sf pf}$) in Corollary $1$ implies that as long as we have enough resource holes, we can get all the way to perfect-feedback capacity. We will later provide an intuition as to why feedback can do so while describing our achievability; see Remark $2$ in particular. $\square$
    \end{remark}

\section{Proof of Achievability} \label{sec:ach}
    Our achievability proof consists of three parts. We initially provide two achievable schemes for two toy examples in which the key ingredients of our achievability idea are well presented. Once the description of the two schemes is done, we will then outline the proof for generalization while leaving the detailed proof in Appendices~\ref{app:eg} and~\ref{app:Feedback3}.
    \begin{figure*}
    \centering
    \includegraphics[scale=0.48]{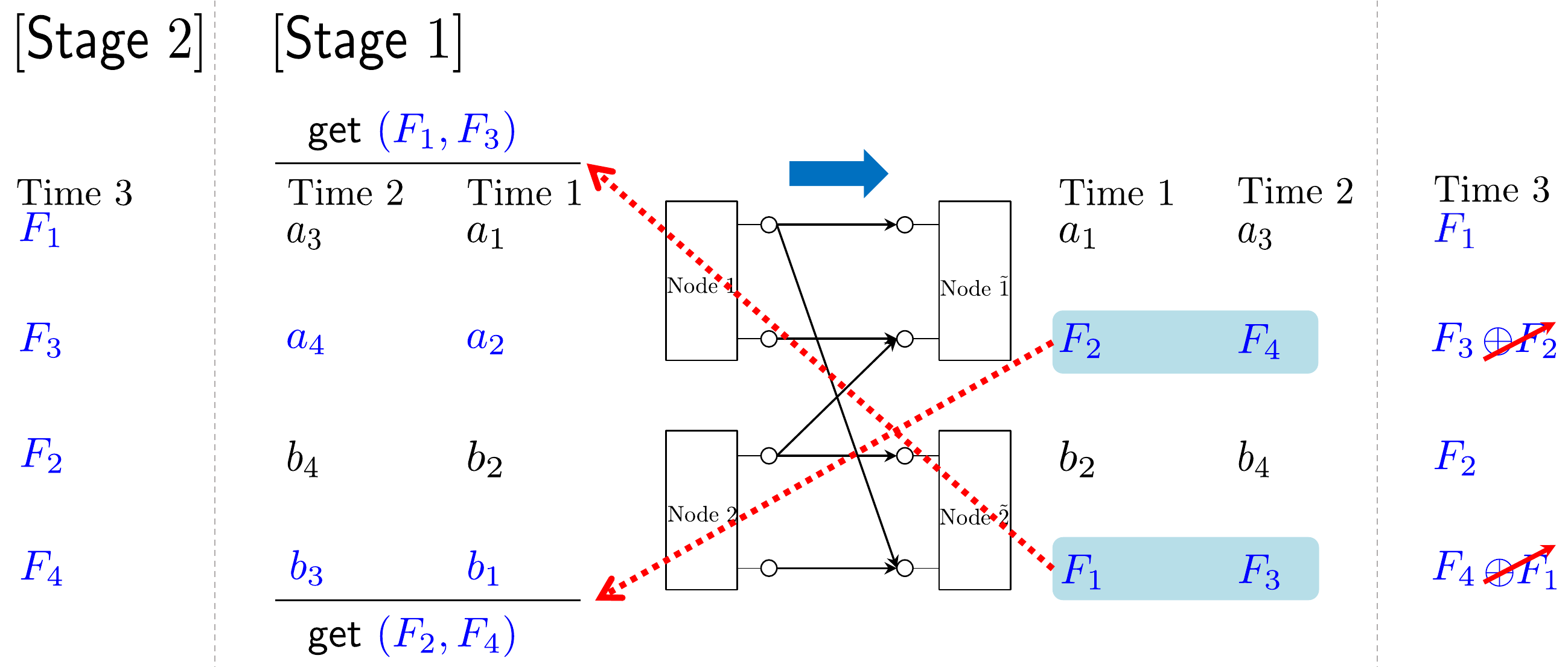}
    \caption{A perfect-feedback scheme for $(m,n)=(1,2)$ model.}
    \end{figure*}

\subsection{Example 1: $(m,n)=(1,2),\ (\tilde{m}, \tilde{n})=(2,1)$} \label{example1}
    First, we review the perfect-feedback scheme~\cite{Suh13}, which we will use as a baseline for comparison to our achievable scheme. It suffices to consider the case of $(m,n)=(1,2),$ as the other case of $(\tilde{m},\tilde{n})=(2,1)$ follows similarly by symmetry.

\subsubsection{Perfect-feedback strategy}
    The perfect-feedback scheme for $(m,n)=(1,2)$ consists of two stages; the first stage has two time slots; and the second stage has one time slot. See Fig. $3.$ Observe that the bottom level at each receiving node naturally forms a modulo-$2$ sum function, say $F_{\ell}:= a_{\ell} \oplus b_{\ell}$ where $a_{\ell}$ (or $b_{\ell}$) denotes a source symbol of node $1$ (or $2$). In the first stage, we send forward symbols at node $1$ and $2.$ At time $1,$ node $1$ sends $(a_1, a_2);$ and node $2$ sends $(b_2, b_1).$ Node $\tilde{1}$ then obtains $F_2\ (:=a_2\oplus b_2);$ and node $\tilde{2}$ obtains $F_1\ (:=a_1\oplus b_1).$ As in the first time slot, node $1$ and $2$ deliver $(a_3, a_4)$ and $(b_4, b_3)$ respectively at time $2.$ Then node $\tilde{1}$ and $\tilde{2}$ obtain $F_4$ and $F_3$ respectively. Note that until the end of time $2, (F_1, F_3)$ are not yet delivered to node $\tilde{1}.$ Similarly $(F_2, F_4)$ are missing at node $\tilde{2}.$

    Feedback can however accomplish the computation of these functions of interest. With feedback, each transmitting node can now obtain the desired functions which were obtained only at one receiving node. Exploiting a feedback link from node $\tilde{2}$ to node $1,$ node $1$ can obtain $(F_1, F_3).$ Similarly, node $2$ can obtain $(F_2, F_4)$ from node $\tilde{1}.$

    The strategy in Stage $2$ is to forward all of these fed-back functions at time 3. Node $\tilde{1}$ then receives $F_1$ cleanly at the top level. At the bottom level, it gets a mixture of the two desired functions: $F_3\oplus F_2.$ Note that $F_2$ in the mixture was already obtained at time $1.$ Hence, using $F_2,$ node $\tilde{1}$ can decode $F_3.$ Similarly, node $\tilde{2}$ can obtain $(F_2, F_4).$ In summary, node $\tilde{1}$ and $\tilde{2}$ can compute four modulo-$2$ sum functions during three time slots, thus achieving $R = \frac{4}{3}\ (=C_{\sf pf}).$

    In our model, however, feedback is provided in the limited fashion, as feedback signals are delivered only through the backward channel. There are two different types of transmissions for using the backward channel. The channel can be used $(1)$ for backward-message computation, or $(2)$ for sending feedback signals. Usually, unlike the perfect-feedback case, the channel use for one purpose limits that for the other, and this tension incurs a new challenge. We develop an achievable scheme that can completely resolve the tension, thus achieving the perfect-feedback performance.

\subsubsection{Achievability}
    Like the perfect-feedback case, our scheme has two stages. The first stage has $2L$ time slots; and the second stage has $L$ time slots. During the first stage, the number $4L$ and $4(L-1)$ of fresh symbols are transmitted through the forward and backward channels, respectively. No fresh symbols are transmitted in the second stage, but some refinements are performed (to be detailed later). In this example, we claim that the following rate pair is achievable: $(R, \tilde{R})= (\frac{4L}{3L}, \frac{4(L-1)}{3L})=(\frac{4}{3}, \frac{4L-4}{3L}).$ In other words, during the total $3L$ time slots, our scheme ensures $4L$ and $4L-4$ forward and backward-message computations. As $L \rightarrow \infty,$ we obtain the desired result: $(R,\tilde{R})\rightarrow (\frac{4}{3}, \frac{4}{3}) = (C_{\sf pf}, \tilde{C}_{\sf pf}).$
    \begin{figure*}
    \centering
    \includegraphics[scale=0.38]{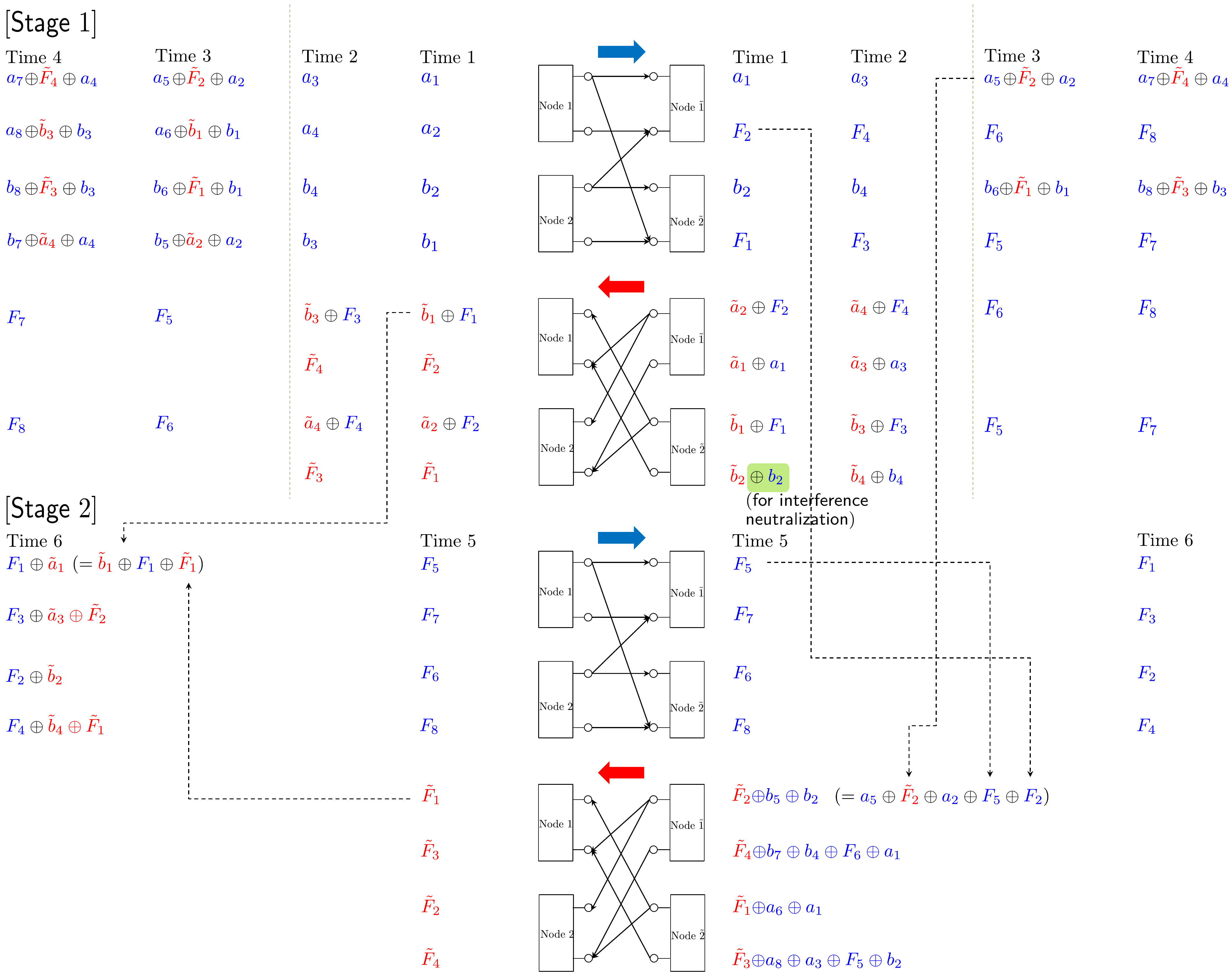}
    \caption{An achievable scheme for $(m,n)=(1,2),\ (\tilde{m}, \tilde{n})=(2,1),$ and $L=2.$}
    \end{figure*}

    \textbf{Stage} $\mathbf{1}$: The purpose of this stage is to compute $2L$ and $2(L-1)$ modulo-$2$ sum functions on the bottom level of forward and backward channels, while relaying feedback signals (as in the perfect feedback case) on the top level. To this end, each node superimposes fresh symbols and feedback symbols. Details are given below. Also see Fig. 4.

    \textit{Time 1 \& 2:} Node $1$ sends $(a_1, a_2);$ and node $2$ sends $(b_2, b_1).$ Node $\tilde{1}$ and $\tilde{2}$ then receive $(a_1, F_2)$ and $(b_2, F_1)$ respectively. Observe that $F_1$ and $F_2$ have not yet been delivered to node $\tilde{1}$ and $\tilde{2}$ respectively. In an attempt to satisfy these demands, the perfect-feedback strategy is to feed back $F_2$ from node $\tilde{1}$ to node $2,$ and to feed back $F_1$ from node $\tilde{2}$ to node $1.$

    A similar transmission strategy is employed in our backward channel. Node $\tilde{1}$ and $\tilde{2}$ wish to transmit fresh backward symbols:  $(\tilde{a}_2, \tilde{a}_1)$ and $(\tilde{b}_1, \tilde{b}_2)$ so that node $1$ and $2$ can compute $(\tilde{b}_1, \tilde{F}_2)$ and $(\tilde{a}_2, \tilde{F}_1).$ However, feedback transmission over the backward channel must be accomplished in order to achieve forward perfect-feedback capacity. Recall that in the perfect-feedback strategy, the received signals $F_2$ and $F_1$ are desired to be fed back. One way to accomplish both tasks is to superimpose feedback signals onto fresh symbols. Specifically node $\tilde{1}$ and $\tilde{2}$ encode $\tilde{a}_2\oplus F_2$ and $\tilde{b}_1\oplus F_1$ on the top level respectively. Then, a challenge arises if these signals are transmitted without additional encoding procedure. Observe that node $1$ would receive $\tilde{F}_2\oplus F_2,$ while the original goal is to compute the backward functions solely on the bottom level. In other words, the feedback signal $F_2$ causes interference to node $1,$ because there is no way to cancel out this signal.

    Interestingly, the idea of \emph{interference neutralization}~\cite{Mohajer11} can play a role. On the bottom level, node $\tilde{2}$ sending the mixture of $\tilde{b}_2$ (fresh symbol) and $b_2$ (received on the top level) enables the interference to be neutralized. This allows node $1$ to obtain $\tilde{F}_2\oplus a_2,$ which in turn leads node $1$ to obtain $\tilde{F}_2$ by canceling $a_2$ (own symbol).
    Similarly node $\tilde{1}$ delivers $(\tilde{a}_2\oplus F_2, \tilde{a}_1\oplus a_1).$ As a result, node $1$ and $2$ can obtain $(\tilde{b}_1\oplus F_1, \tilde{F}_2)$ and $(\tilde{a}_2\oplus F_2, \tilde{F}_1)$ respectively.

    At time $2,$ we repeat this w.r.t. new symbols. As a result, node $\tilde{1}$ and $\tilde{2}$ receive $(a_3, F_4)$ and $(b_4, F_3)$ respectively, while node $1$ and $2$ receive $(\tilde{b}_3\oplus F_3, \textcolor{black}{\tilde{F}_4}\oplus a_4)$ and $(\tilde{a}_4\oplus F_4, \textcolor{black}{\tilde{F}_3}\oplus b_3).$ Similar to the first time slot, node $1$ and $2$ utilize their own symbols as side information to obtain $\tilde{F}_4$ and $\tilde{F}_3$ respectively.

    \textit{Time $\ell$:} For time $\ell=3, \dots, 2L,$ the transmission signals at node $1$ and $2$ are as follows:
    \begin{align}
    &\text{node $1:$}
        \begin{bmatrix}
           a_{2\ell-1} \\
           a_{2\ell}
        \end{bmatrix}\oplus
        \begin{bmatrix}
           \tilde{F}_{2(\ell-2)}\oplus a_{2(\ell-2)} \\
           \tilde{b}_{2(\ell-2)-1}\oplus F_{2(\ell-2)-1}\oplus a_{2(\ell-2)-1}\oplus\textcolor{black}{\tilde{a}_{2(\ell-4)}}
        \end{bmatrix}, \\
    &\text{node $2:$}
        \begin{bmatrix}
           b_{2\ell} \\
           b_{2\ell-1}
        \end{bmatrix}\oplus
        \begin{bmatrix}
           \tilde{F}_{2(\ell-2)-1}\oplus b_{2(\ell-2)-1} \\
           \tilde{a}_{2(\ell-2)}\oplus F_{2(\ell-2)}\oplus b_{2(\ell-2)}\oplus\textcolor{black}{\tilde{b}_{2(\ell-4)-1}}
        \end{bmatrix}.
    \end{align}
    Similarly, for time $\ell=3, \dots, 2L-2,$ node $\tilde{1}$ and $\tilde{2}$ deliver:
    \begin{align}
    &\text{node $\tilde{1}:$}
        \begin{bmatrix}
           \tilde{a}_{2\ell} \\
           \tilde{a}_{2\ell-1}
        \end{bmatrix}\oplus
        \begin{bmatrix}
           F_{2\ell}\oplus \tilde{a}_{2(\ell-2)-1} \\
        \begin{aligned}
           &a_{2\ell-1}\oplus \tilde{F}_{2(\ell-2)}\oplus a_{2(\ell-2)}\oplus \tilde{a}_{2(\ell-2)}\oplus F_{2(\ell-2)}
        \end{aligned}
        \end{bmatrix}, \\
    &\text{node $\tilde{2}:$}
        \begin{bmatrix}
           \tilde{b}_{2\ell-1} \\
           \tilde{b}_{2\ell}
        \end{bmatrix}\oplus
        \begin{bmatrix}
           F_{2\ell-1}\oplus \tilde{b}_{2(\ell-2)} \\
        \begin{aligned}
           &b_{2\ell}\oplus \tilde{F}_{2(\ell-2)-1}\oplus b_{2(\ell-2)-1}\oplus \tilde{b}_{2(\ell-2)-1}\oplus F_{2(\ell-2)-1}
        \end{aligned}
        \end{bmatrix}.
    \end{align}
    There are a few points to note. First, the transmitted signal of each node includes two parts: Fresh symbols, e.g., $(a_{2\ell-1}, a_{2\ell})$ at node $1,$ and feedback signals, e.g., $(\tilde{F}_{2(\ell-2)}\oplus a_{2(\ell-2)}, \tilde{b}_{2(\ell-2)-1}\oplus F_{2(\ell-2)-1}\oplus a_{2(\ell-2)-1}\oplus\tilde{a}_{2(\ell-4)}).$ Moreover, the feedback signals sent through the bottom levels ensure modulo-$2$ sum function computations at the bottom levels as these null out interference. Finally, we assume that if the index of a symbol is non-positive, we set the symbol as \emph{null}, e.g., we set $\textcolor{black}{\tilde{a}_{2(\ell-4)}}$ (in $(11)$) as null until time $4.$

    For the last two time slots, node $\tilde{1}$ and $\tilde{2}$ do not send any fresh backward symbols. Instead, they mimic the perfect-feedback scheme; at time $\ell\ (\ell=2L-1,\ 2L),$ node $\tilde{1}$ feeds back $F_{2\ell}$ on the top level, while node $\tilde{2}$ feeds back $F_{2\ell-1}$ on the top level.

    Note that until time $2L,$ a total of $4L$ forward symbols are delivered $(a_{2\ell-1}, a_{2\ell}, b_{2\ell-1}, b_{2\ell}),$ for $\ell=1, \dots, 2L.$ Similarly, a total of $4(L-1)$ backward symbols are delivered.

    One can readily check that node $\tilde{1}$ and $\tilde{2}$ can obtain $\{F_{2\ell}\}_{\ell=1}^{2L}$ and $\{F_{2\ell-1}\}_{\ell=1}^{2L}$ respectively. Similarly, node $1$ and $2$ can correspondingly obtain  $\{\tilde{F}_{2\ell}\}_{\ell=1}^{2(L-1)}$ and $\{\tilde{F}_{2\ell-1}\}_{\ell=1}^{2(L-1)}.$ Recall that among the total $4L$ and $4L-4$ forward and backward functions, $\{F_{2\ell-1}\}_{\ell=1}^{2L}$ and $\{F_{2\ell}\}_{\ell=1}^{2L}$ are not yet delivered to node $\tilde{1}$ and $\tilde{2}$ respectively. Similarly $\{\tilde{F}_{2\ell-1}\}_{\ell=1}^{2(L-1)}$ and $\{\tilde{F}_{2\ell}\}_{\ell=1}^{2(L-1)}$ are missing at node $1$ and $2$ respectively.

    For ease of understanding, Fig. $4$ illustrates a simple case of $L=2.$ At time $3,$ node $\tilde{1}$ receives $(a_5\oplus \tilde{F}_2\oplus a_2, F_6\oplus\tilde{a}_1);$ and node $\tilde{2}$ receives $(b_6\oplus\tilde{F}_1\oplus b_1, F_5\oplus\tilde{b}_2).$ Note that using their own symbols $\tilde{a}_1$ and $\tilde{b}_2,$ node $\tilde{1}$ and $\tilde{2}$ can obtain $F_6$ and $F_5$ respectively.
    At time $4,$ we repeat the same process w.r.t. new symbols. As a result, node $\tilde{1}$ and $\tilde{2}$ obtain $(a_7\oplus\textcolor{black}{\tilde{F}_4}\oplus a_4, F_8)$ and $(b_8\oplus\textcolor{black}{\tilde{F}_3}\oplus b_3, F_7).$
    In the last two time slots (time $3$ and $4$), node $1$ and $2$ get $(F_5, F_7)$ and $(F_6, F_8)$ respectively.

    \textbf{Stage} $\mathbf{2}$: During the next $L$ time slots in the second stage, we accomplish the computation of the desired functions not yet obtained by each node. Recall that the transmission strategy in the perfect-feedback scenario is simply to forward all of the received signals at each node. The received signals are in the form of modulo-$2$ sum functions of interest (see Fig. $3$). In our model, however, the received signals include symbols generated from the other-side nodes. For instance, the received signal at node $1$ in time $1$ is $\tilde{b}_1\oplus F_1,$ which contains the backward symbol $\tilde{b}_1.$ Hence, unlike the perfect-feedback scheme, forwarding the signal directly from node $1$ to node $\tilde{1}$ is not guaranteed for node $\tilde{1}$ to decode the desired function $F_1.$

    To address this, we introduce a recently developed approach~\cite{Suh17}: \emph{Retrospective decoding.} The key feature of this approach is that the successive refinement is done in a retrospective manner, allowing us to resolve the aforementioned issue. The outline of the strategy is as follows:
    Node $\tilde{1}$ and $\tilde{2}$ start to decode $(F_{4L-3}, F_{4L-1})$ and $(F_{4L-2}, F_{4L})$ respectively. Here one key point to emphasize is that these decoded functions act as \emph{side information.} Ultimately, this information enables the other-side nodes to obtain the desired functions w.r.t. the past symbols. Specifically the decoding order reads:
    \begin{align*}
    &\left(F_{4L-3}, F_{4L-2}, F_{4L-1}, F_{4L}\right) \rightarrow (\tilde{F}_{4(L-1)-3}, \tilde{F}_{4(L-1)-2}, \tilde{F}_{4(L-1)-1},\tilde{F}_{4(L-1)}) \\
    &\rightarrow \cdots \rightarrow \left(F_{5}, F_{6}, F_{7}, F_{8}\right) \rightarrow (\tilde{F}_{1}, \tilde{F}_{2}, \tilde{F}_{3},\tilde{F}_{4}) \rightarrow \left(F_{1}, F_{2}, F_{3}, F_{4}\right).
    \end{align*}
    With the refinement at time $2L+\ell \ (\ell=1, \dots, L)$  (i.e., the $\ell$th time of Stage $2$), node $\tilde{1}$ and $\tilde{2}$ can decode the following:
    \begin{align*}
    &\text{node $\tilde{1}:$}\ (F_{4(L-(\ell-1))-3}, F_{4(L-(\ell-1))-1}), \\
    &\text{node $\tilde{2}:$}\ (F_{4(L-(\ell-1))-2}, F_{4(L-(\ell-1))}).
    \end{align*}
    Subsequently, node $1$ and $2$ decode:
    \begin{align*}
    &\text{node $1:$}\ (\tilde{F}_{4(L-\ell)-3}, \tilde{F}_{4(L-\ell)-1}\oplus\textcolor{black}{\tilde{F}_{4(L-(\ell+1))-3}}),\\
    &\text{node $2:$}\ (\tilde{F}_{4(L-\ell)-2}, \tilde{F}_{4(L-\ell)}\oplus\textcolor{black}{\tilde{F}_{4(L-(\ell+1))-2}}).
    \end{align*}
    Note that after one more refinement at time $2L+\ell+1,$ $\textcolor{black}{\tilde{F}_{4(L-(\ell+1))-3}}$ and $\textcolor{black}{\tilde{F}_{4(L-(\ell+1))-2}}$ from $\tilde{F}_{4(L-\ell)-1}\oplus\textcolor{black}{\tilde{F}_{4(L-(\ell+1))-3}}$ and $\tilde{F}_{4(L-\ell)}\oplus\textcolor{black}{\tilde{F}_{4(L-(\ell+1))-2}}$ can be canceled out at node $1$ and $2,$ and therefore finally decode $\tilde{F}_{4(L-\ell)-1}$ and $\tilde{F}_{4(L-\ell)}$ respectively.

    Specifically, the transmission strategy is as follows:

    \textit{Time 2L$+$1:} Taking the perfect-feedback strategy for $(F_{4L-3}, F_{4L-1}, F_{4L-2}, F_{4L}),$ one can readily observe that node $\tilde{1}$ and $\tilde{2}$ can decode $(F_{4L-3}, F_{4L-1})$ and $(F_{4L-2}, F_{4L})$ respectively.

    \textit{Time 2L$+\ell$ $(\ell = 2, \dots, L)$:} With newly decoded functions at time $2L+\ell-1,$ a successive refinement is done to achieve reliable function computations both at the top and bottom levels. Here we note that the idea of interference neutralization is also employed to ensure function computations at the bottom levels. In particular, the transmission signals at node $1$ and $2$ are:
    \begin{align}
    \text{node $1:$}
        &\begin{bmatrix}
           \tilde{F}_{4(L-(\ell-1))-3} \\
           \tilde{F}_{4(L-(\ell-1))-1}\oplus\tilde{F}_{4(L-\ell)-3}
        \end{bmatrix}\oplus
        \begin{bmatrix}
           \tilde{b}_{4(L-(\ell-1))-3}\oplus F_{4(L-(\ell-1))-3}\oplus \tilde{b}_{4(L-\ell)-2} \\
           \tilde{b}_{4(L-(\ell-1))-1}\oplus F_{4(L-(\ell-1))-1}\oplus \tilde{b}_{4(L-\ell)}
        \end{bmatrix} \nonumber \\
        &\oplus\begin{bmatrix}
           \tilde{F}_{4(L-\ell)-2} \\
           \tilde{F}_{4(L-\ell)}\oplus\tilde{F}_{4(L-(\ell-1))-2}
        \end{bmatrix}, \\
    \text{node $2:$}
        &\begin{bmatrix}
           \tilde{F}_{4(L-(\ell-1))-2} \\
           \tilde{F}_{4(L-(\ell-1))}\oplus\tilde{F}_{4(L-\ell)-2}
        \end{bmatrix}\oplus
        \begin{bmatrix}
           \tilde{a}_{4(L-(\ell-1))-2}\oplus F_{4(L-(\ell-1))-2}\oplus \tilde{a}_{4(L-\ell)-3} \\
           \tilde{a}_{4(L-(\ell-1))}\oplus F_{4(L-(\ell-1))}\oplus \tilde{a}_{4(L-\ell)-1}
        \end{bmatrix} \nonumber \\
        &\oplus
        \begin{bmatrix}
           \tilde{F}_{4(L-\ell)-3} \\
           \tilde{F}_{4(L-\ell)-1}\oplus\tilde{F}_{4(L-(\ell-1))-3}
        \end{bmatrix}.
    \end{align}
    Notice that the signals in the first bracket are newly decoded functions; the signals in the second bracket are those received at time $2(L-(\ell-1))-1,\ 2(L-(\ell-1))$ on the top level; and those in the third bracket are modulo-$2$ sum functions decoded at Stage $1$ (e.g., even-index functions for node $1$).
    This transmission allows node $\tilde{1}$ and $\tilde{2}$ to decode $(F_{4(L-(\ell-1))-3}, F_{4(L-(\ell-1))-1})$ and $(F_{4(L-(\ell-1))-2}, F_{4(L-(\ell-1))})$ using their own symbols and previously decoded functions.

    Similarly, for time $2L+\ell\ (\ell=1, \dots, L),$ node $\tilde{1}$ and $\tilde{2}$ deliver:
    \begin{align}
    \text{node $\tilde{1}:$}
        &\begin{bmatrix}
           F_{4(L-(\ell-1))-3} \\
           F_{4(L-(\ell-1))-1}
        \end{bmatrix}\oplus
        \begin{bmatrix}
           a_{4(L-(\ell-1))-3}\oplus \tilde{F}_{4(L-\ell)-2}\oplus a_{4(L-\ell)-2} \\
           a_{4(L-(\ell-1))-1}\oplus \tilde{F}_{4(L-\ell)}\oplus a_{4(L-\ell)}
        \end{bmatrix} \\
        &\oplus\begin{bmatrix}
           F_{4(L-\ell)-2} \\
           F_{4(L-\ell)}\oplus F_{4(L-(\ell-1))-2}\oplus \textcolor{black}{a_{4(L-\ell)-3}\oplus \tilde{F}_{4(L-(\ell+1))-2} \oplus a_{4(L-(\ell+1))-2}}\oplus F_{4(L-(\ell+1))-2}
        \end{bmatrix}, \nonumber  \\
    \text{node $\tilde{2}:$}
        &\begin{bmatrix}
           F_{4(L-(\ell-1))-2} \\
           F_{4(L-(\ell-1))}
        \end{bmatrix}\oplus
        \begin{bmatrix}
           b_{4(L-(\ell-1))-2}\oplus \tilde{F}_{4(L-\ell)-3}\oplus b_{4(L-\ell)-3} \\
           b_{4(L-(\ell-1))}\oplus \tilde{F}_{4(L-\ell)-1}\oplus b_{4(L-\ell)-1}
        \end{bmatrix} \\
        &\oplus\begin{bmatrix}
           F_{4(L-\ell)-3} \\
           F_{4(L-\ell)-1}\oplus F_{4(L-(\ell-1))-3}\oplus \textcolor{black}{b_{4(L-\ell)-2}\oplus \tilde{F}_{4(L-(\ell+1))-3} \oplus b_{4(L-(\ell+1))-3}}\oplus F_{4(L-(\ell+1))-3}
        \end{bmatrix}. \nonumber
    \end{align}
    Note that the signals in the third bracket are modulo-$2$ sum functions decoded at Stage $1$ and the summation of those and the received signals on the top level. In particular, $\textcolor{black}{a_{4(L-\ell)-3}\oplus \tilde{F}_{4(L-(\ell+1))-2} \oplus a_{4(L-(\ell+1))-2}}$ and $\textcolor{black}{b_{4(L-\ell)-2}\oplus \tilde{F}_{4(L-(\ell+1))-3} \oplus b_{4(L-(\ell+1))-3}}$ (in the third bracket of $(17)$ and $(18)$) are the received signals at time $2(L-\ell)-1.$
    As a result, node $1$ and $2$ can compute $(\tilde{F}_{4(L-\ell)-3}, \tilde{F}_{4(L-\ell)-1}\oplus\textcolor{black}{\tilde{F}_{4(L-(\ell+1))-3}})$ and $(\tilde{F}_{4(L-\ell)-2}, \tilde{F}_{4(L-\ell)}\oplus\textcolor{black}{\tilde{F}_{4(L-(\ell+1))-2}})$ using their own symbols and past decoded functions.

    For ease of illustration, we elaborate on how decoding works in the case of $L=2.$ We exploit the received signals at time $3\ (=2L-1)$ and $4\ (=2L)$ at node $1$ and $2.$ As they obtain modulo-$2$ sums of forward symbols directly, the transmission strategy of node $1$ and $2$ at time $5\ (=2L+1)$ is identical to that in the perfect-feedback scheme: Forwarding $(F_{5}, F_{7})$ and $(F_{6}, F_{8})$ respectively. Then node $\tilde{1}$ and $\tilde{2}$ obtain $(F_{5}, F_{7}\oplus F_{6})$ and $(F_{6}, F_{8}\oplus F_{5}).$ Using $F_{6}$ (received at time $3$), node $\tilde{1}$ can decode $F_7.$ Similarly node $\tilde{2}$ can decode $F_8$.

    Now in the backward channel, with the newly decoded $F_5,$ $F_2$ (received at time $1$) and $a_5 \oplus \tilde{F}_2 \oplus 2$ (received at time $3$), node $\tilde{1}$ can construct:
    \begin{align*}
    &\textcolor{black}{\tilde{F}_2}\oplus b_5\oplus b_2 \\
    &=(a_5\oplus \textcolor{black}{\tilde{F}_2}\oplus a_2)\oplus (F_5)\oplus (F_2).
    \end{align*}
    This constructed signal is sent at the top level.

    Furthermore, with the newly decoded $F_7,$ $(a_1, F_4, F_6)$ (received at time $1, 2$ and $3$) and $a_7 \oplus \tilde{F}_4 \oplus 4$ (received at time $4$), node $\tilde{1}$ can construct:
    \begin{align*}
    &\textcolor{black}{\tilde{F}_4}\oplus b_7\oplus b_4\oplus F_6\oplus a_1 \\
    &=(a_7\oplus \textcolor{black}{\tilde{F}_4}\oplus a_4) \oplus (F_7)\oplus (F_4) \oplus (F_6)\oplus a_1.
    \end{align*}
    This is sent at the bottom level.

    In a similar manner, node $\tilde{2}$ encodes $(\textcolor{black}{\tilde{F}_1}\oplus a_6\oplus a_1, \textcolor{black}{\tilde{F}_3}\oplus a_8\oplus a_3\oplus F_5\oplus b_2).$ Sending all of the encoded signals, node $1$ and $2$ then receive $(\textcolor{black}{\tilde{F}_1}\oplus a_6\oplus a_1, \tilde{F}_3\oplus \tilde{F}_2\oplus a_5)$ and $(\textcolor{black}{\tilde{F}_2}\oplus b_5\oplus b_2, \tilde{F}_4\oplus\tilde{F}_1\oplus b_6)$ respectively.

    Observe that from the top level, node $1$ can finally decode $\textcolor{black}{\tilde{F}_1}$ of interest using $(a_6, a_1)$ (own symbols). From the bottom level, node $1$ can also obtain $\textcolor{black}{\tilde{F}_3}$ from $\tilde{F}_3\oplus \tilde{F}_2\oplus a_8\oplus a_3\oplus a_5$ by utilizing $\tilde{F}_2$ (received at time $1$) and $(a_8, a_3, a_5)$ (own symbols). Similarly, node $2$ can decode $(\tilde{F}_2, \tilde{F}_4).$

    With the help of the decoded functions, node $1$ and $2$ can then construct signals that can aid in the decoding of the desired functions at the other-side nodes. Node $1$ uses newly decoded $\tilde{F}_1$ and $\textcolor{black}{\tilde{b}_1}\oplus F_1$ (received at time $1$) to generate $F_1\oplus\textcolor{black}{\tilde{a}_1}$ on the top level; using $(\textcolor{black}{\tilde{b}_3}\oplus F_3, \textcolor{black}{\tilde{F}_2}, \textcolor{black}{\tilde{F}_3}),$ it also constructs $F_3\oplus\textcolor{black}{\tilde{a}_3\oplus\tilde{F}_2}$ on the bottom level. In a similar manner, node $2$ encodes $(F_2\oplus \textcolor{black}{\tilde{b}_2}, F_4\oplus\textcolor{black}{\tilde{b}_4\oplus\tilde{F}_1}).$

    Forwarding all of these signals at time $6,$ node $\tilde{1}$ and $\tilde{2}$ receive $(F_1\oplus\tilde{a}_1, F_3\oplus F_2\oplus \tilde{a_3}\oplus\tilde{a}_2)$ and $(F_2\oplus\tilde{b}_2, F_4\oplus F_1\oplus \tilde{b}_4\oplus\tilde{b}_1)$ respectively. Here using their past decoded functions and own symbols, node $\tilde{1}$ and $\tilde{2}$ can obtain $(F_1, F_3)$ and $(F_2, F_4).$

    Consequently, during $6$ time slots, $8$ modulo-$2$ sum functions w.r.t. forward symbols are computed, while $4$ backward functions are computed. This gives $(R, \tilde{R})=(\frac{4}{3}, \frac{2}{3}).$ One can see from $(11)$ to $(18)$ that for an arbitrary number of $L,$ $(R, \tilde{R})= (\frac{4L}{3L}, \frac{4(L-1)}{3L})=(\frac{4}{3}, \frac{4L-4}{3L})$ is achievable. Note that as $L \rightarrow \infty,$ we get the desired rate pair: $(R,\tilde{R})\rightarrow (\frac{4}{3}, \frac{4}{3}) = (C_{\sf pf}, \tilde{C}_{\sf pf}).$
    \begin{remark}[How to achieve the perfect-feedback bound?]
    As in the two-way interference channel~\cite{Suh17}, the key point in our achievability lies in exploiting the following three types of information as side information: $(1)$ past received signals; $(2)$ own message symbols; and $(3)$ future decoded functions. Recall that in our achievability in Fig. $4,$ the encoding strategy is to combine own symbols with past received signals, e.g., at time $1$ node $\tilde{1}$ encodes $(\textcolor{black}{\tilde{a}_2}\oplus F_2, \textcolor{black}{\tilde{a}_1}\oplus a_1),$ which is the mixture of its own symbols $(\tilde{a}_2, \tilde{a}_1)$ and the received signals $(F_2, a_1).$ The decoding strategy is to utilize past received signals, e.g., at time $1,$ node $1$ exploits its own symbol $a_2$ to decode $\tilde{F}_2.$

    The most interesting part that is also highlighted in the two-way interference channel~\cite{Suh17} is the utilization of the last type of information: Future decoded functions. For instance, with $\tilde{b}_1\oplus F_1$ (received at time $1$) only, node $1$ cannot help node $\tilde{1}$ to decode $F_1.$ However, note that our strategy is to forward $F_1\oplus \tilde{a}_1$ at node $1$ at time $6.$ Here the signal is the summation of $\tilde{b}_1\oplus F_1$ and $\tilde{F}_1.$ Additionally, $\tilde{F}_1$ is in fact the function that node $1$ wishes to decode in the end; it can be viewed as a \emph{future} function because it is not available at that moment. Thus, the approach is to defer the decoding procedure for $F_1$ until $\tilde{F}_1$ becomes available at node $1;$ note in Fig. $4$ that $\tilde{F}_1$ is computed at time $5$ (a deferred time slot) in the second stage. The decoding procedure for $F_3$ and $(F_2, F_4)$ at node $\tilde{1}$ and $\tilde{2}$ proceeds similarly as follows: Deferring the decoding of these functions until $\tilde{F}_3$ and $(\tilde{F}_2, \tilde{F}_4)$ becomes available at node $1$ and $2$ respectively. Note that
    the decoding of $(F_5, F_7)$ and $(F_6, F_8)$ at node $\tilde{1}$ and $\tilde{2}$ precedes that of $(\tilde{F}_1, \tilde{F}_3)$ and $(\tilde{F}_2, \tilde{F}_4)$ at node $1$ and $2$ respectively. The idea of deferring the refinement together with the retrospective decoding plays a key role in achieving the perfect-feedback bound in the limit of $L$.
    \end{remark}
\subsection{Example 2: $(m,n)=(1,2),\ (\tilde{m}, \tilde{n})=(1,0)$} \label{example2}
    Similar to the previous example, we first review the perfect-feedback scheme presented in our earlier work~\cite{Suh13}, which we will use as a baseline for comparison with our achievable scheme. We focus on the case of $(\tilde{m},\tilde{n})=(1,0),$ as that for $(m,n)=(1,2)$ was already presented.

\subsubsection{Perfect-feedback strategy}
    The perfect-feedback scheme for $(\tilde{m},\tilde{n})=(1,0)$ consists of two stages; the first stage has one time slot; and the second stage has two time slots. At time $1,$ we send backward symbols $\tilde{a}_1$ and $\tilde{b}_2$ at node $\tilde{1}$ and $\tilde{2}$ respectively. Then node $1$ and $2$ receive $\tilde{b}_2$ and $\tilde{a}_1$ respectively. Node $1$ can then deliver the received symbol $\tilde{b}_2$ to node $\tilde{1}$ through feedback. Similarly, node $\tilde{2}$ can obtain $\tilde{a}_1$ from node $2.$

    At time $2$ (the first time of Stage $2$), with the feedback signals, node $\tilde{1}$ and $\tilde{2}$ can construct $\tilde{F}_2$ and $\tilde{F}_1$ respectively and send them over the backward channel. Then node $1$ and $2$ obtain $\tilde{F}_1$ and $\tilde{F}_2$ respectively. Note that until the end of time $2,$ $\tilde{F}_2$ is not delivered to node $1.$ Similarly, $\tilde{F}_1$ is missing at node $2.$ Using one more time slot, we can deliver these functions to the intended nodes. With feedback, node $\tilde{2}$ can obtain $\tilde{F}_2$ from node $2.$ Sending this at time $3$ allows node $1$ to obtain $\tilde{F}_2.$ Similarly, node $2$ can obtain $\tilde{F}_1.$ As a result, node $1$ and $2$ obtain $(\tilde{F}_1, \tilde{F}_2)$ during three time slots. This gives a rate of $\frac{2}{3}\ (=\tilde{C}_{\sf pf}).$ We note that compared to the example $(\tilde{m}, \tilde{n})=(2,1)$ (the prior perfect-feedback case), the current strategy does not finish the decoding procedure at Stage $2$ in one shot. Rather, it needs one more time slot for relaying and computing the desired functions.
%\begin{shaded}
\subsubsection{Achievability}
    In the two-way setting, a challenge arises due to the tension between feedback transmission and traffic w.r.t. the other direction. The underlying idea to resolve this challenge is similar to that for $(m,n)=(1,2)$, $(\tilde{m},\tilde{n})=(2,1).$ However, one noticeable distinction relative to Example $1$ is that the retrospective decoding occurs in a \emph{nested manner.} It was found that this phenomenon occurs due to the fact that the decoding procedure of backward functions at the second stage is not done in one shot (recall the above perfect-feedback scheme); it needs additional time for relaying and computing the desired functions. Hence the decoding of the functions of interest w.r.t. fresh message symbols generated during one stage may not be completed in the very next stage.

    Our achievability now introduces the concept of multiple layers, say $M$ layers. Each layer consists of two stages as in Example $1.$ Hence there are $2M$ stages overall. For each layer, the first stage consists of $2L$ time slots; and the second stage consists of $L+1$ time slots. For the first stage of each layer, $4L$ and $2L$ of fresh symbols are transmitted through the forward and backward channels respectively. In the second stage, no fresh forward and backward symbols are transmitted, but some refinements are performed (to be specified later).

    Among the total $4LM$ forward and $2LM$ backward functions, we claim that our scheme ensures the computation of the $4L(M-(2^{L+1}-2L-2))$ number of forward functions and the $2L(M-(2^{L+1}-2L-2))$ number of backward functions at the end of Layer $M.$ However, we note that the remaining $4L(2^{L+1}-2L-2)$ forward and $2L(2^{L+1}-2L-2)$ backward functions can be successfully computed as we proceed with our scheme further. At the moment of time $(3L+1)M,$  we get the rate pair of:
    \begin{align}
    (R, \tilde{R})= \left(\frac{4L(M-(2^{L+1}-2L-2))}{(3L+1)M}, \frac{2L(M-(2^{L+1}-2L-2))}{(3L+1)M}\right).
    \end{align}

    As the scheme is somewhat complicated, we first illustrate the scheme for a simple case $(L,M)=(2,\infty)$ that well presents the idea of achievability although not achieving the optimal rate pair of $(C_{\sf pf}, \tilde{C}_{\sf pf})=(\frac{4}{3}, \frac{2}{3})$ in this case. The exact achievability for an arbitrary $(L, M)$ will be presented in Appendix~\ref{app:eg}. One can see from $(19)$ that by setting $M = (2+\epsilon)^{L}$ where $\epsilon > 0,$ and letting $L\rightarrow \infty$ with the general scheme, we get the optimal performance: $(R, \tilde{R}) = (\frac{4}{3}, \frac{2}{3})=(C_{\sf pf}, \tilde{C}_{\sf pf}).$

    \textbf{Stage} $\mathbf{1}$: Let us illustrate the scheme for $(L,M)=(2,\infty).$ We claim that $(R, \tilde{R})=(\frac{8}{7}, \frac{4}{7})$ is achievable, which coincides with $(19).$
    The proposed scheme consists of $7M\ (=(3L+1)M)$ time slots. And the first stage within the first layer consists of $4\ (=2L)$ time slots. See Fig. $5.$%\end{shaded}
        \begin{figure*}
    \centering
    \includegraphics[scale=0.34]{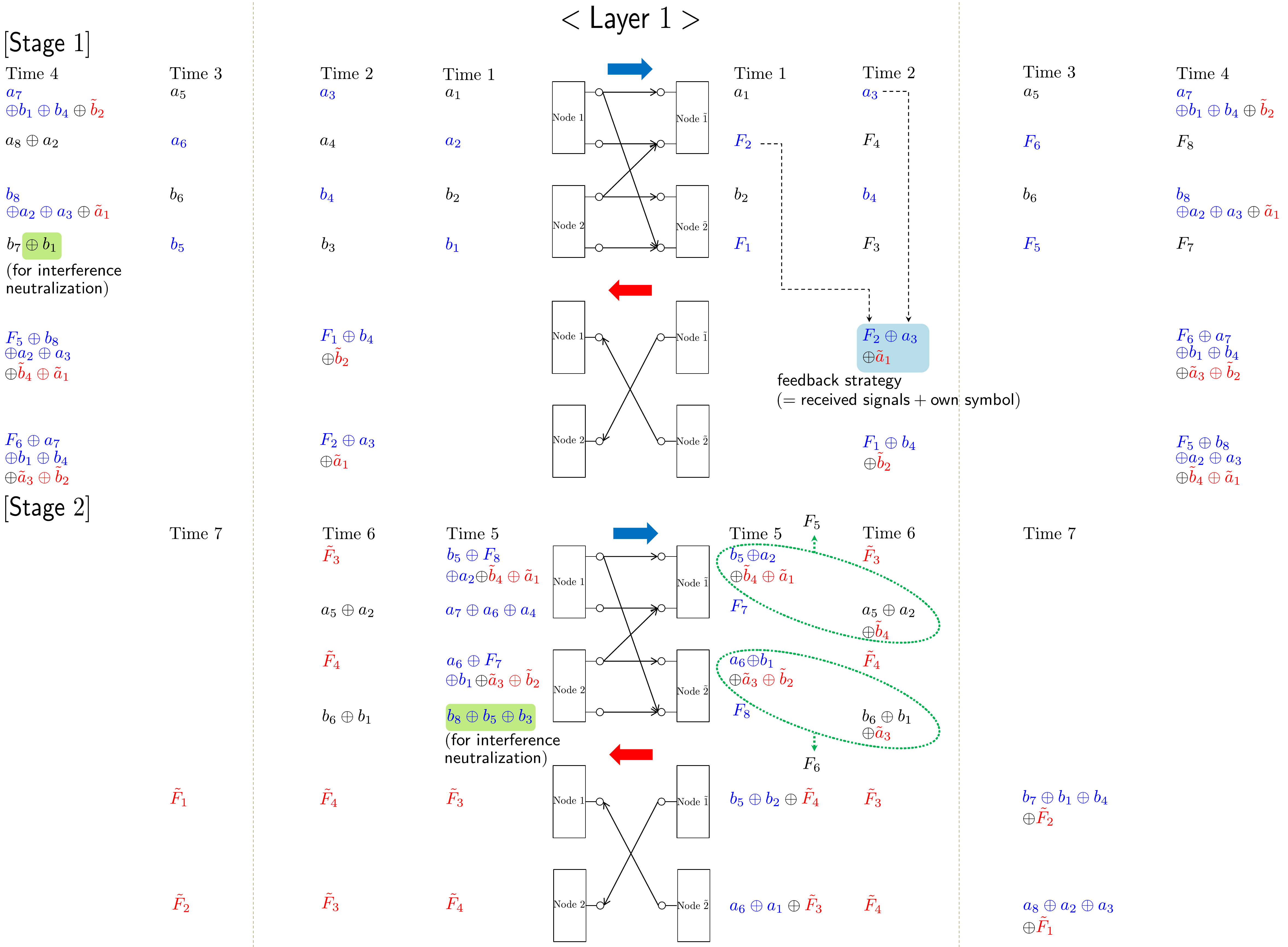}
    \caption{An achievable scheme for $(m,n)=(1,2),\ (\tilde{m}, \tilde{n})=(1,0),$ and $(L,M)=(2,\infty)$ in Layer $1.$}
    \centering
    \end{figure*}

    At time $1,$ node $1$ sends $(a_1, a_2);$ node $2$ sends $(b_2, b_1).$ Then node $\tilde{1}$ and $\tilde{2}$ receive $(a_1, F_2)$ and $(b_2, F_1)$ respectively. Repeating this forward transmission strategy w.r.t. fresh forward symbol at time $2$ and $3,$ node $\tilde{1}$ and $\tilde{2}$ receive $(a_3, F_4, a_5, F_6)$ and $(b_4, F_3, b_6, F_5)$ respectively.
    Through the backward channel, node $\tilde{1}$ and $\tilde{2}$ keep silent at time $1$ and $3,$ while they employ a feedback strategy at time $2$ in order to send the desired feedback signals and a fresh backward symbol in one shot. Specifically node $\tilde{1}$ and $\tilde{2}$ deliver $F_2\oplus a_3\oplus\tilde{a}_1$ and $F_1\oplus b_4\oplus\tilde{b}_2.$ Node $1$ and $2$ then get $F_1\oplus b_4\oplus\tilde{b}_2$ and $F_2\oplus a_3\oplus\tilde{a}_1$ respectively.

    From the received $F_1\oplus b_4\oplus \tilde{b}_2,$ node $1$ cancels out its odd-index symbol $a_1$ and adds the fresh symbol $a_7,$ thus encoding $a_7\oplus b_1\oplus b_4\oplus \tilde{b}_2.$ Similarly, node $2$ encodes $b_8\oplus a_2\oplus a_3\oplus \tilde{a}_1.$ At time $4,$ node $1$ and $2$ forward the encoded signal on the top level. Furthermore, through the bottom level, each node forwards its own symbols in order to ensure additional function computations at the receiver-side nodes. We note that for each transmitting node, the indices of the transmitted symbols coincide with those of the other transmitting node's own symbols added and canceled out on the top level during the same period. In particular, node $2$ forwards $b_7\oplus b_1$ on the bottom level, as node $1$ adds $a_7$ and cancels out $a_1$ at time $4.$ Similarly, node $1$ forwards $a_8\oplus a_2.$ Node $\tilde{1}$ and $\tilde{2}$ then receive $(a_7\oplus b_1\oplus b_4\oplus \tilde{b}_2, F_8\oplus a_3\oplus \tilde{a}_1)$ and $(b_8\oplus a_2\oplus a_3\oplus \tilde{a}_1, F_7\oplus b_4\oplus \tilde{b}_2).$ Note that node $\tilde{1}$ can decode $F_8$ from $F_8\oplus a_3\oplus \tilde{a}_1$ using $\tilde{a}_1$ (own symbol) and $a_3$ (received at time $2$). Similarly, node $\tilde{2}$ can decode $F_7.$
    %The idea is that each node's transmitted symbols have indices that represent the symbol indices of other transmitting node that was added and canceled out during the encoding procedure.

    Similar to the feedback strategy at time $2,$ node $\tilde{1}$ delivers $F_6\oplus a_7\oplus b_1\oplus b_4\oplus \tilde{a}_3\oplus \tilde{b}_2$ which is the mixture of $F_6$ (received at time $3$), $a_7\oplus b_1\oplus b_4\oplus \tilde{b}_2$ (received at time $4$), and $\tilde{a}_3$ (fresh symbol). Similarly, node $\tilde{2}$ delivers $F_5\oplus b_8\oplus a_2\oplus a_3\oplus \tilde{b}_4\oplus \tilde{a}_1.$ Node $1$ and $2$ then get $F_5\oplus b_8\oplus a_2\oplus a_3\oplus \tilde{b}_4\oplus \tilde{a}_1$ and $F_6\oplus a_7\oplus b_1\oplus b_4\oplus \tilde{a}_3\oplus \tilde{b}_2$ respectively.

    Note that until the end of time $4,$ $(F_1, F_3, F_5, F_7)$ and $(F_2, F_4, F_6, F_8)$ are not yet delivered to node $\tilde{1}$ and $\tilde{2}$ respectively, while $(\tilde{F}_1, \tilde{F}_2, \tilde{F}_3, \tilde{F}_4)$ are missing at both node $1$ and $2.$

    \textbf{Stage} $\mathbf{2}$: The transmission strategy at the second stage is to accomplish the computation of the desired functions not yet obtained by each node. We employ the retrospective decoding strategy introduced in Example $1$. This stage consists of $3$ time slots. At time $5,$ from the signal received at time $4\ (=2L),$ node $1$ cancels out all of its odd-index symbols $(a_3, a_5)$ and adds the even-index symbol $a_8\ (=a_{4L}),$ thus encoding $b_5\oplus F_8\oplus a_2\oplus\tilde{b}_4\oplus\tilde{a}_1.$ In a similar manner, node $2$ encodes $a_6\oplus F_7\oplus b_1\oplus \tilde{a}_3\oplus\tilde{b}_2$ using even-index symbols $(b_4, b_6)$ and the odd-index symbol $b_7.$ The transmission strategy for each node is to forward the encoded signal on the top level.

    As in the transmission strategy on the bottom level at time $4,$ each node forwards its own symbols in order to ensure additional function computations at the other-side nodes. Specifically node $2$ forwards $b_8\oplus b_5\oplus b_3$ since node $1$ cancels out $(a_3, a_5)$ and adds $a_8$ at time $5.$ Similarly, node $1$ forwards $a_7\oplus a_6\oplus a_4$ on the bottom level. Node $\tilde{1}$ and $\tilde{2}$ then receive $(b_5\oplus F_8\oplus a_2\oplus\tilde{b}_4\oplus\tilde{a}_1, b_7\oplus a_4\oplus b_1\oplus \tilde{a}_3\oplus \tilde{b}_2)$ and $(a_6\oplus F_7\oplus b_1\oplus \tilde{a}_3\oplus\tilde{b}_2, a_8\oplus b_3\oplus a_2\oplus \tilde{b}_4\oplus \tilde{a}_1)$ respectively.
    From the received signal on the bottom level, node $\tilde{1}$ can decode $F_7\ (=F_{4L-1})$ by adding $a_7\oplus b_1\oplus b_4\oplus \tilde{b}_2$ (received at time $4$), $F_4$ (received at time $2$), and $\tilde{a}_3$ (own symbol). Similarly, node $\tilde{2}$ can decode $F_8\ (=F_{4L}).$ From the received signal on the top level, node $\tilde{1}$ and $\tilde{2}$ use $(F_8, F_2, \tilde{a}_4, \tilde{a}_1)$ and $(F_7, F_1, \tilde{b}_3, \tilde{b}_2)$ to generate $b_5\oplus b_2\oplus\tilde{F}_4$ and $a_6\oplus a_1\oplus\tilde{F}_3$ respectively. Note that sending them back allows node $1$ and $2$ to obtain $\tilde{F}_3\ (=\tilde{F}_{2L-1})$ and $\tilde{F}_4\ (=\tilde{F}_{2L})$ by canceling $(a_6, a_1)$ and $(b_5, b_2)$ (own symbols) respectively.

%\begin{shaded}
    At time $6,$ node $1$ and $2$ forward what they just decoded on the top level: $\tilde{F}_3$ and $\tilde{F}_4.$ Similar to the transmission strategy on the bottom level at time $5,$ node $1$ and $2$ additionally forward $a_5\oplus a_2$ and $b_6\oplus b_1.$ Then node $\tilde{1}$ and $\tilde{2}$ obtain $(\tilde{F}_3, a_5\oplus a_2\oplus \tilde{F}_4)$ and $(\tilde{F}_4, b_6\oplus b_1\oplus \tilde{F}_3)$ respectively. Observe that node $\tilde{1}$ can now obtain $F_5$ by adding $b_5\oplus a_2\oplus \tilde{b}_4\oplus \tilde{a}_1$ (received on the top level at time $5$), $a_5\oplus a_2\oplus \tilde{F}_4$ (received on the bottom level at time $6$), and $(\tilde{a}_4, \tilde{a}_1)$ (own symbols). Similarly, node $\tilde{2}$ can obtain $F_6.$
    Subsequently, transmitting $\tilde{F}_3$ and $\tilde{F}_4$ (received on the top level) over the backward channel enables node $1$ and $2$ to obtain $\tilde{F}_4$ and $\tilde{F}_3$ respectively.

    Note that until the end of time $6,$ $(F_1, F_3)$ and $(F_2, F_4)$ are not yet delivered to node $\tilde{1}$ and $\tilde{2},$ while $(\tilde{F}_1, \tilde{F}_2)$ is missing at node $1$ and $2.$ We have one more time in Stage $2$ to resolve this, but unlike the prior example, the decoding of all the remaining functions appears to be impossible during this stage.
    For instance, with $F_1\oplus b_4\oplus\tilde{b}_2$ (received at time $2$) solely, node $1$ cannot help node $\tilde{1}$ to decode $F_1.$ However, if $\tilde{F}_2$ is somehow obtained at node $1,$ it can forward $F_1\oplus F_4\oplus\tilde{a}_2$ (which is the summation of $F_1\oplus b_4\oplus\tilde{b}_2,$ $\tilde{F}_2,$ and $a_4$ (own symbol)), and thus can achieve $F_1$ at node $\tilde{1}$ (by canceling $F_4$ (decoded functions at Stage $1$) and $\tilde{a}_1$ (own symbol)). Note that $\tilde{F}_2$ is in fact the function that node $1$ wishes to decode in the end; it can be viewed as a future function, as it is not available at the moment.
    Consequently, the approach is to \emph{additionally} postpone the decoding procedure to another layer. Hence, node $1$ and $2$ remain silent at time $7$ and defer the decoding strategy until time $21$ (in Layer $3$).

    Through the backward channel, however, additional backward-message computations are possible via newly-decoded forward functions.
    With the newly decoded $F_7$ and $a_7\oplus b_1\oplus b_4\oplus\tilde{b}_2$ (received at time $4$), node $\tilde{1}$ generates $b_7\oplus b_1\oplus b_4\oplus\tilde{F}_2.$ Interestingly, sending this through the backward channel allows node $2$ to obtain $\tilde{F}_2.$ Similarly, constructing $a_8\oplus a_2\oplus a_3\oplus\tilde{F}_1$ and sending this at node $\tilde{2}$ permits node $1$ to obtain $\tilde{F}_1.$ Nonetheless, one can see that $\tilde{F}_2$ and $\tilde{F}_1$ are still missing at node $1$ and $2$ respectively. We will illustrate that these unresolved function computations will be accomplished as we proceed with our scheme further.
%\end{shaded}
    \begin{figure*}
    \includegraphics[scale=0.34]{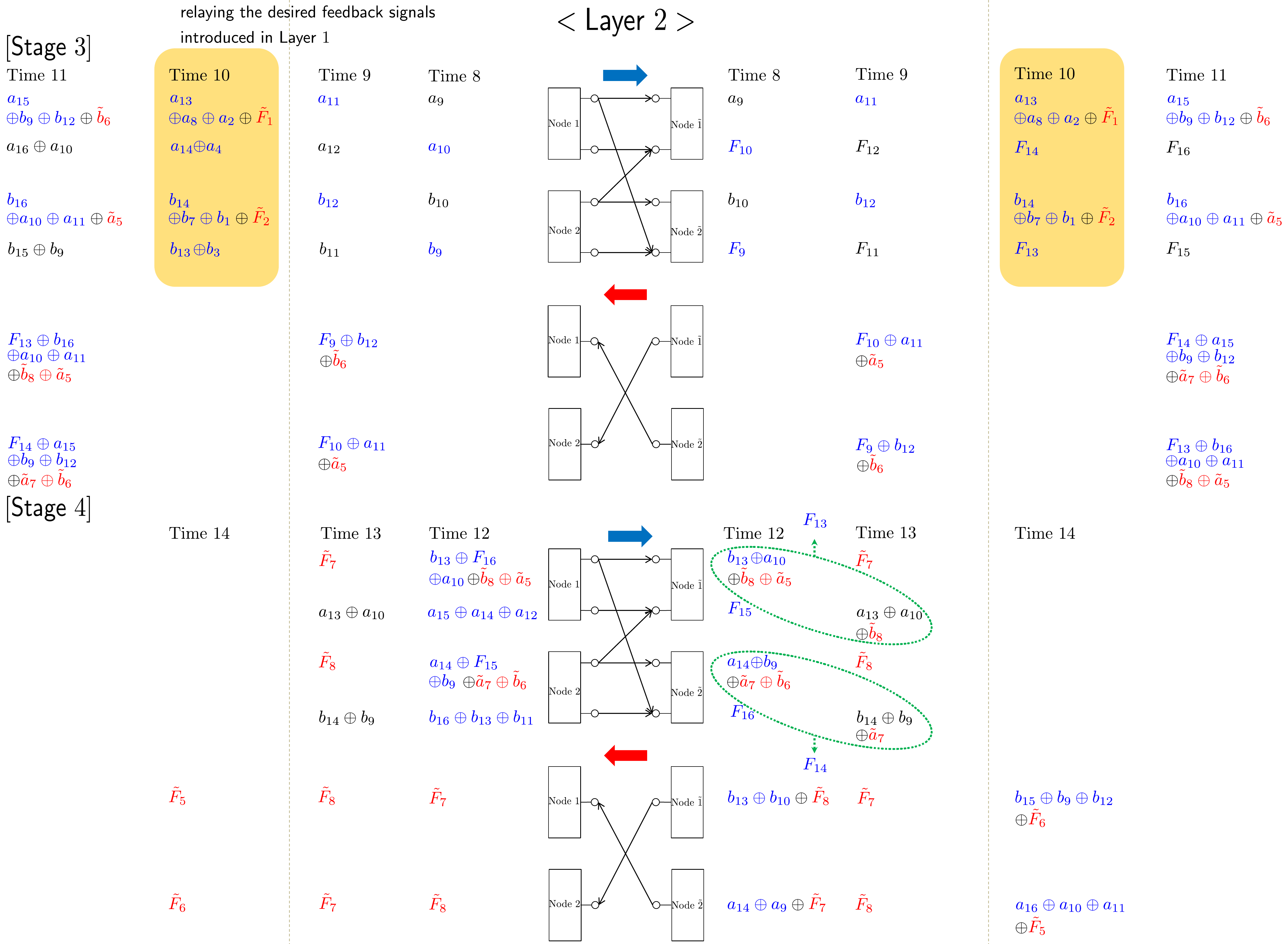}
    \caption{An achievable scheme for $(m,n)=(1,2),\ (\tilde{m}, \tilde{n})=(1,0),$ and $(L,M)=(2,\infty)$ in Layer $2.$}
    \end{figure*}

    \textbf{Stage} $\mathbf{3}$ and $\mathbf{4}$: The scheme for Layer $2$ is essentially identical to that for Layer $1$ except for the transmission scheme over the forward channel at time $10.$ See Fig. $6$ (shaded in light yellow).

    \textit{Time 10:} The distinction relative to Layer $1$ is that node $1$ and $2$ additionally exploit the most recently received signal w.r.t. the previous layer. The purpose of this is to relay signals that can help resolve the unresolved function computations in Layer $1.$

    Specifically, using $a_8\oplus a_2\oplus a_3\oplus \tilde{F}_1$ (received at time $7$ in Stage $2$), node $1$ constructs $a_{13}\oplus a_8\oplus a_2\oplus \tilde{F}_1$ and sends it on the top level. The construction idea is to cancel out node $1$'s odd-index symbol $a_3$ and to add the fresh symbol $a_{13}.$ Similarly node $2$ constructs $b_{14}\oplus b_{7}\oplus b_{1}\oplus \tilde{F}_1$ and sends it on the top level.
    Then node $\tilde{1}$ and $\tilde{2}$ receive $a_{13}\oplus a_8\oplus a_2\oplus \tilde{F}_1$ and $b_{14}\oplus b_7\oplus b_1\oplus \tilde{F}_2.$ These relayed signals will be exploited in the next layer to accomplish the computation of $\tilde{F}_2$ and $\tilde{F}_1$ (introduced in Layer $1$) at node $1$ and $2$ respectively.

    Through the bottom level, node $1$ and $2$ transmit additional signals in order to ensure the modulo-$2$ sum function computation at the other-side nodes. In particular, node $1$ transmits $a_{14}\oplus a_{4}.$ Then node $\tilde{1}$ gets $F_{14}\oplus b_{7}\oplus a_{4}\oplus b_{1}\oplus \tilde{F}_2.$ Using $b_7\oplus b_1\oplus b_4\oplus \tilde{F}_2$ (the transmitted signal of node $\tilde{1}$ at time $7$) and $F_4$ (received at time $2$), node $\tilde{1}$ can obtain $F_{14}.$ Similarly, transmitting $b_{13}\oplus b_{3}$ at node $2$ ensures node $\tilde{2}$ to obtain $F_{13}.$

    Similar to the case of Layer $1,$ at the end of time $14$ in Layer $2,$ one can see that $(F_{9}, F_{11})$ and $(F_{10}, F_{12})$ are not yet delivered to node $\tilde{1}$ and $\tilde{2},$ while $\tilde{F}_6$ and $\tilde{F}_5$ are missing at node $1$ and $2$ respectively. We will resolve these computations later.

    \textbf{Stage} $\mathbf{5}$ and $\mathbf{6}$: The scheme for Layer $3$ is identical to that for Layer $2$ except for two parts: the transmission scheme over the backward channel at time $15;$ and that over the forward channel at time $21.$ See Fig. $7.$
    \begin{figure*}
    \centering
    \includegraphics[scale=0.33]{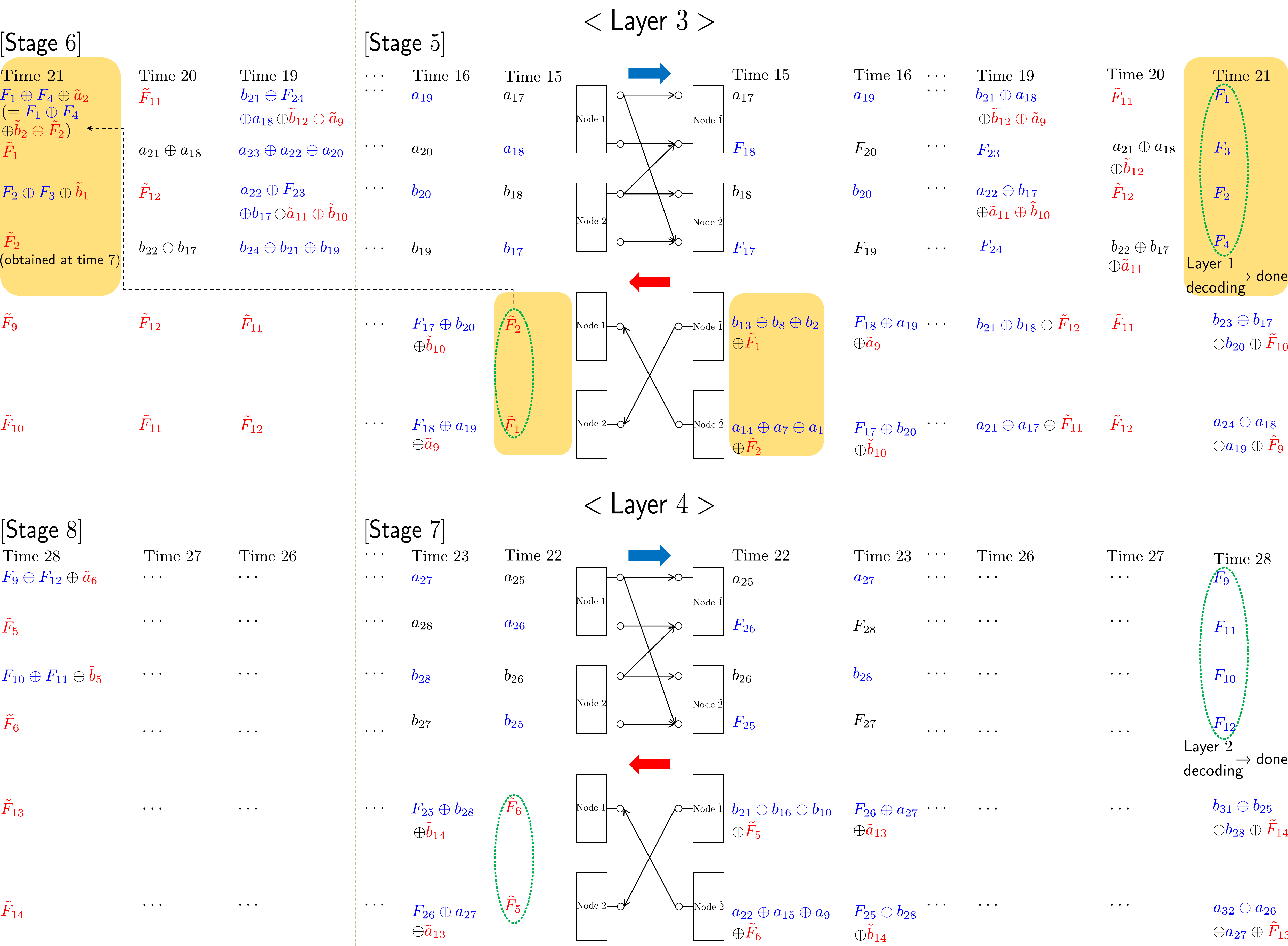}
    \caption{An achievable scheme for $(m,n)=(1,2),\ (\tilde{m}, \tilde{n})=(1,0),$ and $(L,M)=(2,\infty)$ in Layer $3$ and $4.$}
    \end{figure*}

    \textit{Time 15:} The first distinction relative to Layer $2$ is the transmitted signals at node $\tilde{1}$ and $\tilde{2}:$
    \begin{align*}
    &\text{node $\tilde{1}:$}\ b_{13}\oplus b_{8}\oplus b_{2}\oplus \tilde{F}_{1}, \\
    &\text{node $\tilde{2}:$}\ a_{14}\oplus a_{7}\oplus a_{1}\oplus \tilde{F}_{2}.
    \end{align*}
    The construction idea of these signals is to use the relayed signals, the newly decoded functions in Layer $2,$ and previously decoded functions.
    For instance, $b_{13}\oplus b_{8}\oplus b_{2}\oplus \tilde{F}_{1}$ is the summation of $a_{13}\oplus a_8\oplus a_2\oplus \tilde{F}_1$ (received at time $10$) and $(F_{13}, F_{8}, F_{2})$ (decoded at time $12, 4,$ and $1$). One can see that node $1$ and $2$ can now obtain $\tilde{F}_2$ and $\tilde{F}_1$ using their own symbols. We find that all of the backward functions introduced in Layer $1$ are successfully computed at node $1$ and $2.$

    \textit{Time 21:} Here we accomplish the remaining function computation demands introduced in Layer $1.$ The idea is to exploit $\tilde{F}_2$ and $\tilde{F}_1$ decoded at time $15.$ Using $\tilde{F}_2,$ $F_1\oplus b_4\oplus\tilde{b}_2$ (received at time $2$), and $a_4$ (own symbol), node $1$ encodes $F_1\oplus F_4\oplus \tilde{a}_2$ and sends it on the top level. One can see that node $\tilde{1}$ can obtain $F_{1}$ by canceling $F_4$ (decoded at time $2$) and $\tilde{a}_2$ (own symbol). In a similar manner, constructing  $F_{2}\oplus F_{3}\oplus \tilde{b}_1$ and delivering it on the top level enables node $\tilde{2}$ to obtain $F_{2}.$ In order to achieve additional modulo-$2$ sum computations at the same time, node $1$ and $2$ deliver $\tilde{F}_1$ and $\tilde{F}_2$ (obtained at time $7$) on the bottom level. It is found that applying a similar decoding strategy ensures node $\tilde{1}$ and $\tilde{2}$ to obtain $F_3$ and $F_4$ respectively.

    Note that all of the function computations w.r.t. the symbols introduced in Layer $1$ are accomplished. In other words, node $\tilde{1}$ and $\tilde{2}$ obtain $\{F_{\ell}\}_{\ell=1}^{8},$ while node $1$ and $2$ obtain $\{\tilde{F}_{\ell}\}_{\ell=1}^{4}.$
    \begin{comment}
    With the $L-1$ newly decoded functions, we accomplish the computation of desired functions that each node did not obtain yet during the next $L-1$ time slots.
    \end{comment}

%\begin{shaded}
    \textbf{Stage} $\mathbf{7}$ and $\mathbf{8}$: We repeat the same procedure as before. Note that the strategy at time $28$ in Layer $4$ is identical to that at time $21$ in Layer $3.$ In turn, all of the function computation demands introduced in Layer $2$ are perfectly accomplished, i.e., node $\tilde{1}$ and $\tilde{2}$ obtain $\{F_{\ell}\}_{\ell=9}^{16},$ while node $1$ and $2$ obtain $\{\tilde{F}_{\ell}\}_{\ell=5}^{8}.$

    As we proceed with our scheme, one can see that all of the function computation demands introduced in Layer $i-2$ can be completely accomplished at the end of Layer $i.$ At the end of Layer $M,$ i.e., time $7M\ (=(3L+1)M),$ node $\tilde{1}$ and $\tilde{2}$ can obtain $\{F_\ell\}_{\ell=1}^{8(M-2)},$ while node $1$ and $2$ can obtain $\{\tilde{F}_{\ell}\}_{\ell=1}^{4(M-2)}.$ This yields $(R,\tilde{R})=(\frac{8(M-2)}{7M}, \frac{4(M-2)}{7M})\ (= (\frac{4L(M-(2^{L+1}-2L-2))}{(3L+1)M}, \frac{2L(M-(2^{L+1}-2L-2))}{(3L+1)M})).$
    As $M$ tends to infinity, the scheme can achieve $(\frac{8}{7}, \frac{4}{7}).$ Following the aforementioned strategy, we find that this idea can be extended to arbitrary values of $(L,M),$ thus yielding: $(R, \tilde{R})= (\frac{4L(M-(2^{L+1}-2L-2))}{(3L+1)M}, \frac{2L(M-(2^{L+1}-2L-2))}{(3L+1)M}).$ We present details about the scheme for an arbitrary $(L, M)$ in Appendix~\ref{app:eg}.

    \begin{remark}[Why nested retrospective decoding can achieve desired performance?]
    Referring to Stage $2$ of the scheme illustrated in Fig. $5,$ we see that feedback-aided successive refinement w.r.t. the fresh symbols sent previously enables each node to compute additional functions; however, each node could not compute all of the desired functions within the current layer. Our scheme at time $7$ in Layer $1$ for the forward channel is to remain silent and defer the desired function computations. This vacant time slot causes \emph{inefficiency} of the performance.

    The good news is that additional relaying of functions of interest in Layer $2$ (see time $10$ in Fig. $6$) enables an additional forward channel use at the second stage of Layer $3$ (see time $21$ in Fig. $7$). In particular, node $\tilde{1}$ and $\tilde{2}$ can obtain $(F_{1}, F_{3})$ and $(F_{2}, F_{4})$ through this channel use. And from Layer $3,$ one can see that the second stage of each layer is fully packed. From this observation, we can conclude that the sum of the vacant time slots is finite. Therefore, we can make the inefficiency stemming from the vacant time slots negligible by setting $M \rightarrow \infty.$ Similar to Example $1,$ it is found that by setting $L \rightarrow \infty,$ we can eventually achieve the optimal performance. See details in Appendix~\ref{app:eg}.
    \end{remark}
%\end{shaded}
\subsection{Proof outline}~\label{general:outline}
    We now prove the achievability for arbitrary values of $(m,n), (\tilde{m}, \tilde{n}).$ Note that $\mathcal{C}=\mathcal{C}_{\sf no}$ when $((\alpha \in [\frac{2}{3}, 1),\ \alpha \in (1, \frac{3}{2}]),$ $(\tilde{\alpha} \in [\frac{2}{3}, 1),\ \tilde{\alpha} \in (1, \frac{3}{2}])).$ Also by symmetry, it suffices to consider the following four regimes. See Fig. $8:$
    \begin{align*}
    &\text{(R1)}\ \alpha \leq 2/3,\ \tilde{\alpha} \leq 2/3; \\
    &\text{(R2)}\ (\alpha \in [2/3, 1),\ \alpha \in (1, 3/2]),\ \tilde{\alpha} \geq 3/2; \\
    &\text{(R3)}\ \alpha \leq 2/3,\ (\tilde{\alpha} \in [2/3, 1),\ \tilde{\alpha} \in (1, 3/2]); \\
    &\text{(R4)}\ \alpha \leq 2/3,\ \tilde{\alpha} \geq 3/2.
    \end{align*}
\subsubsection{Regimes in which interaction provides no gain}
    Referring to Fig. $2,$ the channel regimes of this category are (R1) and (R1'). A simple combination of the non-feedback scheme~\cite{Suh12} and the interactive scheme in~\cite{Shin14} can yield the desired result for the regimes.
    \begin{figure}
    \centering
    \includegraphics[scale=0.5]{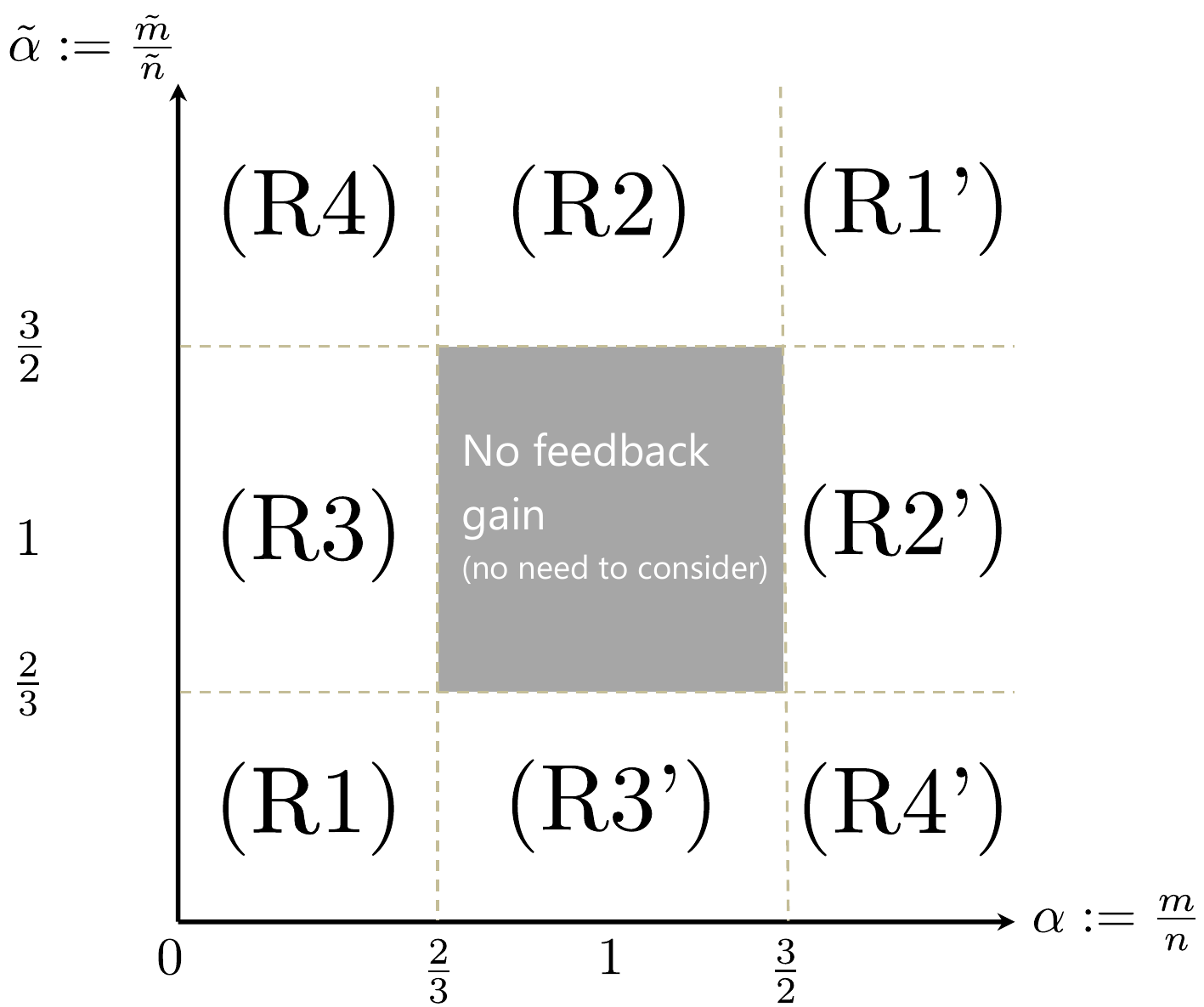}
    \caption{Regimes to check for achievability proof. By symmetry, it suffices to consider (R1), (R2), (R3), and (R4).}
    \end{figure}

\subsubsection{Regimes in which interaction helps only either in forward or backward direction}
    It is found that the achievability in this case is also a simple combination of the non-feedback scheme~\cite{Suh12} and the interactive scheme in~\cite{Shin14}. The channel regimes of this category are: (R2), (R2'), (R3), and (R3').

\subsubsection{Regimes in which interaction helps both in forward and backward directions}
    As mentioned earlier, the key idea is to employ the retrospective decoding. For ease of generalization to arbitrary channel parameters in the regime, here we employ network decomposition~\cite{Suh12} where an original network is decomposed into elementary orthogonal subnetworks and achievable schemes are applied separately into the subnetworks. See Fig. $9$ for an example of such network decomposition. The idea is to use graph coloring. The figure graphically proves the fact that $(m,n)=(2,4), (\tilde{m}, \tilde{n})=(3,1)$ model can be decomposed into the following two orthogonal subnetworks: $(m^{(1)},n^{(1)})=(1,2),\ (\tilde{m}^{(1)}, \tilde{n}^{(1)})=(2,1)$ model (blue color); and $(m^{(2)},n^{(2)})=(1,2),\ (\tilde{m}^{(2)}, \tilde{n}^{(2)})=(1,0)$ model (red color). Note that the original network is simply a concatenation of these two subnetworks. We denote the decomposition as $(2,4),(3,1) \longrightarrow (1,2), (2,1) \times (1,2),(1,0).$
    \begin{figure}
    \centering
    \includegraphics[scale=0.44]{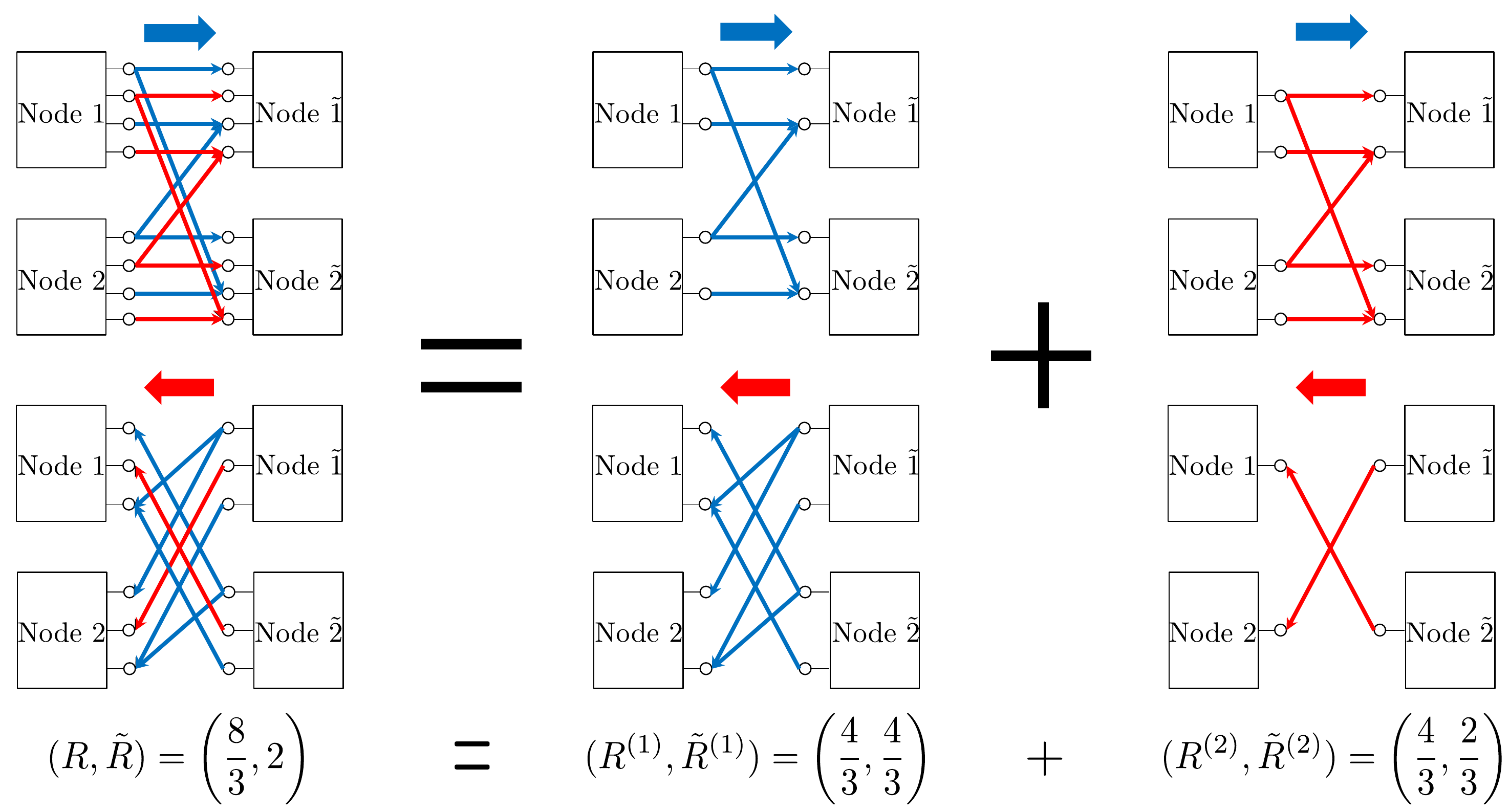}
    \caption{A network decomposition example of $(m,n)=(2,4),(\tilde{m},\tilde{n})=(3,1)$ model. The decomposition is given by $(2,4),(3,1) \longrightarrow (1,2), (2,1) \times (1,2),(1,0).$}
    \end{figure}
    As mentioned earlier, the idea is simply to apply the developed achievable schemes separately for the two subnetworks. Notice that we developed the schemes for $(m,n)=(1,2),\ (\tilde{m}, \tilde{n})=(2,1)$ and $(m,n)=(1,2),\ (\tilde{m}, \tilde{n})=(1,0)$ model. For the case of $(m,n)=(1,2),\ (\tilde{m}, \tilde{n})=(2,1),$ our proposed scheme achieves $(R, \tilde{R})=(\frac{4}{3}, \frac{4L-4}{3L}).$ And for the case of $(m,n)=(1,2),\ (\tilde{m}, \tilde{n})=(1,0),$ our strategy achieves $(R, \tilde{R})= (\frac{4L(M-(2^{L+1}-2L-2))}{(3L+1)M}, \frac{2L(M-(2^{L+1}-2L-2))}{(3L+1)M}).$ Setting $M = (2+\epsilon)^{L},\ \epsilon > 0,$ and letting $L\rightarrow \infty,$ the first scheme achieves $(\frac{4}{3}, \frac{4}{3}),$ while the second achieves $(\frac{4}{3},\frac{2}{3}).$
    Thus, the separation approach gives:
    \begin{align*}
    (R, \tilde{R})= \left(\frac{4}{3}, \frac{4}{3}\right)+\left(\frac{4}{3}, \frac{2}{3}\right) =\left(\frac{8}{3}, 2\right),
    \end{align*}
    which coincides with the claimed rate region of $\{(R, \tilde{R}): R\leq C_{\sf pf}=\frac{8}{3}, \tilde{R}\leq \tilde{C}_{\sf pf}=2\}.\ \square$

    We find that this idea can be extended to arbitrary values of $(m,n), (\tilde{m},\tilde{n}).$ The channel regimes of this category are the remaining regimes: (R4) and (R4'). See Appendix~\ref{app:Feedback3} for the detailed proof.

\section{Proof of Converse} \label{sec:conv}
     Note that the bounds of $(3)$ and $(4)$ are the perfect-feedback bounds in~\cite{Suh13}. For completeness, we will provide the proof for such bounds. The bound of $(5)$ is cut-set, which will also be proved below. The proofs of $(4)$ and $(6)$ follow by symmetry.

    \emph{Proof of $(3)$ (Perfect-feedback Bound):} The proof for the case of $\alpha = 1$ is straightforward owing to the standard cut-set argument: $N(R -\epsilon_N) \leq I(S_1^K\oplus S_2^K; Y_1^N, \tilde{S}_1^{\tilde{K}}) \stackrel{(a)}{=} I(S_1^K\oplus S_2^K; Y_1^N| \tilde{S}_1^{\tilde{K}}) \leq \sum H(Y_{1i}) \leq N\max(m,n).$ Here $(a)$ follows from the independence of $S_1^K\oplus S_2^K$ and $\tilde{S}_1^{\tilde{K}}.$
    If $R$ is achievable, then $\epsilon_N \rightarrow 0$ as $N$ tends to infinity, and hence $R \leq \max(m,n) =n$.

    Now consider the case where $\alpha \neq 1.$ Starting with Fano's inequality, we get:
    \begin{align*}
    N\left(3R-\epsilon_N\right) \leq& I\left(S_1^K\oplus S_2^K; Y_1^N, \tilde{S}_1^{\tilde{K}}\right) + I\left(S_1^K\oplus S_2^K; Y_2^N, \tilde{S}_2^{\tilde{K}}\right)+ I\left(S_1^K\oplus S_2^K; Y_1^N, \tilde{S}_1^{\tilde{K}}\right) \\
    =& I\left(S_1^K\oplus S_2^K; Y_1^N| \tilde{S}_1^{\tilde{K}}\right) + I\left(S_1^K\oplus S_2^K; Y_2^N |\tilde{S}_2^{\tilde{K}}\right)+ I\left(S_1^K\oplus S_2^K; Y_1^N| \tilde{S}_1^{\tilde{K}}\right) \\
    \leq& H\left(Y_1^N|\tilde{S}_1^{\tilde{K}}\right) - H\left(Y_1^N|S_1^K\oplus S_2^K, \tilde{S}_1^{\tilde{K}}\right) + H\left(Y_2^N|\tilde{S}_2^{\tilde{K}}\right)-H\left(Y_2^N|S_1^K\oplus S_2^K, \tilde{S}_2^{\tilde{K}}\right) \\
    &+ I\left(S_1^K\oplus S_2^K; Y_1^N|\tilde{S}_1^{\tilde{K}}\right) \\
     \stackrel{(b)}{\leq}& H\left(Y_1^N\right) - H\left(Y_1^N|S_1^K\oplus S_2^K, \tilde{S}_1^{\tilde{K}}, \tilde{S}_2^{\tilde{K}}\right) + H\left(Y_2^N\right)-H\left(Y_2^N|S_1^K\oplus S_2^K, \tilde{S}_1^{\tilde{K}}, \tilde{S}_2^{\tilde{K}}, Y_1^N\right) \\
    &+ I\left(S_1^K\oplus S_2^K; Y_1^N|\tilde{S}_1^{\tilde{K}}\right) \\
    =& H\left(Y_1^N\right)+H\left(Y_2^N\right) - H\left(Y_1^N, Y_2^N|S_1^K\oplus S_2^K, \tilde{S}_1^{\tilde{K}}, \tilde{S}_2^{\tilde{K}}\right) + I\left(S_1^K\oplus S_2^K; Y_1^N|\tilde{S}_1^{\tilde{K}}\right)\\
    \stackrel{(c)}{\leq}& H\left(Y_1^N\right)+H\left(Y_2^N\right) - H\left(Y_1^N, Y_2^N|S_1^K\oplus S_2^K, \tilde{S}_1^{\tilde{K}}, \tilde{S}_2^{\tilde{K}}\right) + I\left(S_1^K\oplus S_2^K; Y_1^N|\tilde{S}_1^{\tilde{K}}, \tilde{S}_2^{\tilde{K}}\right)\\
    \stackrel{(d)}{\leq}& H\left(Y_1^N\right)+H\left(Y_2^N\right) - H\left(Y_1^N, Y_2^N|S_1^K\oplus S_2^K, \tilde{S}_1^{\tilde{K}}, \tilde{S}_2^{\tilde{K}}\right) + I\left(S_1^K; Y_1^N, Y_2^N|\tilde{S}_1^{\tilde{K}}, \tilde{S}_2^{\tilde{K}}\right)\\
    \stackrel{(e)}{\leq}& H\left(Y_1^N\right)+H\left(Y_2^N\right) - H\left(Y_1^N, Y_2^N|S_1^K\oplus S_2^K, \tilde{S}_1^{\tilde{K}}, \tilde{S}_2^{\tilde{K}}\right) \\
    &+ I\left(S_1^K; Y_1^N, Y_2^N|S_1^K\oplus S_2^K, \tilde{S}_1^{\tilde{K}}, \tilde{S}_2^{\tilde{K}}\right) \\
    =& H\left(Y_1^N\right)+H\left(Y_2^N\right) \leq \sum H\left(Y_{1i}\right)+H\left(Y_{2i}\right) \leq 2N \max(m,n)
    \end{align*}
    where $(b)$ follows from the fact that conditioning reduces entropy; $(c)$ follows from the non-negativity of mutual information and the fact that $S_1^K\oplus S_2^K$ and $\tilde{S}_2^{\tilde{K}}$ are independent conditioned on $\tilde{S}_1^{\tilde{K}}$; $(d)$ follows from Lemma $1$ below; and $(e)$ follows from the non-negativity of mutual information and the fact that $S_1^K$ and $S_1^K\oplus S_2^K$ are independent conditioned on $(\tilde{S}_1^{\tilde{K}}, \tilde{S}_2^{\tilde{K}})$. If $R$ is achievable, then $\epsilon_N \rightarrow 0$ as $N$ tends to infinity, and hence $R \leq \frac{2}{3}\max(m,n)$. We therefore acquire the desired bound.

    \emph{Proof of $(5)$ (Cut-set Bound):}
    Starting with Fano's inequality, we get:
    \begin{align*}
    N\left(R+\tilde{R}-\epsilon_N\right) &\leq I\left(S_1^K\oplus S_2^K, \tilde{S}_1^{\tilde{K}}\oplus \tilde{S}_2^{\tilde{K}}; Y_2^N, \tilde{Y}_2^N, S_2^K, \tilde{S}_2^{\tilde{K}}\right) \\
    &\stackrel{(a)}{=} I\left(S_1^K\oplus S_2^K, \tilde{S}_1^{\tilde{K}}\oplus \tilde{S}_2^{\tilde{K}}; Y_2^N, \tilde{Y}_2^N| S_2^K, \tilde{S}_2^{\tilde{K}}\right)\\
    &= H\left(Y_2^N, \tilde{Y}_2^N| S_2^K, \tilde{S}_2^{\tilde{K}}\right) \\
    &= \sum H\left(Y_{2i}, \tilde{Y}_{2i}| S_2^K, \tilde{S}_2^{\tilde{K}}, Y_{2}^{i-1}, \tilde{Y}_{2}^{i-1}\right)\\
    &\stackrel{(b)}{=} \sum H\left(Y_{2i}, \tilde{Y}_{2i}| S_2^K, \tilde{S}_2^{\tilde{K}}, Y_{2}^{i-1}, \tilde{Y}_{2}^{i-1}, X_{2i}, \tilde{X}_{2i}\right)\\
    &\stackrel{(c)}{\leq} \sum H\left(Y_{2i}|X_{2i}\right)+H\left(\tilde{Y}_{2i}|\tilde{X}_{2i}\right) \\
    &\stackrel{(d)}{\leq} \sum H\left(V_{1i}\right)+H\left(\tilde{V}_{1i}\right)
    \leq \sum N\left(m+\tilde{m}\right)
    \end{align*}
    where $(a)$ follows from the independence of $(S_1^K\oplus S_2^K, \tilde{S}_1^{\tilde{K}}\oplus \tilde{S}_2^{\tilde{K}}, S_2^K, \tilde{S}_2^{\tilde{K}})$; $(b)$ follows from the fact that $X_{2i}$ is a function of $(S_2, \tilde{Y}_2^{i-1})$ and $\tilde{X}_{2i}$ is a function of $(\tilde{S}_2, Y_{2}^{i-1})$; and $(c)$ and $(d)$ follow from the fact that conditioning reduces entropy. If $R+\tilde{R}$ is achievable, then $\epsilon_N \rightarrow 0$ as $N$ tends to infinity, and hence $R+\tilde{R} \leq m+\tilde{m}.$ Thus, we get the desired bound.
    \begin{lemma}
    $I\left(S_1^K\oplus S_2^K; Y_1^N|\tilde{S}_1^{\tilde{K}}, \tilde{S}_2^{\tilde{K}}\right) \leq I\left(S_1^K; Y_1^N, Y_2^N|\tilde{S}_1^{\tilde{K}}, \tilde{S}_2^{\tilde{K}}\right)$.
    \end{lemma}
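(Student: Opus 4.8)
The plan is to reduce the lemma to a purely combinatorial statement about the deterministic model and then close with one entropy identity. Note that the lemma is invoked only for $\alpha\neq 1$ (the $\alpha=1$ instance of bound $(3)$ was handled separately), so I assume $n\neq m$ throughout, and by relabeling $n\leftrightarrow m$ it suffices to treat $n>m$. The combinatorial claim I would aim for is: for every code, $S_1^K$ is a deterministic function of $(Y_1^N,Y_2^N,\tilde{S}_1^{\tilde{K}},\tilde{S}_2^{\tilde{K}},S_1^K\oplus S_2^K)$. Granting this, put $C:=(Y_1^N,Y_2^N,\tilde{S}_1^{\tilde{K}},\tilde{S}_2^{\tilde{K}})$; then $H(S_1^K\mid C,S_1^K\oplus S_2^K)=0$, so the identity $H(A\mid C)=I(A;B\mid C)+H(A\mid B,C)$ with $A=S_1^K$, $B=S_1^K\oplus S_2^K$ gives $H(S_1^K\mid C)=I(S_1^K;S_1^K\oplus S_2^K\mid C)\le H(S_1^K\oplus S_2^K\mid C)\le H(S_1^K\oplus S_2^K\mid Y_1^N,\tilde{S}_1^{\tilde{K}},\tilde{S}_2^{\tilde{K}})$, the last step because conditioning reduces entropy. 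Since $S_1^K$ and $S_1^K\oplus S_2^K$ are each uniform on $\mathbb{F}_2^K$ and independent of $(\tilde{S}_1^{\tilde{K}},\tilde{S}_2^{\tilde{K}})$, we have $H(S_1^K\mid\tilde{S}_1^{\tilde{K}},\tilde{S}_2^{\tilde{K}})=H(S_1^K\oplus S_2^K\mid\tilde{S}_1^{\tilde{K}},\tilde{S}_2^{\tilde{K}})=K$; subtracting the displayed inequality from this common value yields exactly $I(S_1^K\oplus S_2^K;Y_1^N\mid\tilde{S}_1^{\tilde{K}},\tilde{S}_2^{\tilde{K}})\le I(S_1^K;Y_1^N,Y_2^N\mid\tilde{S}_1^{\tilde{K}},\tilde{S}_2^{\tilde{K}})$.

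To establish the combinatorial claim I would peel the deterministic model in three steps. Step 1 (invert the forward channel): with $q=n>m$, $(1)$ reads, per time slot, $Y_{1i}=X_{1i}\oplus\mathbf{G}^{n-m}X_{2i}$ and $Y_{2i}=\mathbf{G}^{n-m}X_{1i}\oplus X_{2i}$, hence $Y_{1i}\oplus\mathbf{G}^{n-m}Y_{2i}=(\mathbf{I}\oplus\mathbf{G}^{2(n-m)})X_{1i}$; since $\mathbf{G}$ is strictly lower triangular it is nilpotent, so $\mathbf{I}\oplus\mathbf{G}^{2(n-m)}$ is invertible over $\mathbb{F}_2$ (its inverse being the finite sum $\bigoplus_{k\geq 0}\mathbf{G}^{2k(n-m)}$), and $X_{1i}$ — and symmetrically $X_{2i}$ — is a function of $(Y_{1i},Y_{2i})$; thus $(Y_1^N,Y_2^N)$ and $(X_1^N,X_2^N)$ determine each other. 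Step 2 (recover the backward signals): by induction on $i$, $\tilde{X}_{ki}=\tilde{f}_{ki}(\tilde{S}_k^{\tilde{K}},Y_k^{i-1})$ is a function of $(\tilde{S}_k^{\tilde{K}},Y_k^N)$, and then $\tilde{Y}_{1i},\tilde{Y}_{2i}$ follow from $(2)$; so $(Y_1^N,Y_2^N,\tilde{S}_1^{\tilde{K}},\tilde{S}_2^{\tilde{K}})$ determines the whole backward transcript $(\tilde{X}_1^N,\tilde{X}_2^N,\tilde{Y}_1^N,\tilde{Y}_2^N)$ as well as $(X_1^N,X_2^N)$. Step 3 (resolve the sources): invoke $X_{ki}=f_{ki}(S_k^K,\tilde{Y}_k^{i-1})$ together with the genie $S_1^K\oplus S_2^K$ to pin down $(S_1^K,S_2^K)$ itself.

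The main obstacle is Step 3. Knowing the full forward/backward transcript and the value $c=S_1^K\oplus S_2^K$ still leaves, a priori, the family of pairs $(s_1,s_1\oplus c)$, and one must rule out two distinct candidates $s_1\neq s_1'$ that reproduce the identical transcript. This is where reliability of the forward computation must be used: a reliable (indeed a zero-error) code forces the map $S_1^K\mapsto Y_1^N$ to be injective for each fixed $(S_2^K,\tilde{S}_1^{\tilde{K}},\tilde{S}_2^{\tilde{K}})$ — two messages agreeing on $(S_2^K,\tilde{S}_1^{\tilde{K}},\tilde{S}_2^{\tilde{K}})$ but not on $S_1^K$ carry different forward functions at node $\tilde{1}$ and so cannot share the same $Y_1^N$ — and the analogous statement holds for $S_2^K\mapsto Y_2^N$ at node $\tilde{2}$; substituting $S_2^K=S_1^K\oplus c$ into these maps is then meant to close the fixed point. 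Making this watertight (in particular passing from zero-error to vanishing-error codes, and checking that feedback does not re-inject an undetectable common shift) is the delicate part; everything else is bookkeeping on deterministic dependencies. A cleaner alternative, if one is willing to use it, is the standard normalization that the encoders are injective in their own message, under which Step 3 is immediate and the lemma follows from Steps 1–2 and the entropy identity.
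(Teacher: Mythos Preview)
Your entropy reduction and Steps~1--2 are correct and mirror the paper exactly: the paper's Claim~1 is your channel inversion, and the recovery of the full backward transcript from $(Y_1^N,Y_2^N,\tilde S_1^{\tilde K},\tilde S_2^{\tilde K})$ is part of its Lemma~2 proof. The gap is Step~3. Your ``combinatorial claim'' that $S_1^K$ is \emph{determined} by $(Y_1^N,Y_2^N,\tilde S_1^{\tilde K},\tilde S_2^{\tilde K},S_1^K\oplus S_2^K)$ is simply false for arbitrary codes: take the trivial encoder $X_{1i}\equiv 0$, which discards $S_1^K$ entirely. Lemma~1 must hold for \emph{every} code (it is used inside the converse with no reliability slack), so you cannot appeal to zero error, and the ``standard normalization that encoders are injective in their own message'' is not a legitimate assumption in a converse.

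The fix is to aim for less. You do not need $H(S_1^K\mid C,S_1^K\oplus S_2^K)=0$; you only need $H(S_1^K\mid C)\le H(S_1^K\oplus S_2^K\mid Y_1^N,\tilde S_1^{\tilde K},\tilde S_2^{\tilde K})$. The paper obtains this from the Markov chain $S_1^K-(Y_1^N,Y_2^N,\tilde S_1^{\tilde K},\tilde S_2^{\tilde K})-S_2^K$ (its Lemma~2), which gives
\[
H(S_1^K\mid C)=H(S_1^K\mid C,S_2^K)=H(S_1^K\oplus S_2^K\mid C,S_2^K)\le H(S_1^K\oplus S_2^K\mid Y_1^N,\tilde S_1^{\tilde K},\tilde S_2^{\tilde K}).
\]
Crucially, this Markov chain is proved \emph{structurally}, with no reliability hypothesis: your Steps~1--2 already show that $(X_1^N,X_2^N,\tilde Y_1^N,\tilde Y_2^N)$ are functions of $C$, and then one checks that $I(S_1^K;S_2^K\mid C)\le 0$ by expanding $H(X_1^N,X_2^N\mid\cdot)$ via the chain rule and using only that $X_{ki}$ is a function of $(S_k^K,\tilde Y_k^{i-1})$. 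So the repair is to replace your Step~3 by this conditional-independence argument rather than trying to pin down $S_1^K$ itself.
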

    \begin{proof}
    \begin{align*}
    I\left(S_1^K\oplus S_2^K; Y_1^N|\tilde{S}_1^{\tilde{K}}, \tilde{S}_2^{\tilde{K}}\right) \stackrel{(a)}{=}&H\left(S_1^K|\tilde{S}_1^{\tilde{K}}, \tilde{S}_2^{\tilde{K}}\right)-H\left(S_1^K\oplus S_2^K|\tilde{S}_1^{\tilde{K}}, \tilde{S}_2^{\tilde{K}}, Y_1^N\right) \\
    \leq& H\left(S_1^K|\tilde{S}_1^{\tilde{K}}, \tilde{S}_2^{\tilde{K}}\right)-H\left(S_1^K|\tilde{S}_1^{\tilde{K}}, \tilde{S}_2^{\tilde{K}}, Y_1^N, Y_2^N, S_2^K\right) \\
    \stackrel{(b)}{=}& H\left(S_1^K|\tilde{S}_1^{\tilde{K}}, \tilde{S}_2^{\tilde{K}}\right)-H\left(S_1^K|\tilde{S}_1^{\tilde{K}}, \tilde{S}_2^{\tilde{K}}, Y_1^N, Y_2^N\right) \\
    =&I\left(S_1^K; Y_1^N, Y_2^N|\tilde{S}_1^{\tilde{K}}, \tilde{S}_2^{\tilde{K}}\right)
    \end{align*}
    where $(a)$ follows from the fact that $H(S_1^K|\tilde{S}_1^{\tilde{K}}, \tilde{S}_2^{\tilde{K}})=H(S_1^K)=H(S_1^K\oplus S_2^K)=H(S_1^K\oplus S_2^K|\tilde{S}_1^{\tilde{K}}, \tilde{S}_2^{\tilde{K}});$ and $(b)$ follows from $S_1^K - (Y_1^N, Y_2^N,\tilde{S}_1^{\tilde{K}}, \tilde{S}_2^{\tilde{K}}) - S_2^K$ (see Lemma $2$ below).
    \end{proof}
    \begin{lemma}
    $S_1^K - (Y_1^N, Y_2^N,\tilde{S}_1^{\tilde{K}}, \tilde{S}_2^{\tilde{K}}) - S_2^K$.
    \end{lemma}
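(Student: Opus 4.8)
The plan is to prove, by induction on the time index $i$, the stronger assertion that $S_1^K$ and $S_2^K$ are conditionally independent given $(Y_1^i, Y_2^i, \tilde{S}_1^{\tilde{K}}, \tilde{S}_2^{\tilde{K}})$ for every $i = 0, 1, \dots, N$; the case $i=N$ is exactly the claim. The base case $i=0$ is immediate, since $S_1^K, S_2^K, \tilde{S}_1^{\tilde{K}}, \tilde{S}_2^{\tilde{K}}$ are mutually independent. For the inductive step I would first isolate two structural facts, both consequences of the network being deterministic and the encoders being causal. $(a)$ Given a fixed value of $(Y_1^i, Y_2^i, \tilde{S}_1^{\tilde{K}}, \tilde{S}_2^{\tilde{K}})$, every backward received symbol $\tilde{Y}_{kt}$ with $t \le i$ is determined, because $\tilde{X}_{kt} = \tilde{f}_{kt}(\tilde{S}_k^{\tilde{K}}, Y_k^{t-1})$ and $t-1 < i$; consequently $X_{1,i+1} = f_{1,i+1}(S_1^K, \tilde{Y}_1^{i})$ is, under this conditioning, a deterministic function of $S_1^K$ alone, say $X_{1,i+1} = \psi_1(S_1^K)$, and symmetrically $X_{2,i+1} = \psi_2(S_2^K)$. $(b)$ The per-slot map $(X_{1,i+1}, X_{2,i+1}) \mapsto (Y_{1,i+1}, Y_{2,i+1})$ is a bijection on $\mathbb{F}_2^{q} \times \mathbb{F}_2^{q}$ whenever $n \neq m$: for instance, for $n \ge m$ one has $Y_1 = X_1 \oplus \mathbf{G}^{n-m}X_2$ and $Y_2 = \mathbf{G}^{n-m}X_1 \oplus X_2$, and composing this linear map with itself yields $\mathbf{I} \oplus \mathbf{G}^{2(n-m)}$ on each component, which is invertible since $\mathbf{G}^{2(n-m)}$ is strictly nilpotent for $n > m$; the case $n < m$ is symmetric. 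This is the only regime we need, as Lemma~1 --- and hence this lemma --- is invoked only when $\alpha \neq 1$.

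Now fix a value $c = (y_1^i, y_2^i, \tilde{s}_1, \tilde{s}_2)$ of the conditioning variables. By $(b)$, conditioning additionally on $(Y_{1,i+1}, Y_{2,i+1})$ is equivalent to imposing the event $\{\psi_1(S_1^K) = \xi_1\} \cap \{\psi_2(S_2^K) = \xi_2\}$, where $(\xi_1, \xi_2)$ is the unique preimage of the observed value of $(Y_{1,i+1}, Y_{2,i+1})$. This is a product event whose first factor depends only on $S_1^K$ and whose second factor depends only on $S_2^K$, so conditioning on it preserves the conditional independence of $S_1^K$ and $S_2^K$ given $c$ furnished by the induction hypothesis. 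Hence $S_1^K$ and $S_2^K$ are independent given $(Y_1^{i+1}, Y_2^{i+1}, \tilde{S}_1^{\tilde{K}}, \tilde{S}_2^{\tilde{K}})$, which closes the induction.

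I expect the main subtlety to lie in fact $(a)$: the forward codeword $X_{1,i+1}$ and the feedback sequence $\tilde{Y}_1^{i}$ are defined through mutually recursive causal maps, so one cannot directly claim that the two users' forward codewords decouple after conditioning on the outputs --- it is precisely the time-layered induction that disentangles this recursion. A secondary point worth noting is that the statement is used, and is valid, only for $\alpha \neq 1$: when $n = m$ one has $Y_{1i} = Y_{2i} = X_{1i} \oplus X_{2i}$, the bijection in $(b)$ degenerates, and $S_1^K$ and $S_2^K$ can remain correlated given the outputs (e.g., for a single uncoded use of the channel); this causes no difficulty here because the bound $(3)$ for $\alpha = 1$ is obtained directly from the elementary cut-set argument rather than through Lemma~1.
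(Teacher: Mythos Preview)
Your proof is correct. Both your argument and the paper's rest on the same two structural facts---the invertibility of $(X_1,X_2)\mapsto(Y_1,Y_2)$ for $m\neq n$ (the paper's Claim~1) and the observation that, once $(Y_1^{i-1},Y_2^{i-1},\tilde S_1^{\tilde K},\tilde S_2^{\tilde K})$ is fixed, the backward signals $\tilde Y_k^{i-1}$ are determined and hence $X_{ki}$ depends only on $S_k^K$---but the packaging differs. The paper works entirely in entropy calculus: it expands $I(S_1^K;S_2^K\mid Y_1^N,Y_2^N,\tilde S_1^{\tilde K},\tilde S_2^{\tilde K})$, replaces $(Y_1^N,Y_2^N)$ by $(X_1^N,X_2^N)$ via the bijection, telescopes with the chain rule, and bounds the resulting sums by ``conditioning reduces entropy.'' Your approach is a direct probabilistic induction: at each step you note that the new observation $(Y_{1,i+1},Y_{2,i+1})$ is, under the current conditioning, in bijection with a pair $(\psi_1(S_1^K),\psi_2(S_2^K))$, so further conditioning on it is conditioning on a product event and preserves independence. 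Your route is shorter and makes the Markov structure more transparent; the paper's route has the minor advantage of staying within the same entropy bookkeeping used elsewhere in the converse. Your remark that the lemma is only needed (and only true) for $\alpha\neq 1$ matches the paper exactly.
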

    \begin{proof}
    \begin{align*}
    &I\left(S_1^K; S_2^K|Y_1^N, Y_2^N, \tilde{S}_1^{\tilde{K}}, \tilde{S}_2^{\tilde{K}}\right) \\
    &= I\left(S_1^K; S_2^K, Y_1^N, Y_2^N| \tilde{S}_1^{\tilde{K}}, \tilde{S}_2^{\tilde{K}}\right)-I\left(S_1^K; Y_1^N, Y_2^N| \tilde{S}_1^{\tilde{K}}, \tilde{S}_2^{\tilde{K}}\right)
                    \end{align*}
    \begin{align*}
    &= I\left(S_1^K; Y_1^N, Y_2^N| \tilde{S}_1^{\tilde{K}}, \tilde{S}_2^{\tilde{K}}, S_2^K\right)-I\left(S_1^K; Y_1^N, Y_2^N| \tilde{S}_1^{\tilde{K}}, \tilde{S}_2^{\tilde{K}}\right)\\
    &= -H\left(Y_1^N, Y_2^N| \tilde{S}_1^{\tilde{K}}, \tilde{S}_2^{\tilde{K}}\right)+H\left(Y_1^N, Y_2^N|\tilde{S}_1^{\tilde{K}}, \tilde{S}_2^{\tilde{K}}, S_1^K\right)+H\left(Y_1^N, Y_2^N|\tilde{S}_1^{\tilde{K}}, \tilde{S}_2^{\tilde{K}}, S_2^K\right)\\
    &\stackrel{(a)}{=} -H\left(X_1^N, X_2^N| \tilde{S}_1^{\tilde{K}}, \tilde{S}_2^{\tilde{K}}\right)+H\left(X_1^N, X_2^N|\tilde{S}_1^{\tilde{K}}, \tilde{S}_2^{\tilde{K}}, S_1^K\right)+H\left(X_1^N, X_2^N|\tilde{S}_1^{\tilde{K}}, \tilde{S}_2^{\tilde{K}}, S_2^K\right)\\
    &\stackrel{(b)}{=}-\sum H\left(X_{1i}, X_{2i}| \tilde{S}_1^{\tilde{K}}, \tilde{S}_2^{\tilde{K}}, X_{1}^{i-1}, X_{2}^{i-1}\right)\\
    &\ \ \ \ + \sum H\left(X_{1i}, X_{2i}|\tilde{S}_1^{\tilde{K}}, \tilde{S}_2^{\tilde{K}}, S_1^K, X_{1}^{i-1}, X_{2}^{i-1}, Y_{1}^{i-1}, Y_{2}^{i-1}, \tilde{X}_{1}^{i-1}, \tilde{X}_{2}^{i-1}, \tilde{Y}_{1}^{i-1}, \tilde{Y}_{2}^{i-1}\right) \\
    &\ \ \ \ + \sum H\left(X_{1i}, X_{2i}|\tilde{S}_1^{\tilde{K}}, \tilde{S}_2^{\tilde{K}}, S_2^K, X_{1}^{i-1}, X_{2}^{i-1}, Y_{1}^{i-1}, Y_{2}^{i-1}, \tilde{X}_{1}^{i-1}, \tilde{X}_{2}^{i-1}, \tilde{Y}_{1}^{i-1}, \tilde{Y}_{2}^{i-1}\right) \\
    &\stackrel{(c)}{=} -\sum H\left(X_{1i}, X_{2i}| \tilde{S}_1^{\tilde{K}}, \tilde{S}_2^{\tilde{K}}, X_{1}^{i-1}, X_{2}^{i-1}\right)+ \sum H\left(X_{2i}|\tilde{S}_1^{\tilde{K}}, \tilde{S}_2^{\tilde{K}}, S_1^K, X_{1}^{i-1}, X_{2}^{i-1}\right) \\
    &\ \ \ \ + \sum H\left(X_{1i}|\tilde{S}_1^{\tilde{K}}, \tilde{S}_2^{\tilde{K}}, S_2^K, X_{1}^{i-1}, X_{2}^{i-1}\right) \\
    &\stackrel{(d)}{=}-\sum \left[H\left(X_{1i}|\tilde{S}_1^{\tilde{K}}, \tilde{S}_2^{\tilde{K}}, X_{1}^{i-1}, X_{2}^{i-1}\right) -H\left(X_{1i}|\tilde{S}_1^{\tilde{K}}, \tilde{S}_2^{\tilde{K}}, S_2^K, X_{1}^{i-1}, X_{2}^{i-1}\right)\right] \\
    &\ \ \ \ -\sum\left[H\left(X_{2i}|\tilde{S}_1^{\tilde{K}}, \tilde{S}_2^{\tilde{K}}, X_{1}^{i}, X_{2}^{i-1}\right)-H\left(X_{2i}|\tilde{S}_1^{\tilde{K}}, \tilde{S}_2^{\tilde{K}}, S_1^K, X_{1}^{i-1}, X_{2}^{i-1}, X_{1i}\right)\right]
    \leq 0
    \end{align*}
    where $(a)$ follows from the fact that $(X_1, X_2)$ is a function of $(Y_1, Y_2)$ (see Claim $1$ below); $(b)$ follows from the fact that $(Y_1^{i-1}, Y_2^{i-1})$ is a function of $(X_1^{i-1}, X_2^{i-1}),$ $(\tilde{X}_{1}^{i-1}, \tilde{X}_{2}^{i-1})$ is a function of $(\tilde{S}_1^{\tilde{K}}, \tilde{S}_2^{\tilde{K}}, Y_1^{i-1}, Y_2^{i-1}),$ and $(\tilde{Y}_{1}^{i-1}, \tilde{Y}_{2}^{i-1})$ is a function of $(\tilde{X}_{1}^{i-1}, \tilde{X}_{2}^{i-1});$ $(c)$ and $(d)$ follow from the fact that $X_{ki}$ is a function of $(S_{k}^K, \tilde{Y}_{k}^{i-1}),$ $k=1, 2$. This completes the converse proof.
    \end{proof}
    \begin{claim}
    For $ \alpha \neq 1$ (i.e., $m \neq n$), $(X_1, X_2)$ is a function of $(Y_1, Y_2).$
    \end{claim}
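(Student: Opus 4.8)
The plan is to give an explicit linear inversion formula recovering $(X_1,X_2)$ from $(Y_1,Y_2)$, the key leverage being that the shift matrix $\mathbf{G}$ is nilpotent, so any matrix of the form $\mathbf{I}\oplus\mathbf{G}^{k}$ with $k\geq 1$ is invertible over $\mathbb{F}_2$.

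First I would reduce to a canonical form. By relabeling I may assume $n>m$ (the case $m>n$ follows by interchanging the roles of $Y_1$ and $Y_2$ in what follows). Then $q=\max(m,n)=n$, so $\mathbf{G}^{q-n}=\mathbf{I}$ and $\mathbf{G}^{q-m}=\mathbf{G}^{d}$ with $d:=n-m\geq 1$, and the channel law reads $Y_1=X_1\oplus\mathbf{G}^{d}X_2$ and $Y_2=\mathbf{G}^{d}X_1\oplus X_2$. Next I would eliminate one unknown: adding $\mathbf{G}^{d}Y_2$ to $Y_1$ and using that we work in $\mathbb{F}_2$ (so the two copies of $\mathbf{G}^{d}X_2$ cancel) gives
\begin{align*}
Y_1\oplus\mathbf{G}^{d}Y_2=(\mathbf{I}\oplus\mathbf{G}^{2d})X_1,
\end{align*}
and symmetrically $Y_2\oplus\mathbf{G}^{d}Y_1=(\mathbf{I}\oplus\mathbf{G}^{2d})X_2$.

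Then I would invoke invertibility. Since $\mathbf{G}$ is the $q\times q$ shift matrix, $\mathbf{G}^{q}=\mathbf{0}$, so $\mathbf{G}^{2d}$ is nilpotent and $(\mathbf{I}\oplus\mathbf{G}^{2d})^{-1}=\bigoplus_{j\geq 0}\mathbf{G}^{2dj}$ is a well-defined finite matrix over $\mathbb{F}_2$. Hence $X_1=(\mathbf{I}\oplus\mathbf{G}^{2d})^{-1}(Y_1\oplus\mathbf{G}^{d}Y_2)$ and $X_2=(\mathbf{I}\oplus\mathbf{G}^{2d})^{-1}(Y_2\oplus\mathbf{G}^{d}Y_1)$ exhibit $(X_1,X_2)$ as a deterministic (indeed $\mathbb{F}_2$-linear) function of $(Y_1,Y_2)$, as claimed. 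Applied at each time index this also yields that $(X_{1i},X_{2i})$ is a function of $(Y_{1i},Y_{2i})$, which is the form actually used in the proof of Lemma~$2$.

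There is no genuinely hard step; the one point needing care is to confirm that the hypothesis $\alpha\neq 1$ (equivalently $m\neq n$, i.e., $d\geq 1$) is exactly what makes the argument work. If $m=n$ then $d=0$, the matrix $\mathbf{I}\oplus\mathbf{G}^{2d}=\mathbf{I}\oplus\mathbf{I}$ is the zero matrix, the elimination collapses, and in fact $Y_1=Y_2=X_1\oplus X_2$, so $X_1$ and $X_2$ truly cannot be disentangled. This confirms both that the restriction $\alpha\neq1$ is necessary and that nilpotency of $\mathbf{G}$ is the sole ingredient that makes the construction go through when $d\geq 1$.
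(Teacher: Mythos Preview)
Your proof is correct and follows essentially the same approach as the paper: eliminate one unknown by forming $Y_1\oplus\mathbf{G}^{n-m}Y_2=(\mathbf{I}\oplus\mathbf{G}^{2(n-m)})X_1$, then invoke invertibility of $\mathbf{I}\oplus\mathbf{G}^{2(n-m)}$ when $m\neq n$. You supply extra detail the paper omits (the explicit inverse via the finite geometric series in the nilpotent $\mathbf{G}^{2d}$, and the observation that $m=n$ makes the matrix identically zero), but the route is the same.
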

    \begin{proof}
    It suffices to consider the case of $m <n,$ as the other case follows by symmetry. From $(1),$ we get:
    \begin{align*}
    Y_1\oplus \left(\mathbf{G}^{n-m}Y_2\right)=\left(\mathbf{I}_n\oplus \mathbf{G}^{2\left(n-m\right)}\right)X_1.
    \end{align*}
    Note that $\mathbf{I}_n\oplus \mathbf{G}^{2(n-m)}$ is invertible when $m \neq n.$ Hence, $X_1$ is a function of $(Y_1, Y_2).$ By symmetry, $X_2$ is a function of $(Y_1, Y_2).$
    \end{proof}

\section{Conclusion}
    We investigated the role of interaction for computation problem settings. Our main contribution lies in the complete characterization of the two-way computation capacity region for the four-node ADT deterministic network. As a consequence of this result, we showed that not only interaction offers a net increase in capacity, but also it leads us to get all the way to perfect-feedback computation capacities simultaneously in both directions. In view of~\cite{Suh17}, this result is another instance in which interaction provides a huge gain. One future work of interest would be to explore a variety of network communication/computation scenarios in which the similar phenomenon occurs.

\appendices
%\begin{shaded}
\section{Proof of Corollary $1$} \label{app:cor1}
By symmetry, it suffices to focus on \textbf{(I)}. The case of \textbf{(II)} follows similarly. When $\alpha < \frac{2}{3}$ and $\tilde{\alpha} > \frac{3}{2},$ we can clearly see from $(7)$ to $(10)$ that:
$C_{\sf pf}  = \frac{2}{3}n > m = C_{\sf no};$ and $\tilde{C}_{\sf pf} = \frac{2}{3}\tilde{m} > \tilde{n} = \tilde{C}_{\sf no}.$ In this regime, the condition $C_{\sf pf} - C_{\sf no} \leq \tilde{m}-\tilde{C}_{\sf pf}$ implies $\frac{2}{3}n-m \leq \frac{1}{3}\tilde{m}.$ This then yields:
\begin{align*}
C_{\sf pf} + \tilde{C}_{\sf pf} = \frac{2}{3}n + \frac{2}{3}\tilde{m} \leq m+\tilde{m}.
\end{align*}
Also, the condition $\tilde{C}_{\sf pf} -\tilde{C}_{\sf no} \leq n-C_{\sf pf}$ implies $\frac{2}{3}\tilde{m}-\tilde{n} \leq \frac{1}{3}n.$ This then gives:
\begin{align*}
C_{\sf pf} + \tilde{C}_{\sf pf} = \frac{2}{3}n + \frac{2}{3}\tilde{m} \leq n+\tilde{n}.
\end{align*}
This completes the proof.

\section{Achievability for $(m, n)=(1,2),\ (\tilde{m}, \tilde{n}) = (1,0)$, and arbitrary $(L, M)$} \label{app:eg}
The achievability consists of four parts:

\textit{1) Time (3L$+$1)(i$-$1)$+$2$\ell$ at Stage 2i$-$1:} For time $\ell=1, \dots, L,$ the transmission strategy at node $1$ and $2$ is to send fresh forward symbols along with the past received signals.
Note that the signals in the first bracket below refer to fresh forward symbols; and the signals in the second bracket refer to those received previously from $(23)$ and $(22)$ in the current layer. We note that the idea of interference neutralization is also employed by adapting each node's transmitted
signal to own symbols. This ensures modulo-$2$ sum function computations on the bottom level of node $\tilde{1}$ and $\tilde{2}$ for each time. Here we assume that if the index of a symbol is non-positive, we set the symbol as \emph{null.}
    \begin{align}
    &\text{node $1:$}
        \begin{bmatrix}
           a_{4((i-1)L+\ell)-1} \\
           a_{4((i-1)L+\ell)}
        \end{bmatrix} \\
        &\oplus
        \begin{bmatrix}
           b_{4(((i-1)L+\ell)-1)-3}\oplus b_{4(((i-1)L+\ell)-1)} \oplus a_{4(((i-1)L+\ell)-2)-2} \oplus \left[\tilde{b}_{2(((i-1)L+\ell)-1)}\oplus \tilde{a}_{2(((i-1)L+\ell)-2)-1} \right] \\
           \textcolor{black}{a_{4(((i-1)L+\ell)-1)-2}\oplus a_{4(((i-1)L+\ell)-2)}\oplus a_{4(((i-1)L+\ell)-3)-2}}
        \end{bmatrix}, \nonumber \\
    &\text{node $2:$}
        \begin{bmatrix}
           b_{4((i-1)L+\ell)} \\
           b_{4((i-1)L+\ell)-1}
        \end{bmatrix} \\
        &\oplus
        \begin{bmatrix}
            a_{4(((i-1)L+\ell)-1)-2}\oplus a_{4(((i-1)L+\ell)-1)-1}\oplus b_{4(((i-1)L+\ell)-2)-3} \oplus
            \left[\tilde{a}_{2(((i-1)L+\ell)-1)-1}\oplus \tilde{b}_{2(((i-1)L+\ell)-2)} \right] \\
            \textcolor{black}{b_{4(((i-1)L+\ell)-1)-3}\oplus b_{4(((i-1)L+\ell)-2)-1}\oplus b_{4(((i-1)L+\ell)-3)-3}}
        \end{bmatrix}. \nonumber
    \end{align}
    With fresh backward symbols, past computed functions, and the received signals from the above, node $\tilde{1}$ and $\tilde{2}$ deliver:
    \begin{align}
    &\text{node $\tilde{1}:$}
        \begin{bmatrix}
            \tilde{a}_{2((i-1)L+\ell)-1}
        \end{bmatrix} \oplus
        \begin{bmatrix}
        F_{4((i-1)L+\ell)-2}
        \end{bmatrix} \\
        &\oplus
        \begin{bmatrix}
            a_{4((i-1)L+\ell)-1}\oplus b_{4(((i-1)L+\ell)-1)-3}\oplus b_{4(((i-1)L+\ell)-1)}\oplus \textcolor{black}{b_{4(((i-1)L+\ell)-2)-2}}\oplus \tilde{b}_{2(((i-1)L+\ell)-1)}
        \end{bmatrix}, \nonumber \\
    &\text{node $\tilde{2}:$}
        \begin{bmatrix}
            \tilde{b}_{2((i-1)L+\ell)}
        \end{bmatrix}
        \oplus
        \begin{bmatrix}
        F_{4((i-1)L+\ell)-3}
        \end{bmatrix} \\
        &\oplus
        \begin{bmatrix}
            b_{4((i-1)L+\ell)}\oplus a_{4(((i-1)L+\ell)-1)-2}\oplus a_{4(((i-1)L+\ell)-1)-1} \oplus \textcolor{black}{a_{4(((i-1)L+\ell)-2)-3}}\oplus \tilde{a}_{2(((i-1)L+\ell)-1)-1}
        \end{bmatrix}. \nonumber
    \end{align}

\textit{2) Time (3L$+$1)(i$-$1)$+$2$\ell-$1 at Stage 2i$-$1:} For time $\ell=1, \dots, L,$ the transmission strategy at node $1$ and $2$ is as follows. The idea is similar to that in part \textit{1)}, but here the formulae in the second bracket below refer to the signals received from $(47)$ and $(46)$ at part \textit{4)} of Layer $i-1$. Again, modulo-$2$ sum function computations on the bottom level of node $\tilde{1}$ and $\tilde{2}$ are possible for each time.
    \begin{align}
    &\text{node $1:$}
        \begin{bmatrix}
           a_{4((i-1)L+\ell)-3} \\
           a_{4((i-1)L+\ell)-2}
        \end{bmatrix} \\
        &\oplus
        \begin{bmatrix}
           a_{4(((i-1)-(2^{L-\ell+1}-2))L-(L-\ell))}\oplus a_{4(((i-1)-(2^{L-\ell+1}-2))L-(L-\ell+1))-2}\oplus \tilde{F}_{2(((i-1)-(2^{L-\ell+1}-2))L-(L-\ell+1))-1} \\
           \textcolor{black}{a_{4(((i-1)-(2^{L-\ell+1}-2))L-(L-\ell+1))}\oplus a_{4(((i-1)-(2^{L-\ell+1}-2))L-(L-\ell+2))-2}}
        \end{bmatrix}, \nonumber \\
    &\text{node $2:$}
        \begin{bmatrix}
           b_{4((i-1)L+\ell)-2} \\
           b_{4((i-1)L+\ell)-3}
        \end{bmatrix} \\
        &\oplus
        \begin{bmatrix}
            b_{4(((i-1)-(2^{L-\ell+1}-2))L-(L-\ell))-1}\oplus b_{4(((i-1)-(2^{L-\ell+1}-2))L-(L-\ell+1))-3} \oplus \tilde{F}_{2(((i-1)-(2^{L-\ell+1}-2))L-(L-\ell+1))} \\
            \textcolor{black}{b_{4(((i-1)-(2^{L-\ell+1}-2))L-(L-\ell+1))-1} \oplus b_{4(((i-1)-(2^{L-\ell+1}-2))L-(L-\ell+2))-3}}
        \end{bmatrix}. \nonumber
    \end{align}
    In addition, using the newly decoded $F_{4(((i-1)-(2^{\ell}-2))L-(\ell-1))-3}$ $(50)$ and $F_{4(((i-1)-(2^{\ell}-2))L-(\ell-1))-2}$ $(51)$ at part \textit{4)} of Layer $i-1$ and some of the previously received signals, node $\tilde{1}$ and $\tilde{2}$ transmit:
    \begin{align}
    &\text{node $\tilde{1}:$}\\
        &\begin{bmatrix}
            \textcolor{black}{b_{4(((i-1)-(2^{\ell}-2))L-(\ell-1))-3}}\oplus b_{4((i-(2^{\ell+1}-2))L-(\ell-1))}\oplus b_{4((i-(2^{\ell+1}-2))L-\ell)-2}\oplus \tilde{F}_{2((i-(2^{\ell+1}-2))L-\ell)-1}
        \end{bmatrix}, \nonumber \\
    &\text{node $\tilde{2}:$}\\
        &\begin{bmatrix}
            \textcolor{black}{a_{4(((i-1)-(2^{\ell}-2))L-(\ell-1))-2}}\oplus a_{4((i-(2^{\ell+1}-2))L-(\ell-1))-1}\oplus a_{4((i-(2^{\ell+1}-2))L-\ell)-3}\oplus \tilde{F}_{2((i-(2^{\ell+1}-2))L-\ell)}
        \end{bmatrix}. \nonumber
    \end{align}
    Here, one can see that unless the indices of signals $(26)$ and $(27)$ are positive, the newly decoded functions enable node $1$ and $2$ to obtain additional $\tilde{F}_{2((i-(2^{\ell+1}-2))L-\ell)}$ and $\tilde{F}_{2((i-(2^{\ell+1}-2))L-\ell)-1}$ using their own symbols. Throughout part \textit{1)} and \textit{2)}, the available function computations are as follows:
    \begin{align}
    &\text{node $1:$} \{\tilde{F}_{2((i-(2^{\ell+1}-2))L-\ell)}\}_{\ell=1}^{L}, \\
    &\text{node $2:$} \{\tilde{F}_{2((i-(2^{\ell+1}-2))L-\ell)-1}\}_{\ell=1}^{L}, \\
    &\text{node $\tilde{1}:$} \{(F_{4((i-1)L+\ell)-2}, F_{4((i-1)L+\ell)})\}_{\ell=1}^{L}, \\
    &\text{node $\tilde{2}:$} \{(F_{4((i-1)L+\ell)-3}, F_{4((i-1)L+\ell)-1})\}_{\ell=1}^{L}.
    \end{align}

\textit{3-1) Time (3L$+$1)(i$-$1)+2L$+$1 at Stage 2i:} With the received signals at time $(3L+1)(i-1)+2L,$ the transmission scheme is as follows.
    \begin{align}
    &\text{node $1:$}
        \begin{bmatrix}
           F_{4iL}\oplus b_{4iL-3}\oplus a_{4(iL-1)-2}\oplus \left[\tilde{b}_{2iL}\oplus \tilde{a}_{2(iL-1)-1}\right] \\
           \textcolor{black}{a_{4iL-1}\oplus a_{4iL-2}\oplus a_{4(iL-1)} \oplus a_{4(iL-2)-2}}
        \end{bmatrix}, \\
    &\text{node $2:$}
        \begin{bmatrix}
           F_{4iL-1}\oplus a_{4iL-2}\oplus b_{4(iL-1)-3}\oplus \left[\tilde{a}_{2iL-1}\oplus \tilde{b}_{2(iL-1)}\right] \\
           \textcolor{black}{b_{4iL}\oplus b_{4iL-3}\oplus b_{4(iL-1)-1} \oplus  b_{4(iL-2)-3}}
        \end{bmatrix},\\
    &\text{node $\tilde{1}:$}
        \begin{bmatrix}
            b_{4iL-3}\oplus \textcolor{black}{b_{4(iL-1)-2}}\oplus \tilde{F}_{2iL}
        \end{bmatrix}, \\
    &\text{node $\tilde{2}:$}
        \begin{bmatrix}
            a_{4iL-2}\oplus \textcolor{black}{b_{4(iL-1)-3}}\oplus \tilde{F}_{2iL-1}
        \end{bmatrix}.
    \end{align}
Together with the past received signals at time $(3L+1)(i-1)+2L,$ node $\tilde{1}$ and $\tilde{2}$ can obtain $F_{4iL-1}$ and $F_{4iL}$ from the above strategy.

 \textit{3-2) Time (3L$+$1)(i$-$1)$+$2L$+$2 at Stage 2i:} With the received signals at time $(3L+1)(i-1)+2L+1,$ the transmission scheme is as follows.
    \begin{align}
    &\text{node $1:$}
        \begin{bmatrix}
           \tilde{F}_{2iL-1} \\
           a_{4iL-3}\oplus a_{4(iL-1)-2}
        \end{bmatrix}, \\
    &\text{node $2:$}
        \begin{bmatrix}
           \tilde{F}_{2iL} \\
           b_{4iL-2}\oplus b_{4(iL-1)-3}
        \end{bmatrix}, \\
    &\text{node $\tilde{1}:$}
        \begin{bmatrix}
            \tilde{F}_{2iL-1}
        \end{bmatrix}, \\
    &\text{node $\tilde{2}:$}
        \begin{bmatrix}
            \tilde{F}_{2iL}
        \end{bmatrix}.
    \end{align}
 Exploiting the signals on the top level at time $(3L+1)(i-1)+2L+1,$ node $\tilde{1}$ and $\tilde{2}$ can obtain $F_{4iL-3}$ and $F_{4iL-2}.$ In turn, the available function computations from parts \textit{3-1)} and \textit{3-2)} are as follows:
    \begin{align}
    &\text{node $1:$} (\tilde{F}_{2iL-1}, \tilde{F}_{2iL}), \\
    &\text{node $2:$} (\tilde{F}_{2iL-1}, \tilde{F}_{2iL}),
    \end{align}
    \begin{align}
    &\text{node $\tilde{1}:$} (F_{4iL-3}, F_{4iL-1}), \\
    &\text{node $\tilde{2}:$} (F_{4iL-2}, F_{4iL}).
    \end{align}

\textit{4) Time (3L$+$1)(i$-$1)$+$2L$+\ell$ at Stage 2i:} For time $\ell=3, \dots, L+1,$ the transmission strategy at node $1$ and $2$ is described as below. The idea is to exploit newly decoded $\tilde{F}_{2((i-(2^{\ell-1}-2))L-(\ell-2))}$ $(28)$ and $\tilde{F}_{2((i-(2^{\ell-1}-2))L-(\ell-2))-1}$ $(29)$ from part \textit{2)} of the current layer. In turn, node $\tilde{1}$ and $\tilde{2}$ can obtain two additional functions of interest for each time.
    \begin{align}
    &\text{node $1:$} \\
        &\begin{bmatrix}
           F_{4((i-(2^{\ell-1}-2))L-(\ell-2))-3}\oplus F_{4((i-(2^{\ell-1}-2))L-(\ell-2))}\oplus \textcolor{black}{\tilde{a}_{2((i-(2^{\ell-1}-2))L-(\ell-2))}}\oplus \tilde{a}_{2((i-(2^{\ell-1}-2))L-(\ell-1))-1} \\
           \textcolor{black}{\tilde{F}_{2((i-(2^{\ell-1}-2))L-(\ell-2))-1}}
        \end{bmatrix}, \nonumber \\
    &\text{node $2:$} \\
        &\begin{bmatrix}
           F_{4((i-(2^{\ell-1}-2))L-(\ell-2))-2}\oplus F_{4((i-(2^{\ell-1}-2))L-(\ell-2))-1}\oplus \textcolor{black}{\tilde{b}_{2((i-(2^{\ell-1}-2))L-(\ell-2))-1}}\oplus \tilde{b}_{2((i-(2^{\ell-1}-2))L-(\ell-1))} \\
           \textcolor{black}{\tilde{F}_{2((i-(2^{\ell-1}-2))L-(\ell-2))}}
        \end{bmatrix}. \nonumber
    \end{align}
    With the newly decoded $F_{4((i-(2^{\ell-2}-2))L-(\ell-3))-1}$ and $F_{4((i-(2^{\ell-2}-2))L-(\ell-3))}$ at the previous time of the stage, node $\tilde{1}$ and $\tilde{2}$ deliver:
    \begin{align}
    &\text{node $\tilde{1}:$}
            \begin{aligned}
            &\begin{bmatrix}
            \textcolor{black}{b_{4((i-(2^{\ell-2}-2))L-(\ell-3))-1}}\oplus b_{4((i-(2^{\ell-2}-2))L-(\ell-2))-3}\oplus b_{4((i-(2^{\ell-2}-2))L-(\ell-2))}
            \end{bmatrix} \\
            &\oplus
            \begin{bmatrix}
            b_{4((i-(2^{\ell-2}-2))L-(\ell-1))-2}\oplus \tilde{F}_{2((i-(2^{\ell-2}-2))L-(\ell-2))}
            \end{bmatrix}
            \end{aligned}, \\
    &\text{node $\tilde{2}:$}
            \begin{aligned}
            &\begin{bmatrix}
            \textcolor{black}{a_{4((i-(2^{\ell-2}-2))L-(\ell-3))}}\oplus a_{4((i-(2^{\ell-2}-2))L-(\ell-2))-2}\oplus a_{4((i-(2^{\ell-2}-2))L-(\ell-2))-1}
            \end{bmatrix} \\
            &\oplus
            \begin{bmatrix}
            a_{4((i-(2^{\ell-2}-2))L-(\ell-1))-3}\oplus \tilde{F}_{2((i-(2^{\ell-2}-2))L-(\ell-2))-1}
            \end{bmatrix}
            \end{aligned}.
    \end{align}
    One can readily see that  for each time, node $1$ and $2$ can obtain an additional interested function using their own symbols. Consequently, the available function computations in part \textit{4)} are as follows:
    \begin{align}
    &\text{node $1:$} \{\tilde{F}_{2((i-(2^{\ell-2}-2))L-(\ell-2))-1}\}_{\ell=3}^{L+1}, \\
    &\text{node $2:$} \{\tilde{F}_{2((i-(2^{\ell-2}-2))L-(\ell-2))}\}_{\ell=3}^{L+1}, \\
    &\text{node $\tilde{1}:$} \{(F_{4((i-(2^{\ell-1}-2))L-(\ell-2))-3}, F_{4((i-(2^{\ell-1}-2))L-(\ell-2))-1})\}_{\ell=3}^{L+1}, \\
    &\text{node $\tilde{2}:$} \{(F_{4((i-(2^{\ell-1}-2))L-(\ell-2))-2}, F_{4((i-(2^{\ell-1}-2))L-(\ell-2))})\}_{\ell=3}^{L+1}.
    \end{align}

    Recall Remark $3$ that unoccupied time slots (where each node keeps silent as the indices of signals from $(44)$ to $(47)$ above are less than or equal to zero) at the second stage of a layer cause inefficiency in the performance. However, one can see at certain moments, the second stage of a layer will eventually be fully packed. From $(50)$ and $(51),$ we can verify this by putting $\ell = L+1$ into the indices of $(50)$ and $(51)$, e.g., $4((i-(2^{\ell-1}-2))L-(\ell-2))-3,$ and check what condition of $i$ provides the indices greater than zero. A straightforward calculation says that as long as $i \geq 2^{L}-1,$ each layer's second stage remains to be fully packed.

    Essentially, we can calculate the total number of vacant time slots.
    First, we examine the condition for which the number of unoccupied time slots is less than or equal to $1.$ Similar to the above, putting $\ell = L$ into the indices of $(50)$ and $(51)$ allows us to see that as long as $i \geq 2^{L-1}-1,$ the number of unoccupied time slots is less than or equal to $1.$ Hence the number of layers in which the vacant time slot of the layer is $1,$ is: $(2^L-1)-(2^{L-1}-1) = 2^{L-1}.$ Applying a similar method, one can check that there are  $2^{L-\ell}$ layers whose unoccupied time slots are $\ell\ (\ell = 2, \dots, L-1).$ Note that the maximum number of unoccupied time slots at the second stage of each layer is $L-1,$ as the first two time slots of the second stage are allocated for computing functions; see parts \textit{3-1)} and \textit{3-2)}.
    Now using the formula of $\sum_{\ell=1}^{L-1}(L-\ell)  2^{\ell} = 2^{L+1}-2L-2,$ we see that the total number of unoccupied time slots is $2^{L+1}-2L-2.$

    At the end of Layer $M,$ we therefore observe that our scheme ensures $4L(M-(2^{L+1}-2L-2))$ and $2L(M-(2^{L+1}-2L-2))$ forward and backward-message computations during $(3L+1)M$ time slots, and thus can achieve
    $(R, \tilde{R})= \left(\frac{4L(M-(2^{L+1}-2L-2))}{(3L+1)M}, \frac{2L(M-(2^{L+1}-2L-2))}{(3L+1)M}\right).$
    By setting $M = (2+\epsilon)^{L}$ where $\epsilon > 0,$ and letting $L \rightarrow \infty,$ the rate pair becomes $(R, \tilde{R}) = (\frac{4}{3}, \frac{2}{3})=(C_{\sf pf}, \tilde{C}_{\sf pf}).$ This completes the proof.
%\end{shaded}

\section{Proof of Generalization to Arbitrary $(m,n),\ (\tilde{m}, \tilde{n})$}\label{app:Feedback3}
    We now prove the achievability for arbitrary $(m,n)$ and $(\tilde{m}, \tilde{n}).$ The idea is to use the network decomposition in~\cite{Suh12} (also illustrated in Fig. $9$). This idea provides a conceptually simpler proof by decomposing a general $(m,n),\ (\tilde{m},\tilde{n})$ channel into multiple elementary subchannels and taking a proper matching across forward and backward subchannels. See Theorem $2$ (stated below) for the identified elementary subchannels. We will use this to complete proof in the sequel.
    \begin{thm}[Network Decomposition]
    For an arbitrary $(m,n)$ channel, the following network decomposition holds:
    \begin{align}
    &(m,n) \longrightarrow (0,1)^{n-2m}\times (1,2)^{m}, ~ \alpha \in [0, 1/2]; \\
    &(m,n) \longrightarrow (1,2)^{2n-3m}\times (2,3)^{2m-n}, ~ \alpha \in [1/2, 2/3];
    \end{align}
    \begin{align}
    &(m,n) \longrightarrow (2,1)^{2m-3n}\times (3,2)^{2n-m}, ~ \alpha \in [3/2, 2]; \\
    &(m,n) \longrightarrow (1,0)^{m-2n}\times (2,1)^{n}, ~ \alpha \geq 2.
    \end{align}
    Here we use the symbol $\times$ for the concatenation of orthogonal channels, with $(i,j)^{\ell}$ denoting the $\ell$-fold concatenation of the $(i,j)$ channel.
    \end{thm}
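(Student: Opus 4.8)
The plan is to prove each of the four decompositions constructively: for every regime of $\alpha=m/n$ I would exhibit an explicit partition --- a ``coloring'' --- of the $n$ signal levels of the $(m,n)$ channel into orthogonal blocks, each block being isomorphic to one of the listed elementary channels, so that the channel is literally the stated concatenation (which is exactly what later licenses running the example schemes of Section~\ref{sec:ach} separately on matched forward/backward sub-channels in Appendix~\ref{app:Feedback3}). First I would halve the casework: the relabelling that interchanges the ``direct'' and ``cross'' roles, i.e.\ $(m,n)\mapsto(n,m)$ together with swapping the two transmitters and the two receivers, is a channel isomorphism that carries $(0,1)\leftrightarrow(1,0)$, $(1,2)\leftrightarrow(2,1)$, $(2,3)\leftrightarrow(3,2)$ and sends $\alpha$ to $1/\alpha$; hence the regimes $\alpha\in[3/2,2]$ and $\alpha\ge 2$ reduce to $\alpha\in[1/2,2/3]$ and $\alpha\in[0,1/2]$ respectively, and it suffices to treat $m<n$.

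Next I would fix the bit-level picture coming from $(1)$: for $m<n$ one has $q=n$, receiver $1$ sees $x_{1,i}$ at level $i\le n-m$ and $x_{1,i}\oplus x_{2,\,i-(n-m)}$ at level $i>n-m$, and symmetrically at receiver $2$; in particular transmit level $(k,j)$ is ``private'' (reaches only its own receiver) iff $j>m$. I would encode the channel as a bipartite graph $H$ whose left vertices are the transmit levels $(k,j)$, whose right vertices are the receive levels $(l,i)$, and which has an edge whenever $(k,j)$ contributes to $(l,i)$. Each transmit level feeds at most two receive levels (its direct image, and --- if $j\le m$ --- a cross image at the other receiver shifted by $n-m$ levels) and each receive level is fed by at most two transmit levels, so $H$ has maximum degree $2$; since the relevant level index strictly increases by $2(n-m)>0$ along any alternating walk, $H$ has no cycle and is a disjoint union of paths. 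The simultaneous swap $\sigma$ of both transmitters and both receivers is a graph automorphism of $H$ that is fixed-point-free on this set of paths --- a path and its $\sigma$-image are disjoint because within a path the Tx$1$- and Tx$2$-level indices differ by the nonzero amount $n-m$ --- so the paths pair up canonically, and each $\sigma$-pair spans a set of levels closed under the channel map and therefore constitutes a sub-channel orthogonal to all the others.

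The heart of the proof is to classify these $\sigma$-pairs. Using $\alpha\le 2/3$ one shows that every alternating chain takes at most two ``cross'' steps before its continuation index would exceed $n$: a chain anchored at level $\ell$ survives a first, second, third step only if $\ell\le m$, $\ell\le 2m-n$, $\ell\le 3m-2n\le 0$ respectively, so every path has $1$, $3$, or $5$ edges, and a short check matches a $\sigma$-pair of $1$-edge (resp.\ $3$-edge, $5$-edge) paths with the $(0,1)$ (resp.\ $(1,2)$, $(2,3)$) channel. It then remains to count the pairs, which is a partition-and-bookkeeping check. For $\alpha\le 1/2$: the middle levels $\{m+1,\dots,n-m\}$ are private on both sides and form $n-2m$ copies of $(0,1)$, the chains anchored at $\ell\in\{1,\dots,m\}$ take exactly one cross step and form $m$ copies of $(1,2)$, no chain takes a second step, giving $(0,1)^{n-2m}\times(1,2)^{m}$. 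For $\alpha\in[1/2,2/3]$: there are no $1$-edge components (such a component would need a private transmit level and a clean receive level at the same index, forcing $m<n-m$), the chains anchored at $\ell\in\{1,\dots,2m-n\}$ take two cross steps and form $2m-n$ copies of $(2,3)$, the remaining chains take exactly one and form $2n-3m$ copies of $(1,2)$; the identity $3(2m-n)+2(2n-3m)=n$ certifies that every level is consumed exactly once, giving $(1,2)^{2n-3m}\times(2,3)^{2m-n}$.

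The step I expect to be the real work is the classification-and-count for $\alpha\in[1/2,2/3]$: there the interference chains genuinely braid through the level diagram, a single $(2,3)$ block is assembled from two length-$5$ paths whose supporting levels interleave with those of the surrounding $(1,2)$ blocks, and one has to verify with care that the coloring leaves no level unassigned, assigns no level twice, and never separates two interfering levels into different blocks. By contrast the $\alpha\le 1/2$ case --- where ``middle'' levels are simply private and the top/bottom pairs cross cleanly --- and the two symmetry reductions are essentially mechanical once the graph-theoretic scaffolding of the second paragraph is in place.
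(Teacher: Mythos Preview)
Your proposal is correct and, in fact, more complete than what the paper offers. The paper does not prove Theorem~2 at all: it imports the statement from~\cite{Suh12} and merely illustrates the mechanism by a single worked example (Fig.~9), calling the idea ``graph coloring.'' Your bipartite level-graph picture, the observation that the graph has maximum degree~$2$ and is acyclic (because the transmit index changes by the nonzero amount $n-m$ at every cross step), and the ensuing classification of $\sigma$-paired paths into $(0,1)$, $(1,2)$, $(2,3)$ blocks, is exactly a rigorous version of that coloring. The symmetry reduction $(m,n)\mapsto(n,m)$ is also the right way to collapse the $\alpha>1$ regimes onto the $\alpha<1$ ones.

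The one place where your write-up could be tightened is the disjointness/covering check for $\alpha\in[1/2,2/3]$. The counting identity $3(2m-n)+2(2n-3m)=n$ shows the level budget balances, but it does not by itself rule out two distinct $\sigma$-pairs sharing a level; you should record that the five level-sets $\{1,\dots,2m-n\}$, $\{2m-n+1,\dots,n-m\}$, $\{n-m+1,\dots,m\}$, $\{m+1,\dots,2(n-m)\}$, $\{2(n-m)+1,\dots,n\}$ are pairwise disjoint and partition $\{1,\dots,n\}$ precisely when $1/2\le\alpha\le 2/3$. This is the ``braiding'' you correctly flagged as the real work, and once it is written out the proof is complete.
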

    \subsection{Proof of (R1) $\alpha \leq \frac{2}{3},\ \tilde{\alpha} \leq \frac{2}{3}$}
    \begin{figure}
    \centering
    \includegraphics[scale=0.52]{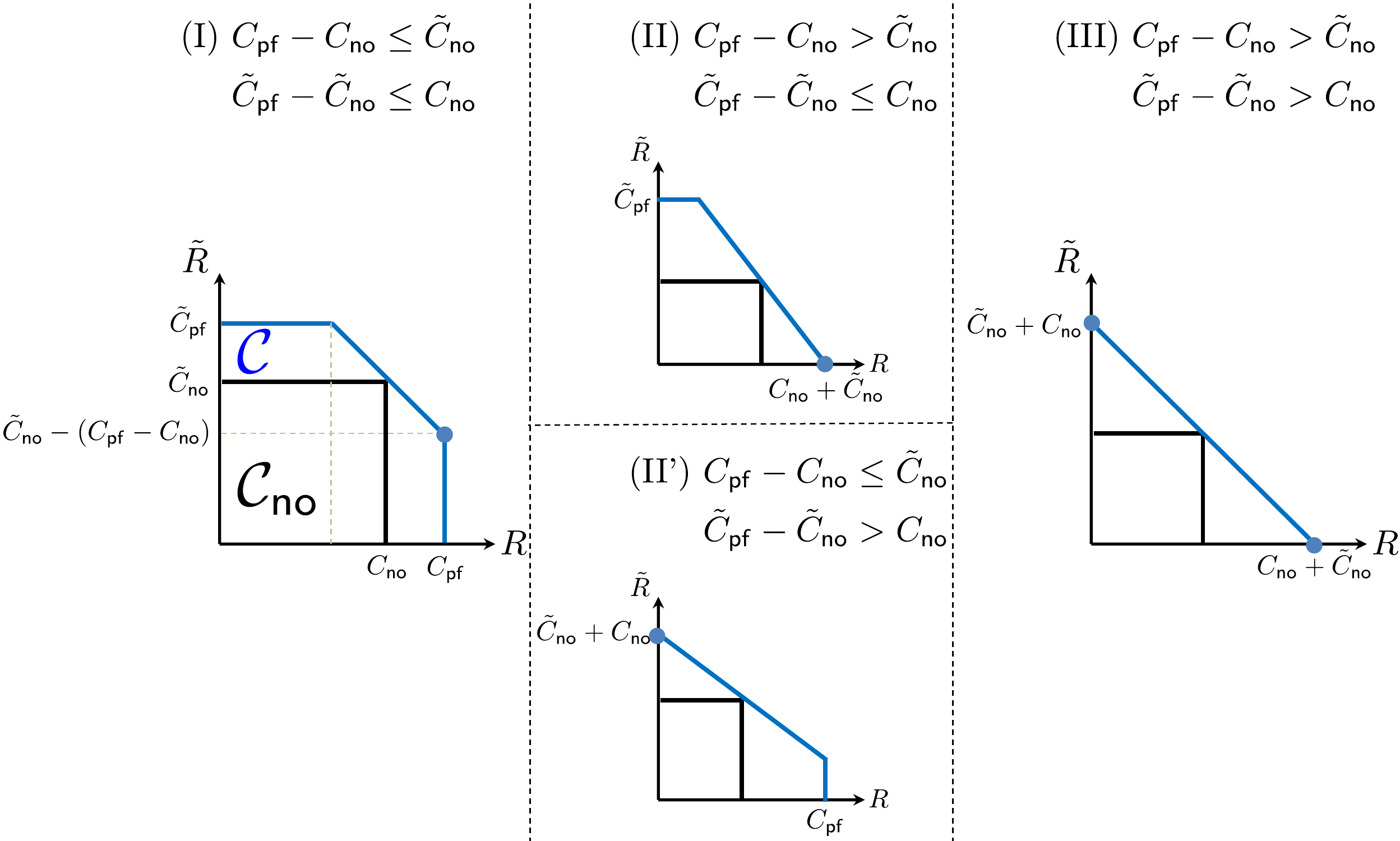}
    \caption{Three types of shapes of an achievable rate region for the regime (R1) $\alpha \leq \frac{2}{3},\ \tilde{\alpha} \leq \frac{2}{3}.$}
    \end{figure}
    In this regime, the claimed achievable rate region is:
    \begin{align*}
    \{R \leq C_{\sf pf}, \tilde{R} \leq \tilde{C}_{\sf pf}, R+\tilde{R} \leq C_{\sf no}+\tilde{C}_{\sf no}\}.
    \end{align*}
    The following achievability w.r.t. the elementary subchannels identified in Theorem $2$ forms the basis of the proof.
    \begin{lemma} The following rates are achievable:\\
    (i) For the pair of $(m,n)=(1,2)^i$ and $(\tilde{m}, \tilde{n})=(1,2)^j:\ (R, \tilde{R})=(\frac{4}{3}i, j-\frac{1}{3}i).$ Here $\frac{1}{3}i \leq j.$ \\
    (ii) For the pair of $(m,n)=(1,2)^i$ and $(\tilde{m}, \tilde{n})=(2,3)^j:\ (R, \tilde{R})=(\frac{4}{3}i, 2j-\frac{1}{3}i).$ Here $\frac{1}{3}i \leq 2j.$
    \end{lemma}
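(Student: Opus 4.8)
The plan is to prove both rate pairs by \emph{concatenation}: code separately over the orthogonal elementary subchannels, running a feedback-aided gadget on matched forward/backward subchannels and plain non-feedback coding~\cite{Suh12} on any leftover backward subchannels. Two observations organize the argument. First, a $(2,3)$ subchannel has $C_{\sf no}=C_{\sf pf}=2$, so it never needs feedback for itself and is simply a clean rate-$2$ pipe; only the $(1,2)$ subchannels, for which $C_{\sf no}=1<\tfrac{4}{3}=C_{\sf pf}$, benefit from feedback. Second, in both targeted corner points the backward rate is \emph{strictly below} $\tilde C_{\sf no}$ ($\tfrac{2}{3}<1$ in (i), and $2j-\tfrac{1}{3}i\le 2j-\tfrac{1}{3}<2j=\tilde C_{\sf no}$ for $i\ge 1$ in (ii)), so the backward channel never needs feedback for itself either; its only extra duty is to relay the feedback required by the forward $(1,2)$ subchannels.

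First I would isolate the single-pair gadget: for one forward $(1,2)$ subchannel matched with one backward $(1,2)$ subchannel I claim $(R,\tilde R)=(\tfrac{4}{3}-\epsilon_N,\tfrac{2}{3}-\epsilon_N)$ is achievable, and for one forward $(1,2)$ matched with one backward $(2,3)$ I claim $(\tfrac{4}{3}-\epsilon_N,\tfrac{5}{3}-\epsilon_N)$; this is the interactive scheme of~\cite{Shin14}, built from the ingredients reviewed in Examples~1 and~2. Concretely, on the forward subchannel each node runs the perfect-feedback scheme of Fig.~3 (fresh symbols on the bottom level, relayed/fed-back functions on the top level), which demands that the streams $\{F_{2\ell-1}\}$ reach node~$1$ and $\{F_{2\ell}\}$ reach node~$2$; these streams are carried over the backward subchannel by \emph{superposing} them onto the fresh backward symbols and applying interference neutralization on the bottom level(s), as in Example~2, so the backward receivers still decode their own modulo-$2$ sums there. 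A tally of the signal levels shows that servicing one forward $(1,2)$ subchannel this way consumes exactly $\tfrac{1}{3}$ of the matched backward subchannel's non-feedback capacity; hence a backward $(1,2)$ subchannel can service up to $3$, and a backward $(2,3)$ subchannel up to $6$, forward $(1,2)$ subchannels — which is precisely why the corner point requires $\tfrac{1}{3}i\le j$ (resp. $\tfrac{1}{3}i\le 2j$).

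With the gadget in hand the assembly is straightforward. For part (i), in a long-block time-shared sense devote a total of $\tfrac{1}{3}i$ units of the $j$ backward subchannels' non-feedback capacity to relaying feedback for the $i$ forward subchannels — so every forward $(1,2)$ subchannel runs at its perfect-feedback rate and $R=\tfrac{4}{3}i=C_{\sf pf}$ — and run the remaining $j-\tfrac{1}{3}i$ units in non-feedback mode at rate $1$, giving $\tilde R=j-\tfrac{1}{3}i$; when $i\le j$ this is literally $i$ copies of the gadget (each $(\tfrac{4}{3},\tfrac{2}{3})$) plus $j-i$ non-feedback $(1,2)$ subchannels (each $(0,1)$), summing to $(\tfrac{4}{3}i,\,j-\tfrac{1}{3}i)$. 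Part (ii) is identical with $(2,3)$ backward subchannels of capacity $2$: $i$ copies of the $(1,2)/(2,3)$ gadget plus $j-i$ non-feedback $(2,3)$ subchannels yield $\tilde R=\tfrac{5}{3}i+2(j-i)=2j-\tfrac{1}{3}i$; letting the block length and the Example-type parameters grow sends $\epsilon_N\to 0$. The step I expect to be the main obstacle is the level count behind the $\tfrac{1}{3}$ exchange rate — showing that the superposition and interference neutralization of Example~2 pack the two feedback streams together with the fresh backward traffic tightly enough onto the backward subchannel — and, in the regime $\tfrac{1}{3}i\le j<i$ where no backward subchannel is dedicated to a single forward one, making the time-sharing rigorous by scheduling the pooled relaying capacity across the $j$ backward subchannels so that all $i$ forward subchannels stay serviced in every slot; everything else is routine concatenation and a limiting argument.
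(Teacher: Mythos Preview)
Your proposal is correct and follows essentially the same approach as the paper: combine the non-feedback scheme~\cite{Suh12} on leftover backward subchannels with the interactive scheme of~\cite{Shin14} on matched forward/backward pairs, and concatenate. The paper's own proof is in fact just a one-sentence pointer to exactly these two ingredients (``a simple combination of the non-feedback scheme~\cite{Suh12} and the interactive scheme in our earlier work~\cite{Shin14}'') and declines to give details; your write-up is more explicit about the per-pair gadget rates $(\tfrac{4}{3},\tfrac{2}{3})$ and $(\tfrac{4}{3},\tfrac{5}{3})$ and about the one-to-one tradeoff, and you correctly flag the boundary regime $\tfrac{1}{3}i\le j<i$ (which the paper actually uses, e.g.\ with $i=3j$) as the place where the pooling argument needs care.
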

    \begin{proof}
    The proof builds upon a simple combination of the non-feedback scheme~\cite{Suh12} and the interactive scheme in our earlier work~\cite{Shin14}. While it requires detailed calculations, it contains no new ingredients, hence, we do not provide a detailed proof here.
    \end{proof}
    We see that there is no feedback gain in sum capacity. This means that one bit of a capacity increase due to feedback costs exactly one bit. Depending on whether or not $C_{\sf pf}$ (or $\tilde{C}_{\sf pf}$) exceeds $C_{\sf no} + \tilde{C}_{\sf no},$ we have four subcases, each of which forms a different shape of the region. See Fig. $10.$

    \textbf{(I) $C_{\sf pf}-C_{\sf no} \leq \tilde{C}_{\sf no},\ \tilde{C}_{\sf pf}-\tilde{C}_{\sf no} \leq C_{\sf no}:$}
    The first case is one in which the amount of feedback for maximal improvement, reflected in $C_{\sf pf}-C_{\sf no}$ (or $\tilde{C}_{\sf pf}-\tilde{C}_{\sf no}$), is smaller than the available resources offered by the backward channel (or forward channel respectively). In other words, in this case, we have a sufficient amount of resources such that one can achieve the perfect-feedback bound in one direction. By symmetry, it suffices to focus on one corner point that favors the rate of forward transmission: $(R, \tilde{R})=(C_{\sf pf}, \tilde{C}_{\sf no}-\textcolor{black}{(C_{\sf pf}-C_{\sf no})}).$
    For the regime, the network decompositions $(53)$ and $(54)$ give:
    \begin{align*}
    (m, n) \longrightarrow \left\{
        \begin{array}{ll}
            (0,1)^{n-2m}\times (1,2)^{m}, ~ \alpha \in [0, 1/2]; \\
            (1,2)^{2n-3m}\times (2,3)^{2m-n}, ~ \alpha \in [1/2, 2/3]; \\
        \end{array}
    \right.\\
    (\tilde{m},\tilde{n}) \longrightarrow \left\{
        \begin{array}{ll}
            (0,1)^{\tilde{n}-2\tilde{m}}\times (1,2)^{\tilde{m}}, ~ \tilde{\alpha} \in [0, 1/2]; \\
            (1,2)^{2\tilde{n}-3\tilde{m}}\times (2,3)^{2\tilde{m}-\tilde{n}}, ~ \tilde{\alpha} \in [1/2, 2/3]. \\
        \end{array}
    \right.
    \end{align*}

    For efficient use of Theorem $2$ and Lemma $3,$ we divide the regime (R1) into the following four sub-regimes: (R1-1) $\alpha \in [\frac{1}{2}, \frac{2}{3}],\ \tilde{\alpha} \in [\frac{1}{2}, \frac{2}{3}];$ (R1-2) $\alpha \in [\frac{1}{2}, \frac{2}{3}],\ \tilde{\alpha} \in [0, \frac{1}{2}];$ (R1-3) $\alpha \in [0, \frac{1}{2}],\ \tilde{\alpha} \in [\frac{1}{2}, \frac{2}{3}];$ and (R1-4) $\alpha \in [0, \frac{1}{2}],\ \tilde{\alpha} \in [0, \frac{1}{2}].$

    \text{(R1-1) $\alpha \in [\frac{1}{2}, \frac{2}{3}],\ \tilde{\alpha} \in [\frac{1}{2}, \frac{2}{3}]:$}
    In this sub-regime, we note that either $\frac{1}{3}(2n-3m)=C_{\sf pf}-C_{\sf no} \leq 2\tilde{n}-3\tilde{m}$ or $C_{\sf pf}-C_{\sf no} \leq 2(2\tilde{m}-\tilde{n});$ otherwise, we encounter the contradiction of $C_{\sf pf}-C_{\sf no} \leq \tilde{C}_{\sf no}\ (=\tilde{m}).$

    Consider the case where $\frac{1}{3}(2n-3m) \leq 2\tilde{n}-3\tilde{m}.$ In such a case,
    we apply Lemma $3$ (i) for the pair of $(1,2)^{2n-3m}$ and $(1,2)^{2\tilde{n}-3\tilde{m}}.$ Note that the condition of (i) holds. Applying the non-feedback schemes for the remaining subchannels gives:
    \begin{align*}
    R=&\frac{4}{3} \times \left(2n-3m\right) +  2 \times \left(2m-n\right) = C_{\sf pf}, \\
    \tilde{R}=&\left(1 \times \left(2\tilde{n}-3\tilde{m}\right)-\frac{1}{3}\left(2n-3m\right)\right) +2 \times \left(2\tilde{m}-\tilde{n}\right) = \tilde{C}_{\sf no}-(C_{\sf pf}-C_{\sf no}).
    \end{align*}

    Now consider the case where $\frac{1}{3}(2n-3m) \leq 2(2\tilde{m}-\tilde{n}).$ In this case,
    we apply Lemma $3$ (ii) for the pair of $(1,2)^{2n-3m}$ and $(2,3)^{2\tilde{m}-\tilde{n}}.$ Note that the condition of (ii) holds. Applying the non-feedback schemes for the remaining subchannels gives:
    \begin{align*}
    R=&\frac{4}{3} \times \left(2n-3m\right) +  2 \times \left(2m-n\right) = C_{\sf pf}, \\
    \tilde{R}=&1 \times \left(2\tilde{n}-3\tilde{m}\right) +\left(2 \times \left(2\tilde{m}-\tilde{n}\right)-\frac{1}{3}\left(2n-3m\right)\right) = \tilde{C}_{\sf no}-(C_{\sf pf}-C_{\sf no}).
    \end{align*}

    \text{(R1-2) $\alpha \in [\frac{1}{2}, \frac{2}{3}],\ \tilde{\alpha} \in [0, \frac{1}{2}]:$}
    We apply Lemma $3$ (i) for the pair of $(1,2)^{2n-3m}$ and $(1,2)^{\tilde{m}}.$ Note that the condition of (i) holds. Applying the non-feedback schemes for the remaining subchannels gives:
    \begin{align*}
    R=&\frac{4}{3} \times \left(2n-3m\right) +  2 \times \left(2m-n\right) = C_{\sf pf}, \\
    \tilde{R}=&0 \times \left(\tilde{n}-2\tilde{m}\right) +\left(1 \times \tilde{m}-\frac{1}{3}\left(2n-3m\right)\right) = \tilde{C}_{\sf no}-(C_{\sf pf}-C_{\sf no}).
    \end{align*}

    For the proofs of the remaining regimes (R1-3) and (R1-4), we omit details as the proofs follow similarly. As seen from all the cases above, one key observation to make is that the capacity increase due to feedback $C_{\sf pf}-C_{\sf no}$ plus the backward computation rate is always $\tilde{C}_{\sf no},$ meaning that there is \emph{one-to-one tradeoff} between feedback and independent message computation, i.e., one bit of feedback costs one bit.

    \textbf{(II) $C_{\sf pf}-C_{\sf no} > \tilde{C}_{\sf no},\ \tilde{C}_{\sf pf}-\tilde{C}_{\sf no} \leq C_{\sf no}:$}
    Similar to the first case, one can readily prove that the same one-to-one tradeoff relationship exists when achieving one corner point $(R,\tilde{R})=(C_{\sf no}-(\tilde{C}_{\sf pf}-\tilde{C}_{\sf no}), \tilde{C}_{\sf pf}).$ Hence, we omit the detailed proof. On the other hand, we note that there is a limitation in achieving the other counterpart. Note that the maximal feedback gain $C_{\sf pf}-C_{\sf no}$ for forward computation does exceed the resource limit $\tilde{C}_{\sf no}$ offered by the backward channel. This limits the maximal achievable rate for forward computation to be saturated by $R \leq C_{\sf no}+\tilde{C}_{\sf no}.$ Hence the other corner point reads $(C_{\sf no}+\tilde{C}_{\sf no}, 0)$ instead. We will show this is indeed the case as below. By symmetry, we omit the case of (II'). Similar to the previous case, we provide the proofs for (R1-1) and (R1-2). The proofs for the regimes (R1-3) and (R1-4) follow similarly.

    \text{(R1-1) $\alpha \in [\frac{1}{2}, \frac{2}{3}],\ \tilde{\alpha} \in [\frac{1}{2}, \frac{2}{3}]:$}
    We apply Lemma $3$ (i) for the pair of $(1,2)^{3(2\tilde{n}-3\tilde{m})}$ and $(1,2)^{2\tilde{n}-3\tilde{m}}.$ Also, we apply Lemma $3$ (ii) for the pair of $(1,2)^{6(2\tilde{m}-\tilde{n})}$ and $(2,3)^{2\tilde{m}-\tilde{n}}.$ Applying the non-feedback schemes for the remaining subchannels $(1,2)^{(2n-3m)-3\tilde{m}}$ and $(2,3)^{2m-n}$ gives:
    \begin{align*}
    R=&\frac{4}{3} \times 3\tilde{m} +  1 \times \left(2n-3m-3\tilde{m}\right)+2 \times \left(2m-n\right) =m+\tilde{m}=C_{\sf no}+\tilde{C}_{\sf no}, \\
    \tilde{R}=&0.
    \end{align*}
    Note that $2n-3m-3\tilde{m} = 3(C_{\sf pf}-C_{\sf no})-3\tilde{C}_{\sf no} > 0.$

    \text{(R1-2) $\alpha \in [\frac{1}{2}, \frac{2}{3}],\ \tilde{\alpha} \in [0, \frac{1}{2}]:$}
    In this sub-regime, we apply Lemma $3$ (i) for the pair of $(1,2)^{3\tilde{m}}$ and $(1,2)^{\tilde{m}}.$ Applying the non-feedback schemes for the remaining subchannels $(1,2)^{(2n-3m)-3\tilde{m}}$ and $(2,3)^{2m-n}$ yield:
    \begin{align*}
    R=&\frac{4}{3} \times 3\tilde{m} +  1 \times \left(2n-3m-3\tilde{m}\right)+2 \times \left(2m-n\right) =C_{\sf no}+\tilde{C}_{\sf no}, \\
    \tilde{R}=&0.
    \end{align*}

    \textbf{(III) $C_{\sf pf}-C_{\sf no} > \tilde{C}_{\sf no},\ \tilde{C}_{\sf pf}-\tilde{C}_{\sf no} > C_{\sf no}:$}
    This is the case in which there are limitations now in achieving both $R=C_{\sf pf}$ and $\tilde{R}=\tilde{C}_{\sf pf}.$ With the same argument as above, what we can maximally achieve for $R$ (or $\tilde{R})$ in exchange of the other channel is $C_{\sf no}+\tilde{C}_{\sf no}$ which implies that $(R,\tilde{R})=(C_{\sf no}+\tilde{C}_{\sf no}, 0)$ or $(0, C_{\sf no}+\tilde{C}_{\sf no})$ is achievable. The proof follows exactly the same as above, so we omit details.

    \subsection{Proof of (R2) $(\alpha \in [\frac{2}{3}, 1),\ \alpha \in (1, \frac{3}{2}]),\ \tilde{\alpha} \geq \frac{3}{2}.$}
    For the regime of (R2), we note that $C_{\sf pf} = C_{\sf no}$ and $C_{\sf no}+\tilde{C}_{\sf pf} =\frac{2}{3}\max(m,n)+\frac{2}{3}\tilde{m} \leq m+\tilde{m},$ so the claimed achievable rate region is:
    \begin{align*}
    \{R \leq C_{\sf no}, \tilde{R} \leq \tilde{C}_{\sf pf}, R+\tilde{R} \leq \tilde{C}_{\sf no}+n\}.
    \end{align*}
    Unlike the previous regime, there is an interaction gain for this regime. Note that the sum-rate bound exceeds $C_{\sf no}+\tilde{C}_{\sf no};$ however, there is no feedback gain in the forward channel. The network decompositions $(55)$ and $(56)$ together with $3(\tilde{C}_{\sf pf}-\tilde{C}_{\sf no})=2\tilde{m}-3\tilde{n}$ give:
    \begin{align*}
     &(\tilde{m},\tilde{n}) \longrightarrow \left\{
        \begin{array}{ll}
            (2,1)^{3(\tilde{C}_{\sf pf}-\tilde{C}_{\sf no})}\times (3,2)^{2\tilde{n}-\tilde{m}}, ~ \tilde{\alpha} \in [3/2, 2]; \\
            (1,0)^{\tilde{m}-2\tilde{n}}\times (2,1)^{\tilde{n}}, ~ \tilde{\alpha} \geq 2.
        \end{array}
    \right.
    \end{align*}
    We find that the shape of the region depends on where $\tilde{C}_{\sf pf}-\tilde{C}_{\sf no}$ lies in between $n-C_{\sf no}$ and $n.$ See Fig. $11.$

    \textbf{(I) $\tilde{C}_{\sf pf}-\tilde{C}_{\sf no} \leq n-C_{\sf no}:$}
    The first case is one in which the amount of feedback for maximal improvement, reflected in $\tilde{C}_{\sf pf}-\tilde{C}_{\sf no},$ is small enough to achieve the maximal feedback gain without sacrificing the performance of the forward computation. Now let us show how to achieve $(R, \tilde{R})=(C_{\sf no}, \tilde{C}_{\sf pf}).$ To do this, we divide the backward channel regime into the two sub-regimes: (R2-1) $\tilde{\alpha} \in [\frac{3}{2}, 2];$ and (R2-2) $\tilde{\alpha} \geq 2.$

    (R2-1) $\tilde{\alpha} \in [\frac{3}{2}, 2]:$
    For the first sub-regime, the decomposition idea is to pair up $(m,n)$ and $(2,1)^{3(\tilde{C}_{\sf pf}-\tilde{C}_{\sf no})},$ while applying the non-feedback schemes for the remaining backward subchannels $(3,2)^{2\tilde{n}-\tilde{m}}.$ To give an achievability idea for the first pair, let us consider a simple example of $(m,n)=(2,3)$ and $(\tilde{m}, \tilde{n})=(2,1).$ See Fig. $12.$
    \begin{figure}
    \centering
    \includegraphics[scale=0.52]{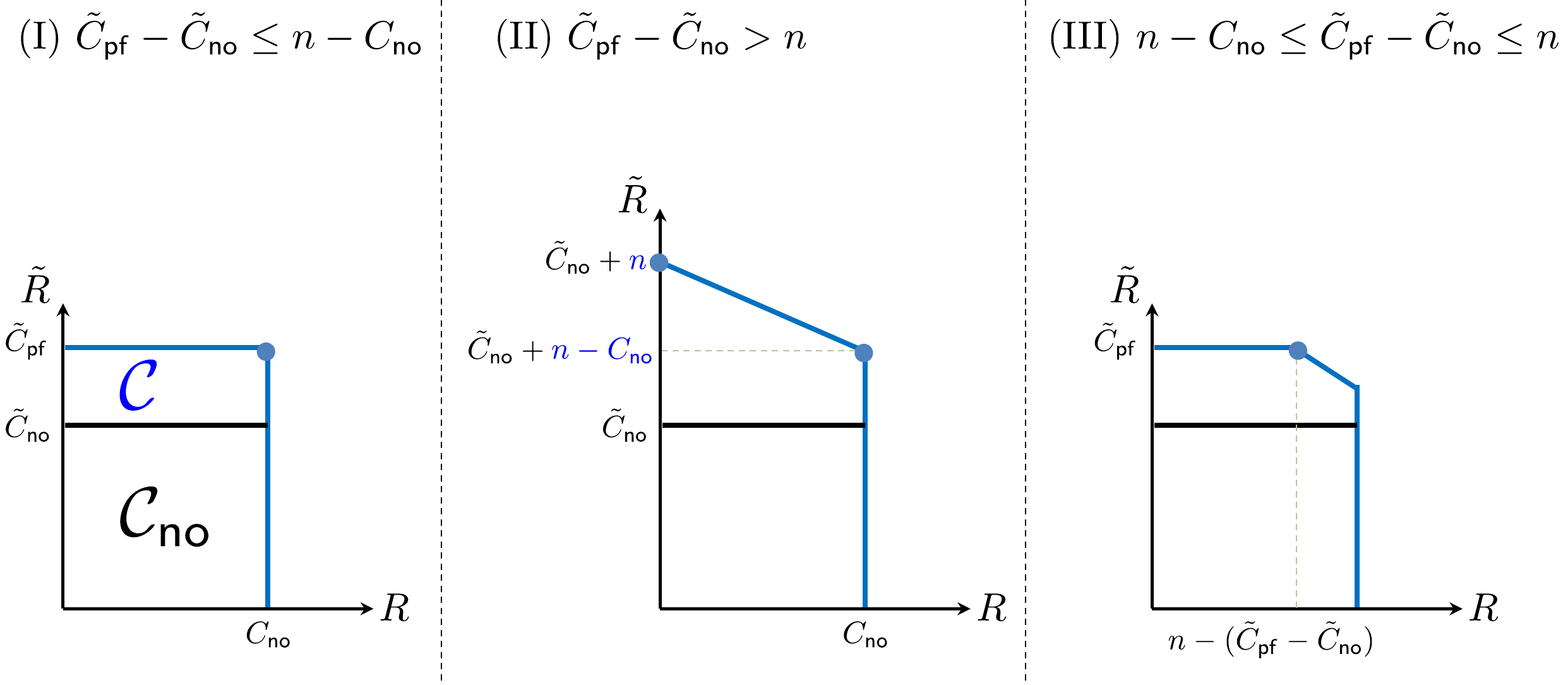}
    \caption{Three types of shapes of an achievable rate region for the regime (R2) $(\alpha \in [\frac{2}{3}, 1),\ \alpha \in (1, \frac{3}{2}]),\ \tilde{\alpha} \geq \frac{3}{2}.$}
    \includegraphics[scale=0.45]{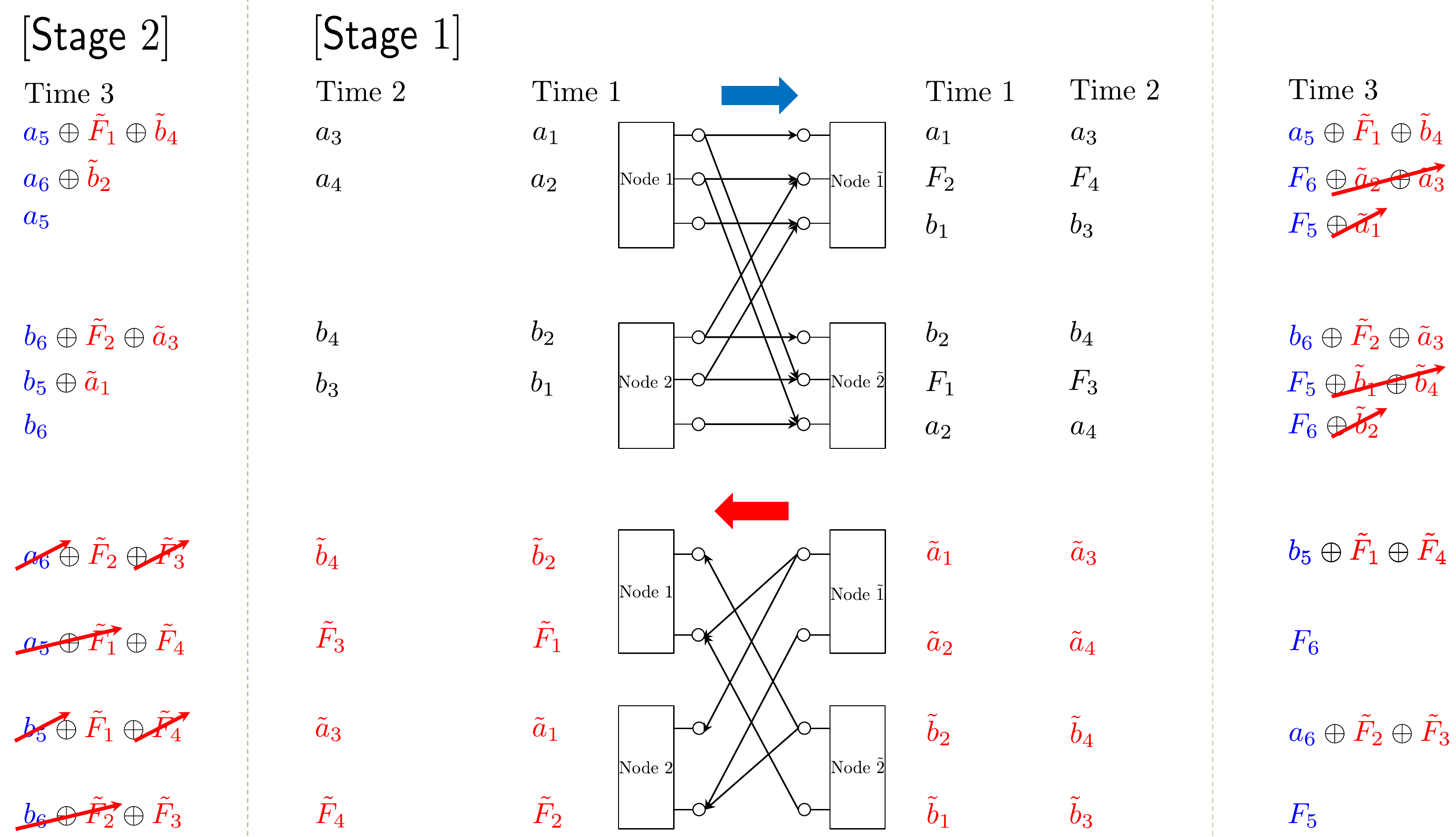}
    \caption{Illustration of achievability for the regime (R2-1) via an example of $(m,n)=(2,3),\ (\tilde{m}, \tilde{n})=(2,1)$. This is an instance in which we have a sufficient amount of resources that enables achieving the perfect-feedback bound in the backward channel: $\tilde{C}_{\sf pf}-\tilde{C}_{\sf no}=\frac{1}{3} \leq 1=n-C_{\sf no}.$ Hence we achieve $(R, \tilde{R})=(C_{\sf no}, \tilde{C}_{\sf pf})=(2, \frac{4}{3}).$}
    \end{figure}

    The scheme consists of two stages. The first stage consists of two time slots; and the second stage consists of a single time slot. Hence there are three time slots in total. At time $1$, node $1$ sends $(a_1, a_2)$ on the top two levels; and node $2$ sends $(b_2, b_1).$ Note that node $\tilde{1}$ and $\tilde{2}$ get $(a_1, F_2, b_1)$ and $(b_2, F_1, a_2);$ from these signals, they can compute $(F_1, F_2).$ Similarly, at time $2,$ node $1$ delivers $(a_3, a_4);$ and node $2$ delivers $(b_4, b_3).$ Note that node $\tilde{1}$ and $\tilde{2}$ can then obtain $(F_3, F_4).$

    Through the backward channel, node $\tilde{1}$ and $\tilde{2}$ transmit $(\tilde{a}_1, \tilde{a}_2)$ and $(\tilde{b}_2, \tilde{b}_1)$ at time $1.$ Then node $1$ obtains $(\tilde{b}_2, \tilde{F}_1).$ Similarly, node $2$ obtains $(\tilde{a}_1, \tilde{F}_2).$
    Repeating the same transmission strategy at time $2,$ node $1$ and $2$ obtain $(\tilde{b}_4, \tilde{F}_3)$ and $(\tilde{a}_3, \tilde{F}_4)$ respectively. Note that until the end of time $2,$ $(\tilde{F}_2, \tilde{F}_4)$ are not yet delivered to node $1$. Similarly, $(\tilde{F}_1, \tilde{F}_3)$ are missing at node $2$.

    Now the transmission strategy at Stage $2$ is to superimpose feedback signals onto fresh symbols. At time $3,$ node $1$ sends $(a_5\oplus\tilde{F}_1\oplus \tilde{b}_4, a_6\oplus\tilde{b}_2, a_5),$ the summation of $(a_5, a_6, a_5)$ (fresh symbols) and $(\tilde{F}_1\oplus\tilde{b}_4, \tilde{b}_2, 0)$ (feedback signals). Similarly, node $2$ sends $(b_6\oplus\tilde{F}_2\oplus\tilde{a}_3, b_5\oplus\tilde{a}_1, b_6).$
    Node $\tilde{1}$ then gets $(a_5\oplus\tilde{F}_1\oplus \tilde{b}_4, F_6\oplus\tilde{a}_2\oplus\tilde{a}_3, F_5\oplus\tilde{a}_1).$ From $(F_6\oplus\tilde{a}_2\oplus\tilde{a}_3, F_5\oplus\tilde{a}_1),$ it can compute $(F_6, F_5)$ using its own symbols. Similarly, node $\tilde{2}$ can compute $(F_5, F_6).$

    Now exploiting $F_5$ and $\tilde{a}_4,$ and $a_5\oplus\tilde{F}_1\oplus\tilde{b}_4,$ node $\tilde{1}$ encodes
    $b_5\oplus\tilde{F}_1\oplus\tilde{F}_4.$
    Similarly, node $\tilde{2}$ encodes $a_6\oplus\tilde{F}_2\oplus\tilde{F}_3.$
    With these encoded signals, node $\tilde{1}$ and $\tilde{2}$ transmit $(b_5\oplus\tilde{F}_1\oplus\tilde{F}_4, F_6)$ and $(a_6\oplus\tilde{F}_2\oplus\tilde{F}_3, F_5)$ respectively. Then node $1$ gets $(a_6\oplus\tilde{F}_2\oplus\tilde{F}_3, a_5\oplus\tilde{F}_1\oplus\tilde{F}_4).$ From this, node $1$ can obtain $(\tilde{F}_2, \tilde{F}_4)$ using $(a_6, a_5)$ (own symbols) and $(\tilde{F}_3, \tilde{F}_1)$ (past received signals). Similarly, node $2$ can obtain $(\tilde{F}_1, \tilde{F}_3)$.

    As a result, node $\tilde{1}$ and $\tilde{2}$ obtain $F_{\ell}\ (\ell=1,\dots, 6)$ during three time slots, thus achieving $R=2\ (=C_{\sf no})$. Furthermore, node $1$ and $2$ obtain $\tilde{F}_{\ell}\ (\ell=1, \dots, 4),$ thus achieving $\tilde{R}=\frac{4}{3}\ (=\tilde{C}_{\sf pf})$.

    Here one can make two observations. First, in the forward channel, $C_{\sf no} = m$ ($=2,$ which is the second and third) levels are utilized to perform forward-message computation in each time. Through the remaining first direct-link level, feedback transmissions are performed. Observe that feedback signals (at time $3$) are interfered by fresh forward symbols, but it turns out that the interference does not cause any problem. For example, the feedback signal $\tilde{F}_1\oplus \tilde{b}_4$ (on the top level) is mixed with $a_5$ and is sent to node $1$ through the first direct-link. As a result, node $1$ receives $a_5\oplus\tilde{F}_1\oplus\tilde{b}_4,$ instead of $\tilde{F}_1\oplus\tilde{b}_4$ which is desired to be fed back. Nonetheless, node $\tilde{1}$ sending $b_5\oplus\tilde{F}_1\oplus\tilde{F}_4$ on the top level, it transpires that node $2$ can decode $\tilde{F}_1,$ using $b_5$ (own symbol) and $\tilde{F}_4$ (past received signal).
    This implies that feedback and independent forward-message computation do not interfere with each other and thus one can maximally utilize available resource levels: The total number of direct-link levels for forward channel is $n,$ accordingly, $n-C_{\sf no}$ levels can be exploited for feedback. In the general case of $(2,1)^{3(\tilde{C}_{\sf pf}-\tilde{C}_{\sf no})},$ the maximal feedback gain is $(\tilde{C}_{\sf pf}^{(2,1)}-\tilde{C}_{\sf no}^{(2,1)})\times 3(\tilde{C}_{\sf pf}-\tilde{C}_{\sf no})=\tilde{C}_{\sf pf}-\tilde{C}_{\sf no},$ which does not exceed the limit on the exploitable levels $n-C_{\sf no}$ under the considered regime. Here $\tilde{C}_{\sf no}^{(2,1)}$ denotes the non-feedback computation capacity of $(2,1)$ model. Hence, we achieve:
    \begin{align*}
    \tilde{R}^{(1)}=\tilde{C}_{\sf pf}^{(2,1)}\times 3(\tilde{C}_{\sf pf}-\tilde{C}_{\sf no})=\tilde{C}_{\sf pf}-\tilde{C}_{\sf no},=4(\tilde{C}_{\sf pf}-\tilde{C}_{\sf no}).
    \end{align*}
    Now the second observation is that the feedback transmission does not cause any interference to node $\tilde{1}$ and $\tilde{2}.$ This ensures that $R^{(1)}=C_{\sf no}.$ On the other hand, for the remaining subchannels $(3,2)^{2\tilde{n}-\tilde{m}},$ we apply the non-feedback schemes to achieve $\tilde{R}^{(2)}=\tilde{C}_{\sf no}^{(3,2)} \times (2\tilde{n}-\tilde{m})=2(2\tilde{n}-\tilde{m}).$ Combining all of the above, we get:
    \begin{align*}
    R=&C_{\sf no},\\
    \tilde{R}=&4(\tilde{C}_{\sf pf}-\tilde{C}_{\sf no})+2(2\tilde{n}-\tilde{m})=\frac{2}{3}\tilde{m}=\tilde{C}_{\sf pf}.
    \end{align*}

    (R2-2) $\tilde{\alpha} \geq 2:$
    For the second sub-regime, the decomposition idea is to pair up $(m,n)$ and the two subchannels: $(1,0)^{\tilde{m}-2\tilde{n}}$ and $(2,1)^{\tilde{n}}.$
    As we illustrated how to pair up $(m,n)$ and the second subchannels $(2,1)^{\tilde{n}},$ we provide an achievability idea for $(m,n)=(2,3)$ and $(\tilde{m}, \tilde{n})=(1,0).$ See Fig. $13.$

    Our scheme consists of two stages. The first stage consists of a single time slot; and the second stage consists of two time slots. Hence there are three time slots in total. At time $1,$ node $1$ delivers $(a_1, a_2)$ on the top two levels; and node $2$ delivers $(b_2, b_1).$ Then node $\tilde{1}$ and $\tilde{2}$ get $(a_1, F_2, b_1)$ and $(b_2, F_1, a_2)$ and therefore they can compute $(F_1, F_2).$ Through the backward channel, node $\tilde{1}$ and $\tilde{2}$ send $\tilde{a}_1$ and $\tilde{b}_2$ respectively. Node $1$ and $2$ then get $\tilde{b}_2$ and $\tilde{a}_1$ respectively.

    At time $2,$ node $1$ and $2$ forward $(a_3\oplus \tilde{b}_2, a_4, a_3)$ and $(b_4\oplus \tilde{a}_1, b_3, b_4)$ respectively. Then node $\tilde{1}$ and $\tilde{2}$ get $(a_3\oplus \tilde{b}_2, F_4\oplus\tilde{a}_1, F_3)$ and $(b_4\oplus\tilde{a}_1, F_3\oplus \tilde{b}_2, F_4)$ respectively.
    Note that whereas $F_3$ is directly obtained at node $\tilde{1},$ $F_4$ is not yet obtained; however, exploiting $\tilde{a}_1,$ node $\tilde{1}$ can obtain $F_4$ from $F_4\oplus\tilde{a}_1.$ Similarly, node $\tilde{2}$ can obtain $(F_3, F_4).$
    \begin{figure}
    \centering
    \includegraphics[scale=0.45]{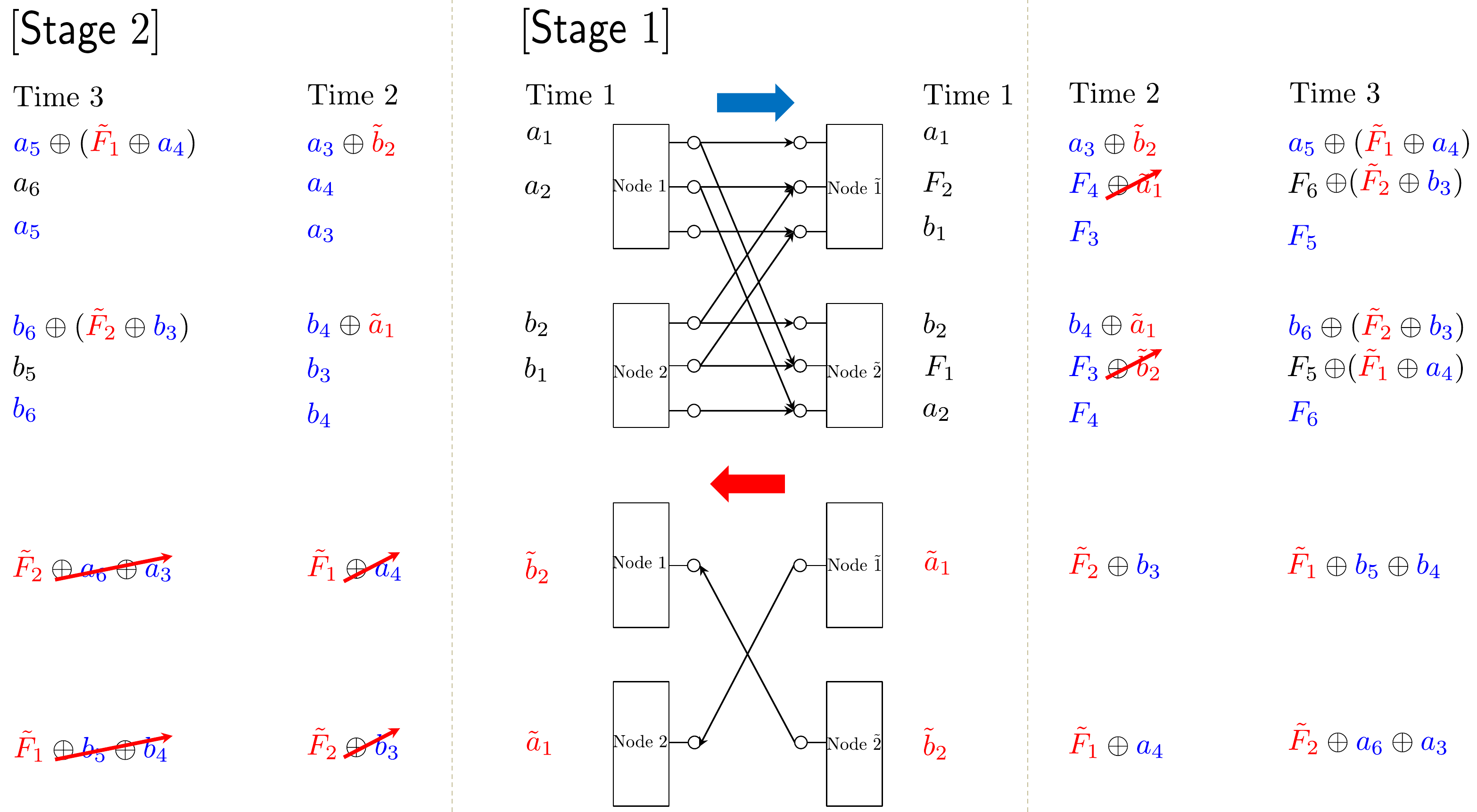}
    \caption{Illustration of achievability for the regime (R2-2) via an example of $(m,n)=(2,3),\ (\tilde{m}, \tilde{n})=(1,0)$. This is an instance in which we have a sufficient amount of resources that enables achieving the perfect-feedback bound in the backward channel: $\tilde{C}_{\sf pf}-\tilde{C}_{\sf no}=\frac{2}{3} \leq 1=n-C_{\sf no}.$ Hence we achieve $(R, \tilde{R})=(C_{\sf no}, \tilde{C}_{\sf pf})=(2, \frac{2}{3}).$}
    \end{figure}

    Using $a_3\oplus \tilde{b}_2$ and $F_3,$ (received at time $2$) and $\tilde{a}_2$ (own symbol), node $\tilde{1}$ now encodes
    $\tilde{F}_2\oplus b_3.$
    Similarly node $\tilde{2}$ encodes $\tilde{F}_1\oplus a_4.$ Delivering all of these signals over the backward channel, node $1$ and $2$ get $\tilde{F}_1\oplus a_4$ and $\tilde{F}_2\oplus b_3$ respectively.
    Then node $1$ can obtain $\tilde{F}_1$ using $a_4$. Similarly, node $2$ can obtain $\tilde{F}_2$ using $b_3$.

    At time $3,$ we repeat the transmission and reception procedure at time $2.$ Node $1$ delivers $(a_5\oplus (\tilde{F}_1\oplus a_4), a_6, a_5).$ Notice that $a_5\oplus (\tilde{F}_1\oplus a_4)$ is just the combination of $a_5$ (fresh symbol) and $\tilde{F}_1\oplus a_4$ (received at time $2$). Node $2$ delivers $(b_6\oplus (\tilde{F}_2\oplus b_3), b_5, b_6).$ Node $\tilde{1}$ and $\tilde{2}$ then get $(a_5\oplus (\tilde{F}_1\oplus a_4), F_6\oplus (\tilde{F}_2\oplus b_3), F_5)$ and $(b_6\oplus (\tilde{F}_2\oplus b_3), F_5\oplus (\tilde{F}_1\oplus a_4), F_6)$ respectively.
    Here node $\tilde{1}$ can obtain $F_6$ from $F_6\oplus (\tilde{F}_2\oplus b_3)$, by canceling out $(\tilde{F}_2\oplus b_3)$ (transmitted signal at time $2$). Hence node $\tilde{1}$ can obtain $(F_5, F_6).$
    Similarly, node $\tilde{2}$ can obtain $(F_5, F_6).$

    Similar to the encoding procedure at time $2,$ the next step for node $\tilde{1}$ is to encode $\tilde{F}_1\oplus b_5\oplus b_4\
    (= a_5\oplus (\tilde{F}_1\oplus a_4)\oplus F_5\oplus F_4).$
    Similarly node $\tilde{2}$ encodes $\tilde{F}_2\oplus a_6\oplus a_3$. Sending all of these signals through the backward channel, node $1$ and $2$ get $\tilde{F}_2\oplus a_6\oplus a_3$ and $\tilde{F}_1\oplus b_5\oplus b_4$ respectively.
    Node $1$ then can decode $\tilde{F}_2$ using $(a_6, a_3)$ (own symbols). Similarly, node $2$ can decode $\tilde{F}_1.$

    As a result, node $\tilde{1}$ and $\tilde{2}$ obtain $F_{\ell}\ (\ell=1,\dots,6)$ during three time slots, thus achieving $R=2\ (=C_{\sf no})$. At the same time, node $1$ and $2$ obtain $(\tilde{F}_{1}, \tilde{F}_2),$ thus achieving $\tilde{R}=\frac{2}{3}\ (=\tilde{C}_{\sf pf})$.

    Similar to the example in Fig. $12,$ we see that feedback and independent forward-message computations do not interfere with each other. Also, of the total number of direct-link levels for forward channel $n,$ the maximum number of resource levels utilized for sending feedback is limited by $n-C_{\sf no}$ levels. In the general case of $(1,0)^{\tilde{m}-2\tilde{n}},$ the maximal feedback gain is $(\tilde{C}_{\sf pf}^{(1,0)}-\tilde{C}_{\sf no}^{(1,0)})\times (\tilde{m}-2\tilde{n})=\frac{2}{3}(\tilde{m}-2\tilde{n}).$
    Also, one can see that the maximal feedback gain for $(2,1)^{\tilde{n}}$ is $(\tilde{C}_{\sf pf}^{(2,1)}-\tilde{C}_{\sf no}^{(2,1)})\times \tilde{n}=\frac{1}{3}\tilde{n}.$ Note that the total feedback gain is $\frac{2}{3}\tilde{m}-\tilde{n}\ (=\tilde{C}_{\sf pf}-\tilde{C}_{\sf no}),$ which does not exceed the limit on the exploitable levels $n-C_{\sf no}$ under the considered regime.

    In other words, we can fully obtain those feedback gains, while achieving non-feedback capacity in the forward channel. Hence the following rate pair is achievable:
    \begin{align*}
    R =& C_{\sf no},\\
    \tilde{R} =& \tilde{R}^{(1)}+\tilde{R}^{(2)} = \frac{2}{3}\tilde{m} = \tilde{C}_{\sf pf},
    \end{align*}
    where $\tilde{R}^{(1)}=C_{\sf pf}^{(1,0)}\times (\tilde{m}-2\tilde{n})=\frac{2}{3}(\tilde{m}-2\tilde{n})$ and $\tilde{R}^{(2)}=C_{\sf pf}^{(2,1)}\times \tilde{n}=\frac{4}{3}\tilde{n}.$

    \textbf{(II) $\tilde{C}_{\sf pf}-\tilde{C}_{\sf no} > n:$} In this case, we do not have a sufficient amount of resources for achieving $\tilde{R}=\tilde{C}_{\sf pf}.$ The maximally achievable backward rate is saturated by $\tilde{C}_{\sf no}+n$ and this occurs when $R=0.$ On the other hand, under the constraint of $R=C_{\sf no},$ what one can achieve for $\tilde{R}$ is $\tilde{C}_{\sf no}+n-C_{\sf no}.$

    \textbf{(III) $n-C_{\sf no}< \tilde{C}_{\sf pf}-\tilde{C}_{\sf no} \leq n:$} This is the case in which we have a sufficient amount of resources for achieving $\tilde{R}=\tilde{C}_{\sf pf},$ but not enough to achieve $R=C_{\sf no}$ at the same time. Hence aiming at $\tilde{R}=\tilde{C}_{\sf pf},$ $R$ is saturated by $n-(\tilde{C}_{\sf pf}-\tilde{C}_{\sf no}).$

\subsection{Proof of (R3) $\alpha \leq \frac{2}{3},\ (\tilde{\alpha} \in [\frac{2}{3}, 1),\ \tilde{\alpha} \in (1, \frac{3}{2}])$}
    \begin{figure}
    \centering
    \includegraphics[scale=0.52]{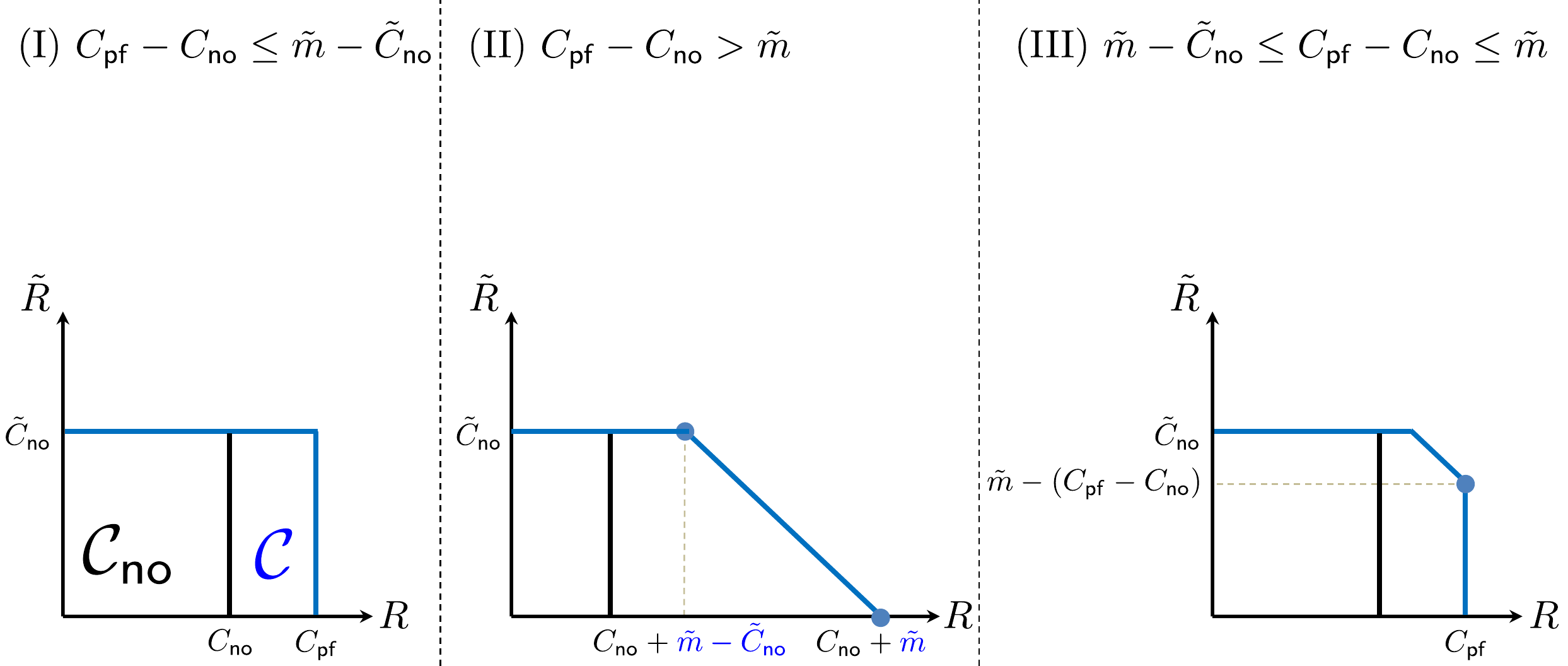}
    \caption{Three types of shapes of an achievable rate region for the regime (R3) $\alpha \leq \frac{2}{3},\ (\tilde{\alpha} \in [\frac{2}{3}, 1),\ \tilde{\alpha} \in (1, \frac{3}{2}]).$}
    \includegraphics[scale=0.45]{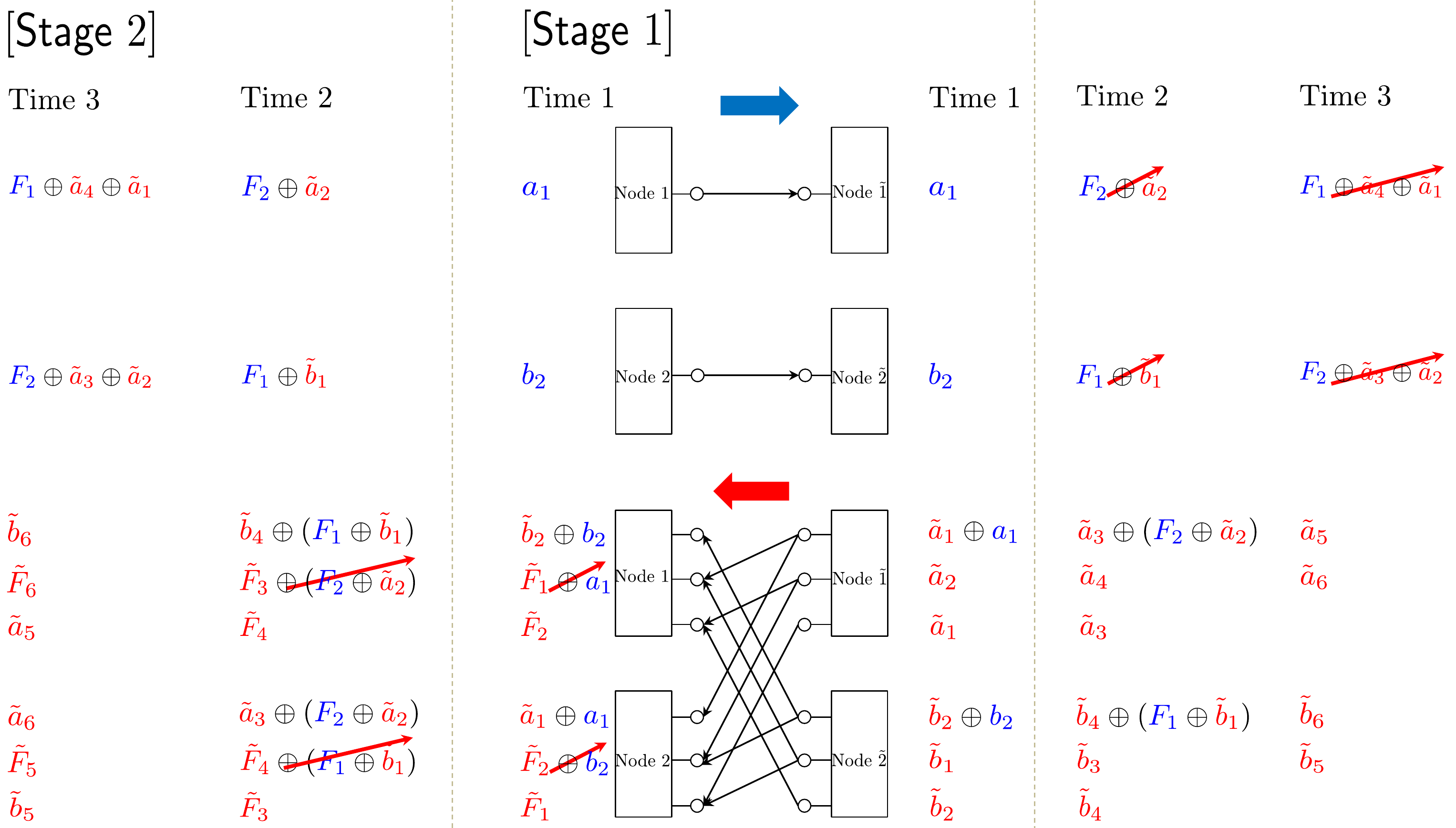}
    \caption{Illustration of achievability for the regime (R3) via an example of $(m,n)=(0,1),\ (\tilde{m}, \tilde{n})=(3,2)$. This is an instance in which we have a sufficient amount of resources that enables achieving the perfect-feedback bound in the forward channel: $C_{\sf pf}-C_{\sf no}=\frac{2}{3} \leq 1=\tilde{m}-\tilde{C}_{\sf no}.$ Hence we achieve $(R, \tilde{R})=(C_{\sf pf}, \tilde{C}_{\sf no})=(\frac{2}{3}, 2).$}
    \end{figure}
    For this regime, the claimed achievable rate region is:
    \begin{align*}
    \{R \leq C_{\sf pf}, \tilde{R} \leq \tilde{C}_{\sf no}, R+\tilde{R} \leq C_{\sf no}+\tilde{m}\}.
    \end{align*}
    This rate region is almost the same as that of (R2). The only difference is that the sum-rate bound now reads $C_{\sf no} + \tilde{m}$ instead of $\tilde{C}_{\sf no} + n.$ Hence, the shape of the region depends now on where $C_{\sf pf}-C_{\sf no}$ lies in between $\tilde{m}-˜C_{\sf no}$ and $\tilde{m}.$ See Fig. $14.$ Here we will describe the proof for the case (I) $C_{\sf pf}-C_{\sf no} \leq \tilde{m}-C_{\sf no},$ in which we have a sufficient amount of resources in achieving $(R, \tilde{R}) = (C_{\sf pf}, ˜C_{\sf no}).$ For the other cases of (II) and (III), one can make the same arguments as those in the regime (R2); hence, we omit them.

    Here what we need to demonstrate are two-folded. First, feedback and independent backward-message transmissions do not interfere with each other. Second, the maximum number of resource levels utilized for sending feedback and independent backward symbols is limited by the total number of cross-link levels: $\tilde{m}.$ The idea for feedback strategy is to employ the scheme illustrated in Fig. $15$ where $(m,n) = (0,1),\ (\tilde{m}, \tilde{n}) = (3,2).$ Note that this is the symmetric counterpart of $(m,n) = (2,3),\ (\tilde{m}, \tilde{n}) = (1,0)$ in Fig. $13$.
    We will show that the above two indeed hold when we use this idea.

    First, in the backward channel, $\tilde{C}_{\sf no} = \tilde{n}$ ($=2,$ which is the second and third) levels are utilized to perform backward-message computation in each time. Through the remaining cross-link level (i.e., the first link), feedback transmissions are performed. Observe that feedback signals (at time $1$ and $2$) are interfered by fresh backward symbols, but it turns out that the interference does not cause any problem. For example, the feedback signal $a_1$ is mixed with $\tilde{a}_1$ (on the top level) and is sent to node $2$ through the first cross-link. As a result, node $2$ receives $\tilde{a}_1\oplus a_1,$ instead of $a_1$ which is desired to be fed back. Nonetheless, exploiting $\tilde{F}_1$ (obtained at time $1$) and $b_1$ (own symbol), node $2$ can encode $\tilde{b}_1\oplus F_1\ (=(\tilde{a}_1\oplus a_1) \oplus\tilde{F}_1\oplus b_1)$ and send it to node $\tilde{2}.$ As a result, node $\tilde{2}$ can obtain $F_1,$ using $\tilde{b}_1$ (own symbol).

    We can now see that feedback and independent backward-message computation do not interfere with each other and the total computation rate is limited by the total number of cross link levels $\tilde{m}$. Since the maximal amount of feedback $C_{\sf pf}-C_{\sf no}$ plus the backward computation rate does not exceed the limit on the exploitable levels $\tilde{m}-\tilde{C}_{\sf no}$ under the considered regime, we can indeed achieve $(R, \tilde{R}) = (C_{\sf pf}, \tilde{C}_{\sf no}).$

\subsection{Proof of (R4) $\alpha \leq \frac{2}{3},\ \tilde{\alpha} \geq \frac{3}{2}$}
    For the considered regime, the claimed achievable rate region reads:
    \begin{align*}
    \{R \leq C_{\sf pf}, \tilde{R} \leq \tilde{C}_{\sf pf}, R+\tilde{R} \leq C_{\sf no}+\tilde{m}, R+\tilde{R} \leq \tilde{C}_{\sf no}+n\}.
    \end{align*}
    Recall in Remark $1$ that $C_{\sf pf}-C_{\sf no}$ indicates the amount of feedback that needs to be sent for achieving $C_{\sf pf}$ and we interpret $\tilde{m}-\tilde{C}_{\sf pf}$ as the remaining resource levels that can potentially be utilized to aid forward computation. Whether or not $C_{\sf pf}-C_{\sf no} \leq \tilde{m}-\tilde{C}_{\sf pf}$ (i.e., we have enough resource levels to achieve $R=C_{\sf pf}$), the shape of the above claimed region is changed. Note that the third inequality in the rate region becomes inactive when $C_{\sf pf}-C_{\sf no} \leq \tilde{m}-\tilde{C}_{\sf pf}.$ Similarly, the last inequality is inactive when $\tilde{C}_{\sf pf}-\tilde{C}_{\sf no}\leq n-C_{\sf pf}.$ Depending on these two conditions, we consider the following four subcases:
    \begin{align*}
    \text{(I)}\ C_{\sf pf}-C_{\sf no} \leq \tilde{m}-\tilde{C}_{\sf pf},\ \tilde{C}_{\sf pf}-\tilde{C}_{\sf no}\leq n-C_{\sf pf};\\
    \text{(II)}\ C_{\sf pf}-C_{\sf no} > \tilde{m}-\tilde{C}_{\sf pf},\ \tilde{C}_{\sf pf}-\tilde{C}_{\sf no}\leq n-C_{\sf pf};\\
    \text{(III)}\ C_{\sf pf}-C_{\sf no} \leq \tilde{m}-\tilde{C}_{\sf pf},\ \tilde{C}_{\sf pf}-\tilde{C}_{\sf no}> n-C_{\sf pf};\\
    \text{(IV)}\ C_{\sf pf}-C_{\sf no} > \tilde{m}-\tilde{C}_{\sf pf},\ \tilde{C}_{\sf pf}-\tilde{C}_{\sf no} > n-C_{\sf pf}.
    \end{align*}
    As mentioned earlier, the idea now is to use the network decomposition. The following achievability w.r.t. the elementary subchannels identified in Theorem $2$ forms the basis of the proof for the regimes of (R4).
    \begin{lemma} The following rates are achievable:\\
    (i) For the pair of $(m,n)=(0,1)$ and $(\tilde{m}, \tilde{n})=(1,0):\ (R, \tilde{R})=(\frac{1}{3}, \frac{2}{3})$ or $(R, \tilde{R})=(\frac{2}{3}, \frac{1}{3}).$ \\
    (ii) For the pair of $(m,n)=(1,2)$ and $(\tilde{m}, \tilde{n})=(1,0):\ (R, \tilde{R})=(\frac{4}{3}, \frac{2}{3})=(C_{\sf pf}, \tilde{C}_{\sf pf}).$\\
    (iii) For the pair of $(m,n)=(2,3)^i$ and $(\tilde{m}, \tilde{n})=(1,0)^j:\ (R, \tilde{R})=(2i, \frac{2}{3}j)=(C_{\sf pf}\cdot i, \tilde{C}_{\sf pf}\cdot j).$ \\
    Here $3i \geq 2j.$\\
    (iv) For the pair of $(m,n)=(1,2)^i$ and $(\tilde{m}, \tilde{n})=(2,1)^j:\ (R, \tilde{R})=(\frac{4}{3}i, \frac{4}{3}j)=(C_{\sf pf}\cdot i, \tilde{C}_{\sf pf}\cdot j).$\\
    Here $2i \geq j$ and $2j \geq i.$ \\
    (v) For the pair of $(m,n)=(2,3)^i$ and $(\tilde{m}, \tilde{n})=(2,1)^j:\ (R, \tilde{R})=(2i, \frac{4}{3}j)=(C_{\sf pf}\cdot i, \tilde{C}_{\sf pf}\cdot j).$\\
    Here $3i \geq j.$
    \end{lemma}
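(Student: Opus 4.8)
The plan is to prove all five parts constructively, since each concerns a fixed elementary network, leaning heavily on the schemes already built in this paper. For a single copy, parts (ii) and (iv) are exactly the achievabilities of Section~\ref{example2}/Appendix~\ref{app:eg} (which attains $(\tfrac43,\tfrac23)$ in the limit $M=(2+\epsilon)^L$, $L\to\infty$) and of Section~\ref{example1} (which attains $(\tfrac43,\tfrac43)$ as $L\to\infty$), while the single-copy versions of (iii) and (v) are precisely the three-slot schemes drawn in Fig.~$13$ and Fig.~$12$, which already give $(2,\tfrac23)$ and $(2,\tfrac43)$. So what remains is (a) the one genuinely new base case, (i), and (b) lifting the four base schemes to an arbitrary number of copies.

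For (i) I would first collapse the channel: with $(m,n)=(0,1)$ and $(\tilde m,\tilde n)=(1,0)$, the shift matrices in $(1)$--$(2)$ annihilate the cross terms, so the network is a directed four-cycle $1\!\to\!\tilde1\!\to\!2\!\to\!\tilde2\!\to\!1$ of interference-free single-bit links, one bit per slot each. On this cycle a short three-slot coded schedule --- route a transmitting node's symbol two hops around so that the node two steps ahead can form and forward the modulo-$2$ sum, interleaving the backward symbols in the opposite direction on the link-slots left idle --- delivers one forward and two backward function computations per three slots, i.e.\ $(\tfrac13,\tfrac23)$, and swapping the two directions gives $(\tfrac23,\tfrac13)$. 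Equivalently, these are the two corner points one obtains by realizing the idealized feedback link of the $(0,1)$ (resp.\ $(1,0)$) perfect-feedback baseline of \cite{Suh13} through the reverse channel, as everywhere else in the paper.

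For the multi-copy statements (ii)--(v) the plan is a resource-accounting argument: the total direct-link capacity of the forward subchannels must cover forward computation plus the feedback traffic demanded by the backward subchannels ($\tilde C_{\sf pf}-\tilde C_{\sf no}$ per backward copy), and symmetrically for the backward side; equating available resources with (computation $+$ feedback) demand on each side reproduces exactly the stated inequalities --- which are, as one should check, nothing but the cut-set bounds $(5)$--$(6)$ evaluated at the target rate pairs and coincide with the conditions of Corollary~$1$. The structural fact that makes the forward side slack for (iii) and (v) is that $C_{\sf no}=C_{\sf pf}=2$ for a $(2,3)$ subchannel, so its forward direction needs no feedback and its $n-C_{\sf no}=1$ spare direct-link level is entirely available to carry backward feedback; as Figs.~$12$--$13$ already verify, the fresh forward symbol colliding with that level is stripped off at the receiver by interference neutralization, so forward computation still runs at $C_{\sf no}=2$. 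Achievability at a general admissible $(i,j)$ then follows by taking the appropriate number of single-copy schemes --- Section~\ref{example1} for $(1,2)$--$(2,1)$, Fig.~$13$ for $(2,3)$--$(1,0)$, Fig.~$12$ for $(2,3)$--$(2,1)$, Appendix~\ref{app:eg} for $(1,2)$--$(1,0)$ --- matching forward feedback-relaying copies to backward copies and filling any remaining copies with the non-feedback scheme of \cite{Suh12}, so that the per-subchannel resource budgets are exactly met.

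The step I expect to be the real work is the per-slot bookkeeping in (b): one must show that when a group of forward copies is matched against a group of backward copies, the superposition of fresh symbols, relayed feedback, and interference-neutralization symbols stays self-consistent in \emph{every} slot of the combined schedule, and --- for part (ii), where a $(1,2)$ forward subchannel needing feedback is matched to a $(1,0)$ backward subchannel whose perfect-feedback pattern does not finish in one shot --- that the \emph{nested}, multi-layer retrospective decoding of Appendix~\ref{app:eg} still terminates so that the $O(2^{L})$ vacant slots wash out as $M\to\infty$. A secondary obstacle is verifying that the resource counting sketched above yields \emph{exactly} the claimed ratios $3i\ge 2j$, $3i\ge j$, $2i\ge j$, $2j\ge i$; since these must agree with $(5)$--$(6)$, that comparison serves as both guide and sanity check.
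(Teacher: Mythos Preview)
Your handling of (ii)--(v) is essentially the paper's own approach: it too refers (ii) back to Section~\ref{example2}/Appendix~\ref{app:eg}, (iv) back to Section~\ref{example1}, and builds the $i$-fold/$j$-fold versions of (iii) and (v) by replicating the three-slot base schemes of Figs.~12--13 and counting spare direct-link levels against feedback demand. So that part of your plan is fine and matches the paper.

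The genuine gap is in (i). A finite three-slot block achieving one forward and two backward function computations on the four-cycle $1\to\tilde1\to2\to\tilde2\to1$ is \emph{impossible}, not just inconvenient. Consider node~$2$: to output $\tilde F_1,\tilde F_2$ it must learn two independent bits about $(\tilde b_1,\tilde b_2)$, and the only route those bits can take is $\tilde2\to1\to\tilde1\to2$, three hops with strict causality $X_{ki}=f_{ki}(S_k^K,\tilde Y_k^{i-1})$. A bit launched at slot~$1$ therefore reaches node~$2$ no earlier than slot~$3$, so over three slots at most \emph{one} bit of $\tilde b$-information can arrive at node~$2$; two backward functions in three slots is out of reach. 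Your ``two hops around'' picture gets $F$ to one receiving node but not both, and the symmetric obstruction blocks the backward side.

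What the paper does instead is exactly the mechanism you flagged as ``the real work'' for (ii): a two-stage scheme of $L$ and $2L+1$ slots with the same retrospective-decoding idea as Section~\ref{example1}, attaining $(R,\tilde R)=\bigl(\tfrac{2L}{3L+1},\tfrac{L}{3L+1}\bigr)\to(\tfrac23,\tfrac13)$ only in the limit $L\to\infty$. So (i) is not a separate ``easy'' base case that sidesteps the asymptotics; it needs the same limiting construction as the others, and your proof should treat it that way.
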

    \begin{proof}
    See Appendix~\ref{lem1}.
    \end{proof}

    \textbf{(I) $C_{\sf pf}-C_{\sf no} \leq \tilde{m}-\tilde{C}_{\sf pf},\ \tilde{C}_{\sf pf}-\tilde{C}_{\sf no}\leq n-C_{\sf pf}:$} The first case is one in which there are enough resources available for enhancing the capacity up to perfect-feedback capacities in both directions. Hence we claim that the following rate region is achievable: $(R,\tilde{R})=(C_{\sf pf}, \tilde{C}_{\sf pf}).$ For efficient use of Theorem $2$ and Lemma $4,$ we divide the regime (R4) into the following four sub-regimes: (R4-1) $\alpha \in [\frac{1}{2}, \frac{2}{3}],\ \tilde{\alpha} \in [\frac{3}{2}, 2];$ (R4-2) $\alpha \in [\frac{1}{2}, \frac{2}{3}],\ \tilde{\alpha} \geq 2;$ (R4-3) $\alpha \in [0, \frac{1}{2}],\ \tilde{\alpha} \in [\frac{3}{2}, 2];$ and (R4-4) $\alpha \in [0, \frac{1}{2}],\ \tilde{\alpha} \geq 2.$

    \text{(R4-1) $\alpha \in [\frac{1}{2}, \frac{2}{3}],\ \tilde{\alpha} \in [\frac{3}{2}, 2]:$}
    Applying Theorem $2$ to this sub-regime, the network decompositions $(54)$ and $(55)$ give:
    \begin{align*}
    &(m,n) \longrightarrow (1,2)^{2n-3m}\times (2,3)^{2m-n}, \\
    &(\tilde{m},\tilde{n}) \longrightarrow (2,1)^{2\tilde{m}-3\tilde{n}}\times (3,2)^{2\tilde{n}-\tilde{m}}.
    \end{align*}
    Here we use the fact that $C_{\sf pf}-C_{\sf no} \leq \tilde{m}-\tilde{C}_{\sf pf}$ is equivalent to $2n-3m \leq \tilde{m}$ and that $\tilde{C}_{\sf pf}-\tilde{C}_{\sf no} \leq n-C_{\sf pf}$ is equivalent to $2\tilde{m}-3\tilde{n} \leq n.$ Without loss of generality, let us assume $2n-3m \leq 2\tilde{m}-3\tilde{n}.$ We now apply Lemma $4$ (iv) for the pair of $(1,2)^{2n-3m}$ and $(2,1)^{\min\{2\tilde{m}-3\tilde{n}, 2(2n-3m)\}}.$ Also we apply Lemma $4$ (v) for the pair of $(2,3)^{2m-n}$ and $(2,1)^{2\tilde{m}-3\tilde{n}-\min\{2\tilde{m}-3\tilde{n}, 2(2n-3m)\}}.$ Note that a tedious calculation guarantees the condition of (v): $3(2m-n) \geq 2\tilde{m}-3\tilde{n}-\min\{2\tilde{m}-3\tilde{n}, 2(2n-3m)\}.$ Lastly we apply the non-feedback schemes for the remaining subchannels $(3,2)^{2\tilde{n}-\tilde{m}}.$ Hence we get:
    \begin{align*}
    R=&\frac{4}{3} \times \left(2n-3m\right) + 2 \times \left(2m-n\right) = \frac{2}{3}n=C_{\sf pf},\\
    \tilde{R}=&\frac{4}{3} \times \min\{2\tilde{m}-3\tilde{n}, 2(2n-3m)\} + \frac{4}{3} \times \left(2\tilde{m}-3\tilde{n}-\min\{2\tilde{m}-3\tilde{n}, 2(2n-3m)\}\right)+2 \times (2\tilde{n}-\tilde{m}) \\
    =& \frac{2}{3}\tilde{m}=\tilde{C}_{\sf pf}.
    \end{align*}

    \text{(R4-2) $\alpha \in [\frac{1}{2}, \frac{2}{3}],\ \tilde{\alpha} \geq 2:$}
    In this sub-regime, the network decompositions $(54)$ and $(56)$ in Theorem $2$ yield:
    \begin{align*}
    &(m,n) \longrightarrow (1,2)^{2n-3m}\times (2,3)^{2m-n}, \\
    &(\tilde{m},\tilde{n}) \longrightarrow (1,0)^{\tilde{m}-2\tilde{n}}\times (2,1)^{\tilde{n}}.
    \end{align*}
    Let $a:=\min\{2n-3m, \tilde{m}-2\tilde{n}\}.$ We first apply Lemma $4$ (ii) for the pair of $(1,2)^{a}$ and $(1,0)^{a}.$ If $a=2n-3m,$ we next apply Lemma $4$ (iii) for the pair of $(2,3)^{2m-n-\frac{1}{3}\tilde{n}}$ and $(1,0)^{\tilde{m}-2\tilde{n}-a}.$ In addition, we apply Lemma $4$ (v) for the pair of $(2,3)^{\frac{1}{3}\tilde{n}}$ and $(2,1)^{\tilde{n}}.$

    Now consider $a=\tilde{m}-2\tilde{n}.$ Then we apply Lemma $4$ (iii) for the pair of $(1,2)^{2n-3m-a}$ and $(2,1)^{\tilde{n}}.$ And we apply the non-feedback schemes for the remaining subchannels $(2,3)^{2m-n}.$ For both cases, we get:
    \begin{align*}
    R=&\frac{4}{3} \times \left(2n-3m\right) + 2 \times \left(2m-n\right) = \frac{2}{3}n=C_{\sf pf},\\
    \tilde{R}=&\frac{2}{3} \times \left(\tilde{m}-2\tilde{n}\right) + \frac{4}{3} \times \tilde{n} = \frac{2}{3}\tilde{m}=\tilde{C}_{\sf pf}.
    \end{align*}

    \text{(R4-3) $\alpha \in [0, \frac{1}{2}],\ \tilde{\alpha} \in [\frac{3}{2}, 2]:$}
    Similar to (R4-2), $(R,\tilde{R})=(C_{\sf pf}, \tilde{C}_{\sf pf})$ holds for the sub-regime. We omit the proof here.

    \text{(R4-4) $\alpha \in [0, \frac{1}{2}], \tilde{\alpha} \geq 2:$}
    Making arguments similar to those in (R4-1), the following sub-regime can be similarly derived, thus showing $(R,\tilde{R})=(C_{\sf pf}, \tilde{C}_{\sf pf}).$ As above, we omit the proof.
    \begin{figure}
    \centering
    \includegraphics[scale=0.52]{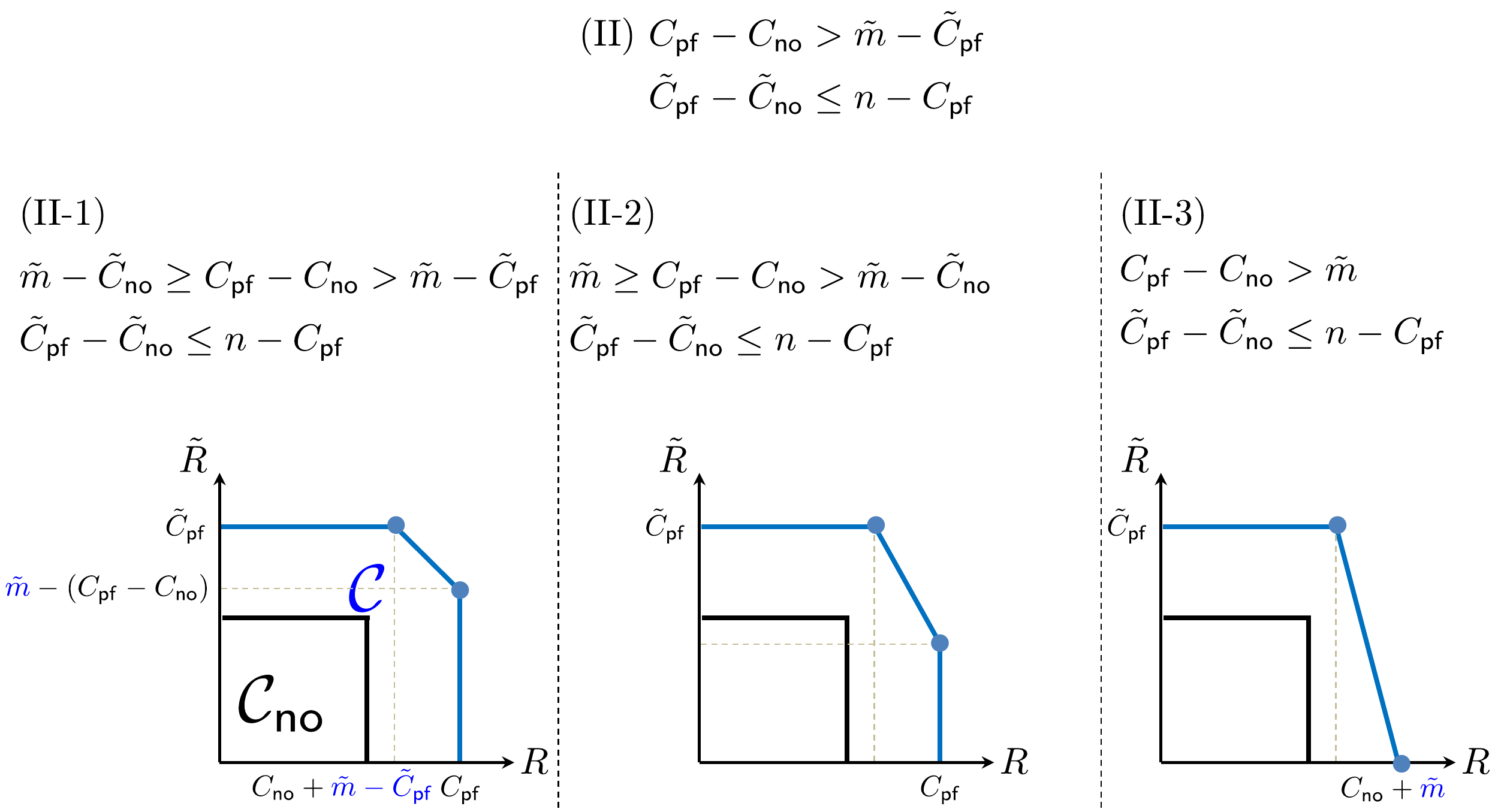}
    \caption{Three types of shapes of an achievable rate region for the regime (R4) $\alpha \leq \frac{2}{3},\ \tilde{\alpha} \geq \frac{3}{2}$ and the case (II) $C_{\sf pf}-C_{\sf no} > \tilde{m}-\tilde{C}_{\sf pf},\ \tilde{C}_{\sf pf}-\tilde{C}_{\sf no} \leq n-C_{\sf pf}.$}
    \end{figure}

    \textbf{(II) $C_{\sf pf}-C_{\sf no} > \tilde{m}-\tilde{C}_{\sf pf},\ \tilde{C}_{\sf pf}-\tilde{C}_{\sf no}\leq n-C_{\sf pf}:$}
    In this case, there are two corner points to achieve. The first corner point is $(R,\tilde{R})=(C_{\sf no}+\tilde{m}-\tilde{C}_{\sf pf}, \tilde{C}_{\sf pf}).$ The second corner point depends on where $C_{\sf pf}-C_{\sf no}$ lies in between $\tilde{m}-\tilde{C}_{\sf no},\ \tilde{m}$ and beyond. See Fig. $16.$ For the cases of (II-1) and (II-2), the corner point reads $(R, \tilde{R})=(R,\tilde{R})=(C_{\sf pf}, \tilde{m}-(C_{\sf pf}-C_{\sf no})),$  while for the case of (II-3), $(R, \tilde{R})=(C_{\sf no}+\tilde{m}, 0).$

    Let us first focus on the first corner point where $(R,\tilde{R})=(C_{\sf no}+\tilde{m}-\tilde{C}_{\sf pf}, \tilde{C}_{\sf pf}).$
    Similar to (I), we consider the four sub-regimes of (R4-1), (R4-2), (R4-3), and (R4-4). We provide details for (R4-1) and (R4-2). The proofs for the regimes (R4-3) and (R4-4) follow similarly.

    \text{(R4-1) $\alpha \in [\frac{1}{2}, \frac{2}{3}],\ \tilde{\alpha} \in [\frac{3}{2}, 2]:$}
    Applying Theorem $2$ in this sub-regime, the network decompositions $(54)$ and $(55)$ give:
    \begin{align*}
    &(m,n) \longrightarrow (1,2)^{2n-3m}\times (2,3)^{2m-n}, \\
    &(\tilde{m},\tilde{n}) \longrightarrow (2,1)^{2\tilde{m}-3\tilde{n}}\times (3,2)^{2\tilde{n}-\tilde{m}}.
    \end{align*}
    Note that it suffices to consider the case where $2(2\tilde{m}-3\tilde{n}) \leq 2n-3m$ since the other case implies that
    \begin{align*}
    2(2\tilde{m}-3\tilde{n}) > 2n-3m = 3(C_{\sf pf}-C_{\sf no}) > 3(\tilde{m}-\tilde{C}_{\sf pf}) = \tilde{m}.
    \end{align*}
    This condition holds when $\tilde{\alpha} >2,$ and therefore contradicts the condition of (G1) in which $\tilde{\alpha} \in [\frac{3}{2}, 2].$

    We now apply Lemma $4$ (iv) for the pair of $(1,2)^{2(2\tilde{m}-3\tilde{n})}$ and $(2,1)^{2\tilde{m}-3\tilde{n}}.$ Also, we apply Lemma $4$ (v) for the pair of $(1,2)^{\tilde{m}-2(2\tilde{m}-3\tilde{n})}$ and $(3,2)^{2\tilde{n}-\tilde{m}}.$ Lastly we apply the non-feedback schemes for the remaining subchannels $(1,2)^{2n-3m-\tilde{m}}$ and $(2,3)^{2m-n}.$ Then we get:
    \begin{align*}
    R=&\frac{4}{3} \times 2\left(2\tilde{m}-3\tilde{n}\right) + \frac{4}{3} \times \left(\tilde{m}-2\left(2\tilde{m}-3\tilde{n}\right)\right)+1 \times \left(2n-3m-\tilde{m}\right) + 2 \times \left(2m-n\right) = m+\frac{1}{3}\tilde{m}\\
    =&C_{\sf no}+\tilde{m}-\tilde{C}_{\sf pf}, \\
    \tilde{R}=&\frac{4}{3} \times \left(2\tilde{m}-3\tilde{n}\right) +2 \times \left(2\tilde{n}-\tilde{m}\right) = \frac{2}{3}\tilde{m}=\tilde{C}_{\sf pf}.
    \end{align*}

    \text{(R4-2) $\alpha \in [\frac{1}{2}, \frac{2}{3}],\ \tilde{\alpha} \geq 2:$} For the backward channel, the network decomposition $(56)$ gives:
    $(\tilde{m},\tilde{n}) \longrightarrow (1,0)^{\tilde{m}-2\tilde{n}}\times (2,1)^{\tilde{n}}.$
    We first apply Lemma $4$ (ii) for the pair of $(1,2)^{\tilde{m}-2\tilde{n}}$ and $(1,0)^{\tilde{m}-2\tilde{n}}.$ Also, we apply Lemma $4$ (iv) for the pair of $(1,2)^{2\tilde{n}}$ and $(2,1)^{\tilde{n}}.$ Lastly we apply the non-feedback schemes for the remaining subchannels $(1,2)^{2n-3m-\tilde{m}}$ and $(2,3)^{2m-n}.$ This yields:
    \begin{align*}
    R=&\frac{4}{3} \times \left(\tilde{m}-2\tilde{n}\right) + \frac{4}{3} \times 2\tilde{n} +1 \times \left(2n-3m-\tilde{m}\right) + 2 \times \left(2m-n\right) = m+\frac{1}{3}\tilde{m}=C_{\sf no}+\tilde{m}-\tilde{C}_{\sf pf}, \\
    \tilde{R}=&\frac{2}{3} \times \left(\tilde{m}-2\tilde{n}\right) +\frac{4}{3} \times \tilde{n} = \frac{2}{3}\tilde{m}=\tilde{C}_{\sf pf}.
    \end{align*}

    We are now ready to prove the second corner point which favors $\tilde{R}.$ Depending on the quantity of $C_{\sf pf}-C_{\sf no},$ we have three subcases.

    \textbf{(II-1) $\tilde{m}-\tilde{C}_{\sf pf} < C_{\sf pf}-C_{\sf no} \leq \tilde{m}-\tilde{C}_{\sf no}:$}

    For the regimes of (R4-1) and (R4-2), we showed that the following rate pair is achievable:
    \begin{align*}
    R=&\frac{4}{3} \times \left(\tilde{m}-2\tilde{n}\right) + \frac{4}{3} \times 2\tilde{n} +1 \times \left(2n-3m-\tilde{m}\right) + 2 \times \left(2m-n\right) = m+\frac{1}{3}\tilde{m}=C_{\sf no}+\tilde{m}-\tilde{C}_{\sf pf}, \\
    \tilde{R}=&\frac{2}{3} \times \left(\tilde{m}-2\tilde{n}\right) +\frac{4}{3} \times \tilde{n} = \frac{2}{3}\tilde{m}=\tilde{C}_{\sf pf}.
    \end{align*}
    It turns out that proving achievability only via the network decomposition is somewhat involved. Now the idea is to tune the scheme which yields the above rate to prove the achievability of the second corner point. We use part of the backward channel for aiding forward computation instead of its own backward traffic. Specifically we utilize $2n-3m-\tilde{m}$ number of top levels in the backward channel once in three time slots in an effort to relay forward-message signal feedback. This naive change incurs one-to-one tradeoff between feedback and independent backward-message computation, thus yielding:
    \begin{align*}
    R=&C_{\sf no}+\tilde{m}-\tilde{C}_{\sf pf}+\frac{1}{3}\left(2n-3m-\tilde{m}\right)=C_{\sf pf}, \\
    \tilde{R}=&\tilde{C}_{\sf pf}-\frac{1}{3}\left(2n-3m-\tilde{m}\right)=\tilde{m}-(C_{\sf pf}-C_{\sf no}).
    \end{align*}

    \textbf{(II-2) $\tilde{m}-\tilde{C}_{\sf no} < C_{\sf pf}-C_{\sf no} \leq \tilde{m}:$}

    For the regimes of (R4-1) and (R4-2), we showed that the following rate pair is achievable:
    \begin{align*}
    R=&\frac{4}{3} \times 2\left(2\tilde{m}-3\tilde{n}\right) + \frac{4}{3} \times \left(\tilde{m}-2\left(2\tilde{m}-3\tilde{n}\right)\right)+1 \times \left(2n-3m-\tilde{m}\right) + 2 \times \left(2m-n\right) = m+\frac{1}{3}\tilde{m}\\
    =&C_{\sf no}+\tilde{m}-\tilde{C}_{\sf pf}, \\
    \tilde{R}=&\frac{4}{3} \times \left(2\tilde{m}-3\tilde{n}\right) +2 \times \left(2\tilde{n}-\tilde{m}\right) = \frac{2}{3}\tilde{m}=\tilde{C}_{\sf pf}.
    \end{align*}
    Now the idea is to perturb the scheme to prove achievability for the second corner point that we intend to achieve. We use part of the backward channel for aiding forward transmission instead of its own traffic. Specifically we utilize $2n-3m-\tilde{m}$ number of top levels in the backward channel once in three time slots in an effort to relay forward-message signal feedback. This naive change incurs one-to-one tradeoff between feedback and independent backward-message computation, thus yielding:
    \begin{align*}
    R=&C_{\sf no}+\tilde{m}-\tilde{C}_{\sf pf}+\frac{1}{3}\left(2n-3m-\tilde{m}\right)=C_{\sf pf}, \\
    \tilde{R}=&\tilde{C}_{\sf pf}-\frac{1}{3}\left(2n-3m-\tilde{m}\right)=\tilde{m}-(C_{\sf pf}-C_{\sf no}).
    \end{align*}

    \textbf{(II-3) $C_{\sf pf}-C_{\sf no}> \tilde{m}:$}
    If we sacrifice all of the $\tilde{m}$ direct links in the backward channel only for the purpose of assisting the forward computation, one can readily see that $(R, \tilde{R})=(C_{\sf no}+\tilde{m}, 0)$ is achievable.

    \textbf{(III) $C_{\sf pf}-C_{\sf no} \leq \tilde{m}-\tilde{C}_{\sf pf},\ \tilde{C}_{\sf pf}-\tilde{C}_{\sf no} > n-C_{\sf pf}:$}
    Similarly, this case requires the proof of two corner points. The first corner point is $(R,\tilde{R})=(C_{\sf pf}, \tilde{C}_{\sf no}+n-C_{\sf pf}).$ The second corner point is depends on where $\tilde{C}_{\sf pf}-\tilde{C}_{\sf no}$ lies in between $n-C_{\sf no}, n$ and beyond. See Fig. $17.$ As this proof is similar to that in the previous case, it is omitted here.
    \begin{figure}
    \centering
    \includegraphics[scale=0.52]{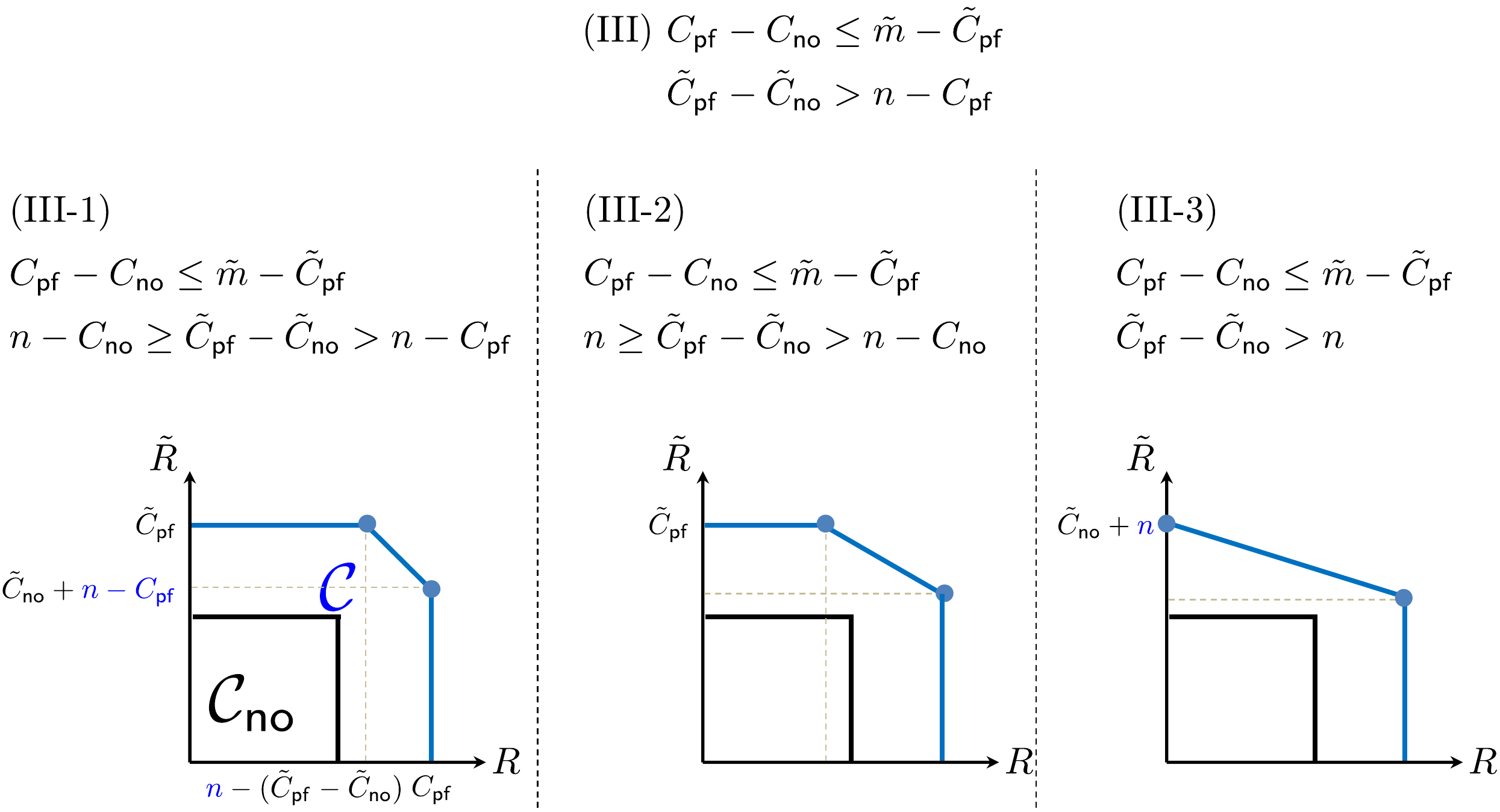}
    \caption{Three types of shapes of an achievable rate region for the regime (R4) $\alpha \leq \frac{2}{3},\ \tilde{\alpha} \geq \frac{3}{2}$ and the case (III) $C_{\sf pf}-C_{\sf no} \leq \tilde{m}-\tilde{C}_{\sf pf},\ \tilde{C}_{\sf pf}-\tilde{C}_{\sf no} > n-C_{\sf pf}.$}
    \end{figure}

    \textbf{(IV) $C_{\sf pf}-C_{\sf no} > \tilde{m}-\tilde{C}_{\sf pf},\ \tilde{C}_{\sf pf}-\tilde{C}_{\sf no} > n-C_{\sf pf}:$}
    For the following case, it suffices to consider only (R4-4) $\alpha \in [0, \frac{1}{2}], \tilde{\alpha} \geq 2$ given that
    \begin{align*}
    2n-3m =& 3(C_{\sf pf}-C_{\sf no}) > 3(\tilde{m}-\tilde{C}_{\sf pf}) = \tilde{m} \\
    \geq& \tilde{m}-\frac{3}{2}\tilde{n} \stackrel{(a)}{>} \frac{1}{2}n,
    \end{align*}
    where $(a)$ follows because we consider $2\tilde{m}-3\tilde{n} > n$ (or equivalently, $\tilde{C}_{\sf pf}-\tilde{C}_{\sf no} > n-C_{\sf pf}$).
    With the first and the last formulae, this clearly implies that $\alpha < \frac{1}{2}.$ Similarly,
    \begin{align*}
    2\tilde{m}-3\tilde{n} =& 3(\tilde{C}_{\sf pf}-\tilde{C}_{\sf no}) > 3(n-C_{\sf pf}) = n \\
    \geq& n-\frac{3}{2}m \stackrel{(b)}{>} \frac{1}{2}\tilde{m},
    \end{align*}
    where $(b)$ follows as we consider $2n-3m > \tilde{m}.$
    This implies that $\tilde{\alpha} > 2.$
    For the regime of (R4-4), the network decomposition $(53)$ and $(56)$ give:
    \begin{align*}
    &(m,n) \longrightarrow (0,1)^{n-2m}\times (1,2)^{m}, \\
    &(\tilde{m},\tilde{n}) \longrightarrow (1,0)^{\tilde{m}-2\tilde{n}}\times (2,1)^{\tilde{n}}.
    \end{align*}

    Making arguments similar to those in (II) and (III), the first corner point (as well as the second corner point) depends on where $C_{\sf pf}-C_{\sf no}$ (and $\tilde{C}_{\sf pf}-\tilde{C}_{\sf no}$) lies in between $\tilde{m}-\tilde{C}_{\sf no}$ (and $n-C_{\sf no}$); $\tilde{m}$ (and $n$ respectively) and beyond. As each condition takes three types, there can be nine cases in total. However, of the nine cases, the case in which $C_{\sf pf}-C_{\sf no} >\tilde{m},\ \tilde{C}_{\sf pf}-\tilde{C}_{\sf no} > n$ implies that $(2n-3m)+(2\tilde{m}-3\tilde{n}) > 3\tilde{m}+3n.$ This is equivalent to $0>-n-3m > \tilde{m}+3\tilde{n}>0,$ which encounters contradiction. Therefore, we can conclude that there are eight cases in total. See Fig. $18.$ Of the eight cases, it is found that this case takes two types of corner point: Either $(R,\tilde{R})=(C_{\sf no}+\tilde{m}-\tilde{C}_{\sf pf}, \tilde{C}_{\sf pf})$ or $(R,\tilde{R})=(C_{\sf pf}, \tilde{C}_{\sf no}+n-C_{\sf pf}).$ If the first corner point is $(C_{\sf no}+\tilde{m}-\tilde{C}_{\sf pf}, \tilde{C}_{\sf pf}),$ the second corner point corresponds to that in (II); otherwise the corner point corresponds to that in (III). As we already described the idea of showing the second corner point explicitly, we omit details, though here we demonstrate that there are two types of first corner points.
    \begin{figure}
    \centering
    \includegraphics[scale=0.52]{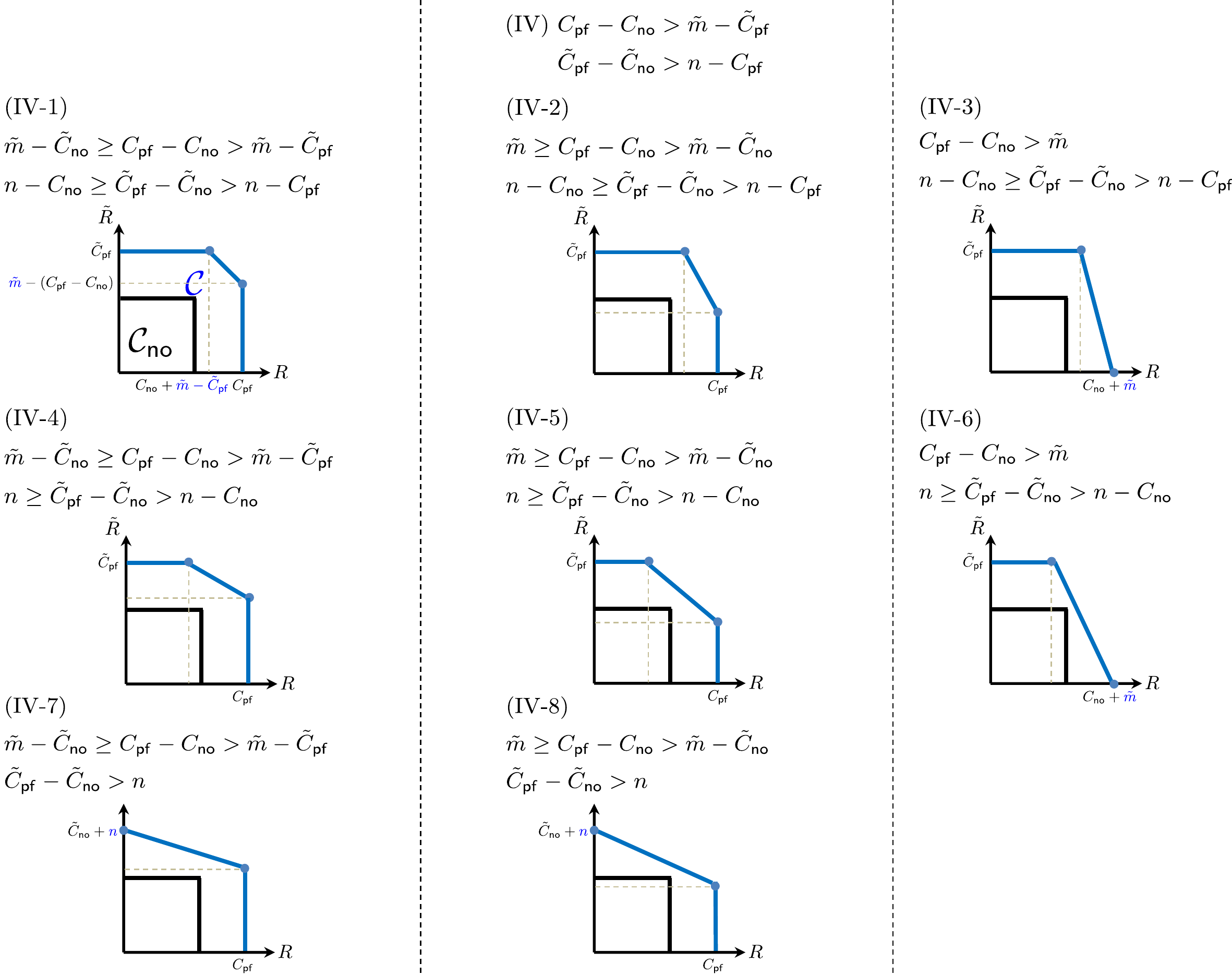}
    \caption{Eight types of shapes of an achievable rate region for the regime (R4) and the case (IV) $C_{\sf pf}-C_{\sf no} > \tilde{m}-\tilde{C}_{\sf pf},\ \tilde{C}_{\sf pf}-\tilde{C}_{\sf no} > n-C_{\sf pf}.$}
    \end{figure}

    Depending on $n-2m \leq \tilde{n}$ and $\tilde{m}-2\tilde{n} \leq m,$ we consider the following four subcases:
    $n-2m \leq \tilde{n},\ \tilde{m}-2\tilde{n} \leq m;$ $n-2m > \tilde{n},\ \tilde{m}-2\tilde{n} > m;$ $n-2m > \tilde{n},\ \tilde{m}-2\tilde{n} \leq m;$ and $n-2m \leq \tilde{n},\ \tilde{m}-2\tilde{n} > m.$ Of the four sub-cases, we can rule out for the third and fourth sub-cases. For example, the condition of the third sub-case implies that $2\tilde{m}-3\tilde{n} \leq n,$ which contradicts the condition of $\tilde{C}_{\sf pf}-\tilde{C}_{\sf no} > n-C_{\sf pf}.$ Similarly, one can show that the condition of fourth sub-case violates the condition of $C_{\sf pf}-C_{\sf no} >\tilde{m}-\tilde{C}_{\sf pf}.$

    First, consider the case where $n-2m \leq \tilde{n},\ \tilde{m}-2\tilde{n} \leq m.$
    We initially apply Lemma $4$ (ii) for the pair of $(1,2)^{\tilde{m}-2\tilde{n}}$ and $(1,0)^{\tilde{m}-2\tilde{n}}$ and apply a symmetric version of Lemma $4$ (ii) for the pair of $(0,1)^{n-2m}$ and $(2,1)^{n-2m}.$ Now let $a := \min\{m-(\tilde{m}-2\tilde{n}), \tilde{n}-(n-2m)\}.$ If $a = m-(\tilde{m}-2\tilde{n}),$ we apply Lemma $4$ (iv) for the pair of $(1,2)^{m-(\tilde{m}-2\tilde{n})}$ and $(2,1)^{2(m-(\tilde{m}-2\tilde{n}))}.$ For the remaining subchannels $(2,1)^{2\tilde{m}-3\tilde{n}-n},$ we apply the non-feedback schemes. Then we get:
    \begin{align*}
    R=&\frac{4}{3}\times \left(\tilde{m}-2\tilde{n}\right)+\frac{2}{3}\times \left(n-2m\right) +\frac{4}{3}\times \left(m-(\tilde{m}-2\tilde{n})\right) = C_{\sf pf},\\
    \tilde{R}=&\frac{2}{3}\times \left(\tilde{m}-2\tilde{n}\right)+\frac{4}{3}\times \left(n-2m\right)+\frac{4}{3}\times 2\left(m-(\tilde{m}-2\tilde{n})\right)+1\times \left(2\tilde{m}-3\tilde{n}-n\right) = \tilde{n}+\frac{1}{3}n \\
    =&\tilde{C}_{\sf no}+n-C_{\sf pf}.
    \end{align*}
    For the case where $a = \tilde{n}-(n-2m),$ a similar approach can yield $(R, \tilde{R}) = (C_{\sf no}+\tilde{m}-\tilde{C}_{\sf pf}, \tilde{C}_{\sf pf}).$

    Next, consider the case where $n-2m > \tilde{n},\ \tilde{m}-2\tilde{n} > m.$
    We initially apply Lemma $4$ (ii) for the pair of $(1,2)^{m}$ and $(1,0)^{m}$ and apply a symmetric version of Lemma $4$ (ii) for the pair of $(0,1)^{\tilde{n}}$ and $(2,1)^{\tilde{n}}.$ For the remaining $(0,1)^{n-2m-\tilde{n}}$ and $(1,0)^{\tilde{m}-2\tilde{n}-m},$ we apply Lemma $4$ (i). Let $a := \min\{n-2m-\tilde{n}, \tilde{m}-2\tilde{n}-m\}.$ If $ a= n-2m-\tilde{n},$
    \begin{align*}
    R=&\frac{4}{3}\times m+\frac{2}{3}\times \tilde{n}+\frac{2}{3}\times \min\{n-2m-\tilde{n}, \tilde{m}-2\tilde{n}-m\} = C_{\sf pf},\\
    \tilde{R}=&\frac{2}{3}\times m+\frac{4}{3}\times \tilde{n}+\frac{1}{3}\times \min\{n-2m-\tilde{n}, \tilde{m}-2\tilde{n}-m\} = \tilde{n}+\frac{1}{3}n
    =\tilde{C}_{\sf no}+n-C_{\sf pf}.
    \end{align*}
    For the case where $\min\{n-2m-\tilde{n}, \tilde{m}-2\tilde{n}-m\} = \tilde{m}-2\tilde{n}-m,$ a similar approach can yield $(R, \tilde{R}) = (C_{\sf no}+\tilde{m}-\tilde{C}_{\sf pf}, \tilde{C}_{\sf pf}).$

    This completes the proof.

\section{Proof of Lemma $4$}\label{lem1}
	We now provide the proof of Lemma $4.$ Note that we demonstrated the case of (ii) in Section~\ref{example2}. For the case of (iv), a slight modification of the scheme in~\ref{example1} allows us to achieve the desired rate pair. Hence we will provide the achievabilities for (i), (iii), and (v).

(i) $(m,n)=(0,1),\ (\tilde{m},\tilde{n})=(1,0):$
    Our scheme consists of two stages. The first stage consists of $L$ time slots; and the second stage consists of $2L+1$ time slots. We claim that the following rate pair is achievable: $(R, \tilde{R})= (\frac{2L}{3L+1}, \frac{L}{3L+1}).$ As $L \rightarrow \infty,$ we obtain the desired result: $(R,\tilde{R})\rightarrow (\frac{2}{3}, \frac{1}{3}).$ The other desired rate pair $(\frac{1}{3}, \frac{2}{3})$ is similarly achievable by symmetry.
    \begin{figure}
    \centering
    \includegraphics[scale=0.34]{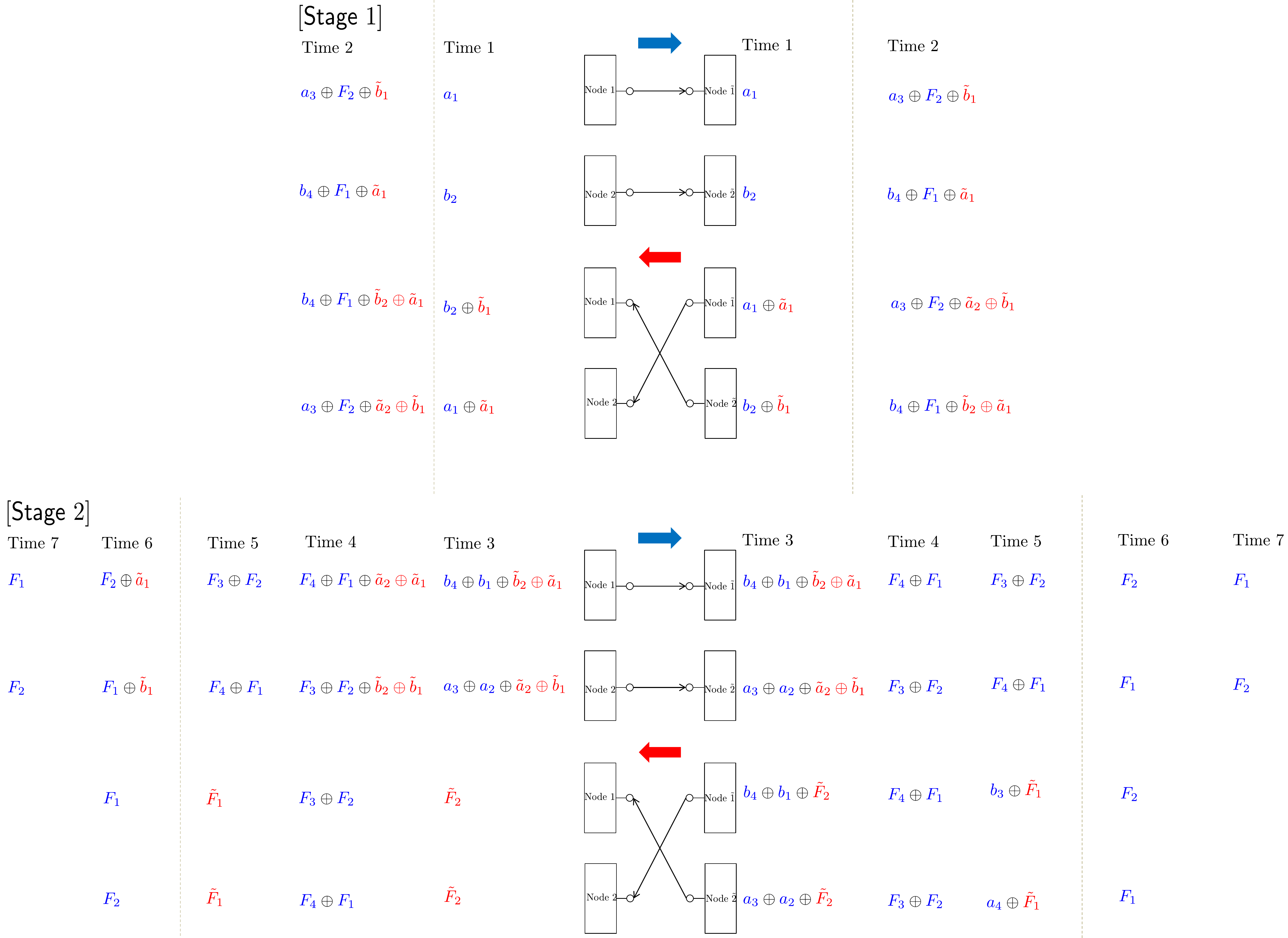}
    \caption{An achievable scheme for $(m,n)=(0,1),\ (\tilde{m}, \tilde{n})=(1,0),$ and $L=2.$}
    \end{figure}

    For ease of understanding, Fig. $19$ illustrates a simple case of $L = 2,$ where we demonstrate that $(\frac{4}{7}, \frac{2}{7})$ is achievable. As in Section~\ref{example1}, applying a similar extension can yield the desired rate pair.

    \textbf{Stage} $\mathbf{1}$: In this stage, each node superimposes fresh symbols and feedback symbols. Details are as follows.

    At time $1,$ node $1$ sends $a_1;$ and node $2$ sends $b_2.$ Node $\tilde{1}$ and $\tilde{2}$ then receive $a_1$ and $b_2$ respectively. Through the backward channel, node $\tilde{1}$ and $\tilde{2}$ deliver $a_1\oplus\tilde{a}_1$ and $b_2\oplus\tilde{b}_1$ respectively. Then node $1$ and $2$ receive $b_2\oplus\tilde{b}_1$ and $a_1\oplus\tilde{a}_1.$

    With the received signals, node $1$ and $2$ encode $a_3\oplus F_2\oplus \tilde{b}_1$ and $b_4\oplus F_1\oplus \tilde{a}_1$ respectively, using their own symbols $(a_3, a_2)$ and $(b_4, b_1).$ Transmitting these signals then allows node $\tilde{1}$ and $\tilde{2}$ to obtain $a_3\oplus F_2\oplus \tilde{b}_1$ and $b_4\oplus F_1\oplus \tilde{a}_1.$ Now node $\tilde{1}$ and $\tilde{2}$ add their own symbol $\tilde{a}_2$ and $\tilde{b}_2$ to encode $a_3\oplus F_2\oplus \tilde{a}_2\oplus\tilde{b}_1$ and $b_4\oplus F_1\oplus\tilde{b}_2\oplus \tilde{a}_1$ respectively. Sending these back through the backward channel allows node $1$ and $2$ to receive $b_4\oplus F_1\oplus\tilde{b}_2\oplus \tilde{a}_1$ and $a_3\oplus F_2\oplus \tilde{a}_2\oplus\tilde{b}_1.$ Note that for each time, node $1$ and $2$ introduce two fresh symbols with different indices, while node $\tilde{1}$ and $\tilde{2}$ introduce two fresh symbols with the same index. This pattern applies when we consider the case of an arbitrary $L.$

    \textbf{Stage} $\mathbf{2}$: The transmission strategy in the second stage is to accomplish the computation of the desired functions not yet obtained by each node. Similar to Section~\ref{example1}, we utilize the retrospective decoding strategy. Through successive refinement in a retrospective manner, we can resolve the issue mentioned above. The strategy is as follows:
    With the received signal at time $2,$ node $1$ and $2$ encode $b_4\oplus b_1\oplus\tilde{b}_2\oplus \tilde{a}_1$ and $a_3\oplus a_2\oplus \tilde{a}_2\oplus\tilde{b}_1$ using $a_1$ and $b_2$ respectively. Sending these signals at time $3,$ node $\tilde{1}$ and $\tilde{2}$ get $b_4\oplus b_1\oplus\tilde{b}_2\oplus \tilde{a}_1$ and $a_3\oplus a_2\oplus \tilde{a}_2\oplus\tilde{b}_1.$ Now node $\tilde{1}$ and $\tilde{2}$ encode $b_4\oplus b_1\oplus\tilde{F}_2$ and $a_3\oplus a_2\oplus \tilde{F}_2$ using $(\tilde{a}_1, \tilde{a}_2)$ and $(\tilde{b}_1, \tilde{b}_2)$ respectively. Delivering these signals through the backward channel, node $1$ and $2$ get $a_3\oplus a_2\oplus \tilde{F}_2$ and $b_4\oplus b_1\oplus\tilde{F}_2$ respectively. It is clear that by exploiting $(a_3, a_2)$ and $(b_4, b_1)$ (own symbols), node $1$ and $2$ can decode $\tilde{F}_2.$

    With the newly decoded $\tilde{F}_2,$ own symbol, and the signal received at time $2,$ node $1$ and $2$ encode $F_4\oplus F_1\oplus \tilde{a}_2\oplus\tilde{a}_1$ and $F_3\oplus F_2\oplus \tilde{b}_2\oplus\tilde{b}_1$ at time $4.$ Forwarding these, node $\tilde{1}$ and $\tilde{2}$ obtain $F_4\oplus F_1$ and $F_3\oplus F_2,$ by canceling out their own symbols $\tilde{a}_2\oplus\tilde{a}_1$ and $\tilde{b}_2\oplus\tilde{b}_1.$ Sending these sum of functions through the backward channel allows node $1$ and $2$ to obtain $F_3\oplus F_2$ and $F_4\oplus F_1.$

    At time $5,$ node $1$ and $2$ send $F_3\oplus F_2$ and $F_4\oplus F_1.$ Then node $\tilde{1}$ and $\tilde{2}$ receive $F_3\oplus F_2$ and $F_4\oplus F_1$ respectively. Now combining the received signal at time $5$ and $2,$ and own symbol, node $\tilde{1}$ and $\tilde{2}$ can encode $b_3\oplus\tilde{F}_1$ and $a_4\oplus\tilde{F}_1$ respectively. Delivering these signals through the backward channel, node $1$ and $2$ get $a_4\oplus \tilde{F}_1$ and $b_3\oplus \tilde{F}_1$ respectively. It should be noted that by exploiting $a_4$ and $b_3,$ node $1$ and $2$ can decode $\tilde{F}_1.$

    At time $6,$ node $1$ and $2$ exploit the newly decoded $\tilde{F}_1,$ own symbol, and the received signal at time $1,$ thus encoding $F_2\oplus\tilde{a}_1$ and $F_1\oplus\tilde{b}_1.$ Sending these through the forward channel allows node $\tilde{1}$ and $\tilde{2}$ to decode $F_2$ and $F_1$ respectively. Note that from $F_2$ and $F_3\oplus F_2,$ node $1$ can decode $F_3.$ Similarly, node $2$ can decode $F_4.$ Through the backward channel, node $\tilde{1}$ and $\tilde{2}$ deliver $F_2$ and $F_1.$ Then node $1$ and $2$ get $F_1$ and $F_2.$

    At time $7,$ node $1$ and $2$ transmit the received signal $F_1$ and $F_2.$ Hence node $\tilde{1}$ and $\tilde{2}$ obtain $F_1$ and $F_2.$ Note that from $F_1$ and $F_4\oplus F_1,$ node $1$ can now decode $F_4.$ Similarly, node $2$ can decode $F_3.$

    Consequently, during $7$ time slots, node $\tilde{1}$ and $\tilde{2}$ obtain four modulo-$2$ sum functions w.r.t. forward symbols, while node $1$ and $2$ obtain two modulo-$2$ sum functions w.r.t. backward symbols. This gives $(R, \tilde{R})=(\frac{4}{7}, \frac{2}{7}).$ One can easily extend this to an arbitrary $L$ to show that $(R, \tilde{R})= (\frac{2L}{3L+1}, \frac{L}{3L+1})$ is achievable. Note that as $L \rightarrow \infty,$ we get the desired rate pair of $(\frac{2}{3}, \frac{1}{3}).$ This completes the proof of (i).

(iii) $(m,n)=(2,3)^i,\ (\tilde{m},\tilde{n})=(1,0)^j:$
    We see in Fig. $13$ that $(R, \tilde{R}) = (2, \frac{2}{3})$ is achievable for the case of $(m,n)=(2,3),\ (\tilde{m}, \tilde{m}) = (1,0).$
    Now consider the case of $(m,n)=(2,3)^2,\ (\tilde{m},\tilde{n})=(1,0)^3.$ For the second $(1,0)$ backward channel, we repeat the above procedure w.r.t. new backward symbols. Similar to the above feedback strategy, feedback transmissions can be performed at time $2$ and $3$ in the second $(2,3)$ forward channel. It is important to note that for the last $(1,0)$ backward channel, we can repeat the above procedure w.r.t. new backward symbols, as the feedback strategy can be employed at time $1$ in the first and second $(1,2)$ forward channels. And $(m,n)=(2,3)^{i},\ (\tilde{m},\tilde{n})=(2,1)^{\frac{3}{2}i}$ is a simple multiplication with $\frac{1}{2}i.$ Assume that $\frac{1}{2}i$ is an integer number. Note that as long as $\frac{3}{2}i\geq j$ (i.e., $3i\geq 2j$), the claimed rate pair is still achievable.
	
(v) $(m,n)=(2,3)^i,\ (\tilde{m},\tilde{n})=(2,1)^j:$
    We see in Fig. $12$ that $(R, \tilde{R})=(2, \frac{4}{3})$ is achievable for the case of $(m,n)=(2,3),\ (\tilde{m},\tilde{n})=(2,1).$
    Consider the case of $(m,n)=(2,3),\ (\tilde{m},\tilde{n})=(2,1)^3.$ For the remaining two $(2,1)$ backward channels, we repeat the above procedure w.r.t. new backward symbols. Note that feedback transmissions can be performed at time $1$ and $2.$ This gives $(R, \tilde{R})=(2, \frac{4}{3}\times 3)=(2, 3).$ In this case, it suffices to show the scheme for $(m,n)=(2,3),\ (\tilde{m},\tilde{n})=(2,1)^3.$ Note that $(m,n)=(2,3)^i,\ (\tilde{m},\tilde{n})=(2,1)^{3i}$ is a simple multiplication with $i.$ Note that as long as $3i\geq j,$ the claimed rate pair is still achievable. This completes the proof.
\bibliographystyle{IEEEtran}
\bibliography{IEEEabrv,Two-way_Function_Computation}
\end{document}